 \renewenvironment{thebibliography}[1]{%
   \begin{oldthebibliography}{#1}%
     \setlength{\parskip}{2pt}%
     \setlength{\itemsep}{2pt}%
     \fontsize{10}{11}
     \selectfont
 }%
 {%
   \end{oldthebibliography}%
 }
\definecolor{shade}{HTML}{F0F0F0}%light blue shade
\newcommand\shademath[1]
\newcolumntype{L}[1]{>{$}p{#1}<{$}}
\newcolumntype{C}[1]{>{\centering$}p{#1}<{$}}
\newcolumntype{R}[1]{>{\raggedleft$}p{#1}<{$}}
\newcommand\maketab[2]
\newenvironment{#1}{\begin{quote}\noindent\begin{tabular}{#2}}{\end{tabular}\end{quote}}
    \newenvironment{#1noquote}{\noindent\begin{tabular}{#2}}{\end{tabular}}
\newtheorem{thrm}{Theorem}[section]
\newtheorem{prop}[thrm]{Proposition}
\newtheorem{lemm}[thrm]{Lemma}
\newtheorem{corr}[thrm]{Corollary}
\newtheorem{_nttn}[thrm]{Notation}
\newenvironment{nttn}{\begin{_nttn}\normalfont}{\end{_nttn}}
\newtheorem{_defn}[thrm]{Definition}
\newenvironment{defn}{\begin{_defn}\normalfont}{\end{_defn}}
\newtheorem{_xmpl}[thrm]{Example}
\newenvironment{xmpl}{\begin{_xmpl}\normalfont}{\end{_xmpl}}
\newtheorem{_rmrk}[thrm]{Remark}
\newenvironment{rmrk}{\begin{_rmrk}\normalfont}{\end{_rmrk}}
\newcommand\hd{\f{hd}}
\newcommand\tl{\f{tl}}
\newcommand\letbox[3]{\f{let\,}#1{=}#2\f{\ in\,}#3}
\newcommand\atoms{{\mathbb A}}
\newcommand\unknowns{{\mathbb X}}
\newcommand\tya{A}
\newcommand\tyb{B}
\newcommand\tyc{C}
\newcommand{\rtm}{r}
\newcommand{\stm}{s}
\newcommand{\fa}{\f{fa}}
\newcommand\bto{\mathrel{\to_\beta}}
\newcommand\cent{\vdash}
\newcommand{\id}{{id}}
\newcommand\fv{\f{fu}}
\newcommand\Forall[1]{\forall #1.}
\newcommand\limp{\Rightarrow}
\newcommand{\fto}{{\to}}
\newcommand\dom{{\f{dom}}}
\newcommand\nat{\mathbb N}
\newlength{\mylength}
\newenvironment{frameqn}%
{\setlength{\fboxsep}{6pt}
\setlength{\mylength}{\linewidth}%
\addtolength{\mylength}{-2\fboxsep}%
\addtolength{\mylength}{-2\fboxrule}%
\Sbox
\minipage{\mylength}%
\setlength{\abovedisplayskip}{0pt}%
\setlength{\belowdisplayskip}{0pt}%
$$}%
{$$\endminipage\endSbox
\[\fbox{\TheSbox}\]}
\newenvironment{frametxt}%
{\setlength{\fboxsep}{5pt}
\setlength{\mylength}{\linewidth}%
\addtolength{\mylength}{-2\fboxsep}%
\addtolength{\mylength}{-2\fboxrule}%
\Sbox
\minipage{\mylength}%
\setlength{\abovedisplayskip}{0pt}%
\setlength{\belowdisplayskip}{0pt}%
}%
{\endminipage\endSbox
\[\fbox{\TheSbox}\]}
\newcommand\at{\MVAt}  
\newcommand{\act}{{\cdot}}
\newcommand{\deffont}[1]{\textbf{#1}}
\newcommand{\f}[1]{\ensuremath{\mathit{#1}}}
\newcommand{\lam}[1]{\lambda{#1}.}
\newcommand{\rulefont}[1]{\ensuremath{(\mathbf{#1})}}
\newcommand{\ssm}{:=}
\newcommand{\tf}[1]{\mathsf{#1}}
\newcommand{\BBox}{{\boxdot}}
\newcommand{\denot}[3]{\llbracket #3 \rrbracket_{\scalebox{.8}{$#2$}}^\den{#1}} %^{\hspace{-.1ex}\scalebox{.4}{$#1$}}}
\newcommand{\hdenot}[2]{{\denot{}{#1}{#2}}}
\newcommand\den[1]{{\hspace{.00ex}\scalebox{.55}{$#1$}}}
\newcommand\hden{\den{\interp H}} %{{\scalebox{.4}{$\mathcal I$}}}
\newcommand\interp[1]{\ensuremath{\mathscr #1}}
\author{\href{http://www.gabbay.org.uk}{Murdoch J. Gabbay} and \href{http://software.imdea.org/~aleks/}{Aleksandar Nanevski}}
\title{Denotation of syntax and metaprogramming in contextual modal type theory (CMTT)}
\date{}
\begin{document}

\begin{abstract}
The modal logic S4 can be used via a Curry-Howard style correspondence to obtain a $\lambda$-calculus. 
Modal (boxed) types are intuitively interpreted as `closed syntax of the calculus'.
This $\lambda$-calculus is called modal type theory --- this is the basic case of a more general \emph{contextual} modal type theory, or CMTT. 

CMTT has never been given a denotational semantics in which modal types \emph{are} given denotation as closed syntax.
We show how this can indeed be done, with a twist.
We also use the denotation to prove some properties of the system.
\\[.5ex]

\noindent \emph{Keywords:} Contextual modal type theory, modal logic, semantics, nominal terms, syntax.

\noindent \emph{MSC-class:} 03B70 (logic in computer science); 03B45 (modal logic); 68Q55 (semantics)

\noindent \emph{AMS-class:} F.4.1 (modal logic); F.3.2 (semantics of programming languages)
\end{abstract}

\maketitle

\tableofcontents
%%%%%%%%%%%%%%%%%%%%%%%%%%%%%%%%%%%%%%%%%%%
\section{Introduction}

The box modality $\Box$ from modal logic has proven its usefulness in logic.
It admits various logical and semantic interpretations in the spirit of `we know that' or `we can prove that' or `in the future it will be the case that'.
A nice historical overview of modal logic, which also considers the specific impact of computer science, is in \cite[Subsection~1.7]{blackburn:modl}. 

CMTT (contextual modal type theory) is a typed $\lambda$-calculus based via the Curry-Howard correspondence on the modal logic S4.
The box modality becomes a type-former, and box types are intuitively interpreted as `closed syntax of'.

So CMTT has types for programs that generate CMTT syntax.

Because of this, CMTT has been applied to meta-programming,
but it has independent interest as a language, designed according to rigorous mathematical principles and in harmony with modal logic, which interprets $\Box$ in a programming rather than a logical context.
Box types are types of the syntax of terms. 

Until now this has not been backed up by a denotational semantics in which box types really \emph{are} populated by the syntax of terms.
In this paper, we do that: our intuitions are realised in the denotational semantics in a direct and natural, and also unexpected, manner.

The denotation is interesting from the point of view of the interface between logic and programming.
Furthermore, we exploit the denotation to prove properties of the language, showing how denotations are not only illuminating but can also serve for new proof-methods.

\subsection{Keeping it simple}

This paper considers two related systems:
\begin{itemize*}
\item
The purely modal system, based on box types like $\Box\tya$.
\item
The contextual modal system, based on `boxes containing types' like $[\tya_1,\tya_2]\tyb$---the reader might like to think of the contextual system as a multimodal logic \cite[Subsection~1.4]{gabbay:mandml} (whose modalities are themselves indexed over propositions).
\end{itemize*}
Broadly speaking, the purely modal system is nicer to study but a little too simple. 
The contextual modal system generalises the purely modal system and gives it slightly more expressive power, but it can be a little complicated; not obscure, just long to write out.

Therefore, we open this paper with the modal system, make the main point of our denotation in the simplest and clearest possible manner---the reader who wants to jump right in and work backwards could do worse than start with the example denotations in Subsection~\ref{subsect.den.letbox} onwards---and then we consider the contextual system as the maths becomes more advanced.
Section~\ref{sect.box.syntax} presents syntax and typing of the modal system and Section~\ref{sect.contextual.system} does the same for the contextual modal system; Section~\ref{sect.denotations} gives modal denotations and Section~\ref{sect.contextual.models} gives contextual modal denotations.

The developments are parallel, but not identical.
Where proofs are not very different between the modal and contextual systems, we omit routine repetition. 
We consider reduction of the modal system in Section~\ref{sect.reduction} but not reduction of the contextual system. 
Also, we develop the important notion of \emph{shapeliness} only for the contextual system in Section~\ref{sect.shapely}; it is obvious how the modal case would be a special case.

\subsection{Key ideas}

Our main technical results are Theorems~\ref{thrm.type.soundness} and~\ref{thrm.type.soundness.c}, and Corollary~\ref{corr.more.c}.

However, just looking at these results may be misleading; the key technical ideas that make these results work, and indeed contribute to making them interesting, occur beforehand. 

So it might be useful to list some of the key ideas in the paper.
This list is not an exhaustive technical overview, so much as clues for the reader who wants to gain some quick insight and navigate the mathematics.
Here are some of the main points that make the mathematics in this paper different and distinctive:
\begin{itemize*}
\item
\emph{Inflation} in the case of $\hdenot{}{\Box\tya}$ in Figure~\ref{fig.denot.types}, and the `tail of' semantics of $X_\at$ in Figure~\ref{fig.denot.terms}.
This is discussed in Remark~\ref{rmrk.why.inflate}.
\item
Proposition~\ref{prop.rtheta} and the fact that it is needed for soundness of the denotation.
\item
The remarkable Proposition~\ref{prop.typesub}, in which valuations get turned into substitutions and closed syntax in the denotation interacts directly with the typing system.
This is a kind of dual to the interaction seen in Proposition~\ref{prop.rtheta}. 
\item
The denotation of $\hdenot{}{[\tya_i]\tya}$ in Figure~\ref{fig.denot.types.c}, which in the context of the rest of the paper is very natural.
\item
The notion of \emph{shapeliness} in Definition~\ref{defn.shape} and the `soundness result' Proposition~\ref{prop.shape.r}.
\end{itemize*}
We discuss all of these in the body of the paper.

\subsection{On intuitions}

\subsubsection{`Syntax' means syntax}

One early difficulty the authors of this paper faced was in communication, because we sometimes used terms synonymously without realising that the words were so slippery.

The intuition we give to $\Box\tya$ is self-reflectively \emph{closed syntax of the language itself}.
This is a distinct intuition from `computations', `code', `values', or `intensions', because these are not necessarily intended self-reflectively.

It is very important not to confuse this intuition with apparently similar intuitions expressed as `code of $\tya$', `values of $\tya$', `computations of $\tya$', or `intension of $\tya$'.
These are not quite the same thing.
It may be useful to briefly survey them here: 
\begin{itemize*}
\item
`Code of $\tya$' is an ambiguous term; this is often understood as precompiled code or bytecode, rather than syntax of the original language. 
See \cite{wic+lee+pfe:pldi98} for a system based on that intuition.
\item
`Values of $\tya$' is a dangerous intuition and there probably should be a law against it: depending on whom one is speaking with, this could be synonymous in their mind with `normal forms of $\tya$' (a syntactic notion) or `denotations of $\tya$' (a non-syntactic notion).

Matters become even worse if one's interlocuteur assumes that denotations may be silently added to syntax as constants (fine for mathematicians; not so fine for programmers).
More than one conversation has been corrupted by the associated misunderstandings.
\item
For a discussion of `computation of $\tya$' see the Related Work in the Conclusions, where we discuss how this intuition can lead to a notion of Moggi-style \emph{monad}.
\item
`Intension of $\tya$' is similar to `syntax of $\tya$', but significantly more general: there is no requirement that the intension be syntactic, or if it is syntactic, that it be the same calculus.
One could argue that `intension of' should also satisfy that the denotation of $\Box\Box\tya$ be identical in some strong sense---e.g. be the same set as---to that of $\Box\tya$, since taking an intension twice should reveal no further internal structure.
(This does not match the denotation of this paper.)

An interesting (and as far as we know unexplored) model of this intuition might be partial equivalence relations (\deffont{PER}s), where $\Box\tya$ takes $\tya$ and forms the identity PER which is defined where $\tya$ is defined.\footnote{Alex Simpson and Paul Levy both independently suggested PERs when the first author sketched the ideas of this paper, and Simpson went further and suggested the specific model discussed above.  We are grateful to Levy and Simpson for their comments, which prompted us to be specific about the intuition behind the particular denotation in this paper.}
Famously, PERs form a cartesian-closed category \cite[Subsection~3.4.1]{asperti:cattsi}.
\end{itemize*}

In short: where the reader sees `$\Box\tya$', they should think `raw syntax in type $\tya$'. 

\subsubsection{`Functions' means functions}

It may be useful now to head off another possible confusion: 
where the reader sees $\tya\fto\tyb$, they should think `graph of a function'---not `computable function', `representable function', `syntax of a function', or `code of a function'.

All of these things are also possible, but in this paper our challence is to create a type system, language, and denotation which are `epsilon away' from the simply-typed $\lambda$-calculus or (since we admit a type of truth-values) higher-order logic---and it just so happens that we also have modal types making precisely its \emph{own} syntax into first-class data.

So: we are considering a `foundations-flavoured' theory in which $\tya\fto\tyb$ represents all possible functions (in whatever foundation the reader prefers) from $\tya$ to $\tyb$, and we do not intend this paper to be `programming-flavoured' in which $\tya\fto\tyb$ represents only that function(-code) or normal forms that can exist inside some computational device.
And, $\Box\tya$ should represent, as much as possible, `the syntax of our language/logic that types as $\tya$'.

%%%%%%%%%%%%%%%%%%%%%%%%%%%%%%%%%%%%%%%%%%%%%%%%%%%%
\section{Syntax and typing of the system with box types}
\label{sect.box.syntax}

We start by presenting the types, terms, and typing relation for the modal type system.
This is the simplest version of the language that we want to give a denotational semantics for.

\subsection{The basic syntax}

\begin{defn}
\label{defn.atoms}
Fix two countably infinite sets of \deffont{variables} $\atoms$ and $\unknowns$.
We will observe a \deffont{permutative convention} that $a,b,c,\dots$ will range over distinct variables in $\atoms$ and $X,Y,Z,\dots$ will range over distinct variables in $\unknowns$.
We call $a,b,c$ \deffont{atoms} and $X,Y,Z$ \deffont{unknowns}. 
\end{defn}

\begin{defn}
\label{defn.types}
Define \deffont{types} inductively by:
\begin{frameqn}
\begin{array}{r@{\ }l}
\tya::=& o \mid \nat \mid \tya\fto \tya \mid \Box \tya
\end{array}
\end{frameqn}
\end{defn}

\begin{nttn}
By convention, if $X$ and $Y$ are sets we will write $Y^X$ for the set of functions from $X$ to $Y$.
This is to avoid any possible confusion between $\tya\fto\tyb$ (which is a type) and $Y^X$ (which is a set).
\end{nttn}

\begin{rmrk}
\begin{itemize*}
\item
$o$ will be a type of \deffont{truth values}; its denotation will be populated by truth-values $\{\bot,\top\}$.
\item
$\nat$ will be a type of \deffont{natural numbers}; its denotation will be populated by numbers $\{0,1,2,\dots\}$.
\item
$\tya\fto\tyb$ is a \deffont{function type}; its denotation will be populated by functions.
\item
$\Box\tya$ is a \deffont{modal type}; its denotation will be populated by syntax.
\end{itemize*}
\end{rmrk}

\begin{defn}
Fix a set of \deffont{constants} $C$ to each of which is assigned a type $\f{type}(C)$.
We write $C:\tya$ as shorthand for `$C$ is a constant and $\f{type}(C)=\tya$'.
We insist that constants include the following:
$$
\begin{array}{c@{\qquad}c@{\qquad}c}
\bot:o
&
\top:o
&
\tf{isapp}_\tya:(\Box \tya)\fto o
\end{array}
$$
We may also assume constants for $\nat$, such as $0:\nat$, $\tf{succ}:\nat\fto\nat$, $*:\nat\fto \nat\fto \nat$ and $+:\nat\fto \nat\fto \nat$, a fixedpoint combinator, we may write $1$ for $\tf{succ}(0)$,\ and so on.\footnote{\dots so we follow the example of PCF \cite{mitchell:foupl}.}

We may omit type subscripts where they are clear from context or do not matter.
\end{defn}

\begin{defn}
\label{defn.terms}
Define \deffont{terms} inductively by:
\begin{frameqn}
r::= C \mid a \mid X_\at \mid \lam{a{:}\tya}r \mid rr \mid \Box r \mid \letbox{X}{r}{r}
\end{frameqn}
\end{defn}
Constants $C$ are, as standard in the $\lambda$-calculus, added as desired to represent logic and computational primitives.
An atom $a$ plays the role of a standard $\lambda$-calculus variable; it is $\lambda$-abstracted in a typed manner in $\lam{a{:}\tya}r$.
The term $X_\at$ means intuitively `evaluate $X$' and $\Box r$ means intuitively `the syntax $r$ considered itself in the denotation'.
Finally $\letbox{X}{s}{r}$ means intuitively `set $X$ to be the syntax calculated by $s$, in $r$'.
Examples of this in action are given and discussed in Subsection~\ref{subsect.some.programs}.

\begin{rmrk}
The effect of $r_\at$ (which is not syntax) is obtained by $\letbox{X}{r}{X_\at}$.
%so there would be no gain or loss of expressivity.
Likewise the effect of $\lam{X{:}\Box\tya}r$ (which is not syntax) is obtained by $\lam{a{:}\Box\tya}\letbox{X}{a}{r}$. 

We \emph{cannot} emulate $\letbox{X}{s}{X_\at}$ using $(\lam{a{:}\tya}a_\at)r$.
The expression `$a_\at$' would mean `evaluate the syntax $a$' rather than `evaluate the syntax linked to $a$'.\footnote{In addition even if $a_\at$ were syntax, it would not type in the typing system of Figure~\ref{fig.modal.types}, because $\fa(a_\at)$ would be equal to $\{a\}\neq\varnothing$ (Definition~\ref{defn.hole.free.atoms}). Modal types are inhabited by \emph{closed} syntax (Definition~\ref{defn.hole.free.atoms}).}
\end{rmrk}

\begin{defn}
\label{defn.hole.free.atoms}
Define \deffont{free atoms} $\fa(\rtm)$ and \deffont{free unknowns} $\fv(\rtm)$ by:
\begin{displaymath}
\begin{array}{r@{\ }l@{\qquad}r@{\ }l}
\fa(C) = & \varnothing
&
\fa(a) = & \{ a \}                               
\\
\fa(\lam{a{:}\tya}r) = & \fa(r) \setminus \{ a \}
&
\fa(\rtm\stm) = & \fa(\rtm) \cup \fa(\stm) 
\\
\fa(\Box r) = & \fa(r)
&
\fa(\letbox{X}{s}{r}) = & \fa(r)\cup\fa(s)
\\
\fa(X_\at) = & \varnothing 
\\[2ex]
\fv(C) = & \varnothing
&
\fv(a) = & \varnothing
\\
\fv(\lam{a{:}\tya}\stm) = & \fv(\stm) 
&
\fv(\rtm\stm) = & \fv(\rtm) \cup \fv(\stm)  
\\
\fv(\Box r) = & \fv(r)
&
\fv(\letbox{X}{s}{r}) = & (\fv(r){\setminus}\{X\})\cup\fv(s)
\\
\fv(X_\at) = & \{ X \}
\end{array}
\end{displaymath}
If $\fa(r)\cup\fv(r)=\varnothing$ then we call $r$ \deffont{closed}.
\end{defn}

\begin{defn}
We take $a$ to be bound in $r$ in $\lam{a{:}\tya}r$ and $X$ to be bound in $r$ in $\letbox{X}{s}{r}$, and we take syntax up to $\alpha$-equivalence as usual.
We omit definitions but give examples:
\begin{itemize*}
\item
$\lam{a{:}\tya}a=\lam{b{:}\tya}b$.
\item
$\lam{a{:}\tya}(X_\at a)=\lam{b{:}\tya}(X_\at b)$.
\item
$\letbox{X}{\Box a}{X_\at b}=\letbox{Y}{\Box a}{Y_\at b}$.
\end{itemize*}
As the use of an equality symbol above suggests, we identify terms up
to $\alpha$-equivalence.\footnote{Using nominal abstract syntax
  \cite{gabbay:newaas-jv} this identification can be made consistent
  with the use of names for bound atoms \emph{and} the inductive
  definition in Definition~\ref{defn.terms}.  However, studying how
  best to define syntax is not the emphasis of this paper.}  
\end{defn}

%%%%%%%%%%%%%%%%%%%%%%%%%%%%%
\subsection{Typing}

\begin{defn}
\label{defn.typing.rules}
\begin{itemize*}
\item
A \deffont{typing} is a pair $a:\tya$ or $X:\Box \tya$.  
\item
A
\deffont{typing context} $\Gamma$ is a finite partial function from
$\atoms\cup\unknowns$ to types.  
\item
A \deffont{typing sequent} is a tuple $\Gamma\cent r:\tya$ of a typing
context, a term, and a type.  
\end{itemize*}
We use list notation for typing contexts, e.g. $a{:}\tya,Y{:}\tyb$ is the function mapping $a$ to $A$ and $Y$ to $B$; and $a{:}\tya\in\Gamma$ means that $\Gamma(a)$ is defined and $\Gamma(a)=\tya$.

\begin{frametxt}
Define the \deffont{valid typing sequents} of the modal type system inductively by the rules in Figure~\ref{fig.modal.types}.
\end{frametxt}
\end{defn}
We discuss examples of typable terms in Subsection~\ref{subsect.some.programs}.
The important rule is \rulefont{\Box I}, which tells us that if we have some syntax $r$ and it has no free atoms, then we can box it as a denotation $\Box r$ of box type---any free unknowns $X$ in $r$/$\Box r$ get linked to further boxed syntax, which is expressed by \rulefont{\Box E}.

\begin{figure}[t]
$$
\begin{array}{c@{\qquad}c}
\begin{prooftree}
\phantom{h}
\justifies
\Gamma,a:\tya\cent a:\tya
\using\rulefont{Hyp}
\end{prooftree}
&
\begin{prooftree}
\phantom{h}
\justifies
\Gamma\cent C:\f{type}(C)
\using\rulefont{Const}
\end{prooftree}
\\[2em]
\begin{prooftree}
\Gamma,a{:}\tya\cent r:\tyb 
\justifies
\Gamma\cent (\lam{a{:}\tya}r):\tya\to \tyb
\using\rulefont{{\to}I}
\end{prooftree}
&
\begin{prooftree}
\Gamma\cent r':\tya\to \tyb
\quad
\Gamma\cent r:\tya
\justifies
\Gamma\cent r'r:\tyb
\using\rulefont{{\to}E}
\end{prooftree}
\\[2em]
\begin{prooftree}
\Gamma\cent r:\tya 
\quad (\fa(r){=}\varnothing)
\justifies
\Gamma\cent \Box r:\Box \tya
\using\rulefont{{\Box}I}
\end{prooftree}
&
\begin{prooftree}
\Gamma\cent s{:}\Box\tya 
\quad 
\Gamma,X{:}\Box\tya\cent r {:} \tyb 
\justifies
\Gamma\cent \letbox{X}{s}{r}:\tyb
\using\rulefont{{\Box}E}
\end{prooftree}
\\[2em]
\begin{prooftree}
\phantom{h}
\justifies
\Gamma,X:\Box\tya \cent X_\at:\tya
\using\rulefont{Ext}
\end{prooftree}
\end{array}
$$
\caption{Modal type theory typing rules}
\label{fig.modal.types}
\end{figure}

\begin{nttn}
We may write $\varnothing\cent r:\tya$ just as $r:\tya$.
\end{nttn}

\begin{nttn}
\label{nttn.restrict}
If $\Gamma$ is a typing context and $U\subseteq\mathbb A\cup\mathbb X$ then write $\Gamma|_U$ for $\Gamma$ \deffont{restricted} to $U$.
This is the partial function which is equal to $\Gamma$ where it is defined, and $\dom(\Gamma|_U)=\dom(\Gamma)\cap U$.
\end{nttn}
 
Proposition~\ref{prop.ws} combines Weakening and Strengthening: 
\begin{prop}
\label{prop.ws}
If $\Gamma\cent r:\tya$ and $\Gamma'|_{\fv(r)\cup\fa(r)}=\Gamma|_{\fv(r)\cup\fa(r)}$ then $\Gamma'\cent r:\tya$.
\end{prop}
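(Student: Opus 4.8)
The plan is to prove Proposition~\ref{prop.ws} by a straightforward induction on the derivation of $\Gamma\cent r:\tya$. This is a standard structural-induction argument, and the main content is just checking that the hypothesis $\Gamma'|_{\fv(r)\cup\fa(r)}=\Gamma|_{\fv(r)\cup\fa(r)}$ is exactly strong enough to propagate through each rule of Figure~\ref{fig.modal.types}, after adjusting for how the relevant free-atom/free-unknown sets change at binders. The statement bundles Weakening (where $\Gamma'$ extends $\Gamma$) and Strengthening (where $\Gamma'$ drops typings not used by $r$) into one clause: the condition says $\Gamma$ and $\Gamma'$ need only agree on the free atoms and free unknowns of $r$, and may differ arbitrarily elsewhere.

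\medskip

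\noindent\textbf{Key steps.} First I would fix $\Gamma,\Gamma',r,\tya$ and induct on the derivation. For the axiom rules this is immediate: in \rulefont{Hyp} we have $r=a$ with $\fa(a)=\{a\}$, so the hypothesis forces $\Gamma'(a)=\Gamma(a)=\tya$ and hence $\Gamma'\cent a:\tya$; \rulefont{Const} does not consult the context at all; and \rulefont{Ext}, with $r=X_\at$ and $\fv(X_\at)=\{X\}$, forces $\Gamma'(X)=\Box\tya$ as required. For \rulefont{{\to}E} and the two-premise case of \rulefont{{\Box}E} the inductive step is routine: the free atoms and unknowns of the subterms are subsets of those of the whole term (by Definition~\ref{defn.hole.free.atoms}, where $\fa$ and $\fv$ of an application or a $\letbox{}{}{}$ are unions of the components), so the agreement hypothesis restricts to each premise and the induction hypothesis applies directly. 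The interesting steps are the binders. For \rulefont{{\to}I}, with $r=\lam{a{:}\tya}r_0$, the relevant sets satisfy $\fa(\lam{a{:}\tya}r_0)=\fa(r_0)\setminus\{a\}$, so I would compare the \emph{extended} contexts $\Gamma,a{:}\tya$ and $\Gamma',a{:}\tya$: these agree on $\fv(r_0)\cup\fa(r_0)$ because on $\fa(r_0)\setminus\{a\}$ the original contexts agree and on $a$ itself the two extensions agree by construction. Applying the induction hypothesis to the premise $\Gamma,a{:}\tya\cent r_0:\tyb$ yields $\Gamma',a{:}\tya\cent r_0:\tyb$, and \rulefont{{\to}I} gives the conclusion. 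The \rulefont{{\Box}E} binding of $X$ is handled the same way, using $\fv(\letbox{X}{s}{r})=(\fv(r)\setminus\{X\})\cup\fv(s)$. Finally, for \rulefont{{\Box}I}, with $r=\Box r_0$, I would use $\fa(\Box r_0)=\fa(r_0)$ and $\fv(\Box r_0)=\fv(r_0)$ so that the agreement hypothesis transfers verbatim to the premise; I must also note that the side-condition $\fa(r_0)=\varnothing$ depends only on $r_0$ and not on the context, so it survives unchanged.

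\medskip

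\noindent\textbf{Main obstacle.} There is no deep obstacle here; the only point requiring care is the bookkeeping at the binders, namely verifying that after extending both contexts by the bound typing ($a{:}\tya$ or $X{:}\Box\tya$) the agreement condition still holds on the free-variable set of the premise. Concretely, one must check that adding $a{:}\tya$ to both $\Gamma$ and $\Gamma'$ reconciles exactly the one variable ($a$) that $\fa(r_0)$ may contain but $\fa(\lam{a{:}\tya}r_0)$ excludes, and that no other discrepancy is introduced. This is a one-line set-theoretic calculation in each binding case, but it is where the proof actually uses the precise definitions of $\fa$ and $\fv$ at abstractions; the rest is mechanical.
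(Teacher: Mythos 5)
Your proposal is correct and matches the paper's approach: the paper simply states ``By a routine induction on $r$,'' and your case analysis fills in exactly the routine details that the authors omit, including the correct handling of the binder cases and the context-independence of the $\fa(r_0)=\varnothing$ side-condition in \rulefont{{\Box}I}.
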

\begin{proof}
By a routine induction on $r$.
\end{proof}

\maketab{tdef}{@{\hspace{-3em}}R{4em}@{\ }L{40em}}

\subsection{Examples of terms typable in the modal system}
\label{subsect.some.programs}

We are now ready to discuss intuitions about this syntax; for a more formal treatment see Section~\ref{sect.denotations} which develops the denotational semantics.
We start with some short examples and then consider more complex terms.

\subsubsection{Short examples}
\label{subsubsection.one.line}

\begin{enumerate}
\item
Assume constants $\neg:o\fto o$ and $\land:o\fto o\fto o$, where $\land$ is written infix as usual.
Then we can type 
\begin{tdef}
\varnothing \cent&\lam{a{:}\Box o}\letbox{X}{a}{\Box(\neg X_\at)}\ :\ \Box o\fto\Box o.
\\
\varnothing \cent&\lam{a{:}\Box o}\lam{b{:}\Box o}\letbox{X}{a}{\letbox{Y}{b}{\Box(X_\at\land Y_\at)}}\ :\ \Box o\fto\Box o\fto\Box o.
\\
\varnothing \cent&\lam{a{:}\Box o}\letbox{X}{a}{\Box(X_\at\land X_\at)}\ :\ \Box o\fto\Box o.
\end{tdef}
Intuitively these represents the syntax transformations $P\mapsto\neg P$,\ $P,Q\mapsto P\land Q$,\ and $P\mapsto P\land P$.
\item
\label{item.T}
This program takes syntax of type $\tya$ and evaluates it:
\begin{tdef}
\varnothing\cent& \lam{a{:}\Box\tya}\letbox{X}{a}{X_\at}\ :\ \Box\tya\fto\tya
\end{tdef}
This corresponds to the modal logic axiom \rulefont{T}.
\item
Expanding on the previous example, this program takes syntax for a function and an argument, evaluates the syntax and applies the function to the argument:
\begin{tdef}
\varnothing\cent& \lam{a{:}\Box(\tya\fto\tyb)}\lam{b{:}\tya}(\letbox{X}{a}{X_\at})b\ :\ \Box(\tya\fto\tyb)\to(\tya\fto\tyb)
\end{tdef}
\item
\label{item.4}
This program takes syntax of type $\tya$ tagged with $\Box$, and adds an extra $\Box$ so that it becomes syntax of type $\Box\tya$:
\begin{tdef}
\varnothing\cent& \lam{a{:}\Box\tya}\letbox{X}{a}{\Box\Box X_\at}\ :\ \Box\tya\fto\Box\Box\tya
\end{tdef}
This corresponds to the modal logic axiom \rulefont{4}.
\end{enumerate}

\subsubsection{There is no natural term of type $\tya\fto \Box \tya$}
\label{subsubsect.to.box}

We can try to give $\lam{a{:}o}\Box a$ the type $\tya\fto\Box\tya$, but we fail because the typing
context $a{:}o$ does not satisfy $\fa(a)=\varnothing$.

Our denotation of Figures~\ref{fig.denot.types} and~\ref{fig.denot.terms} illustrates that it is not in general possible to invert the evaluation map from Subsection~\ref{subsubsection.one.line} and thus map $\tya$ to $\Box\tya$.
This is Corollary~\ref{corr.later}.\footnote{For sufficiently `small' types this may be possible by specific constructions; see Example~\ref{xmpl.unevaluate.nat}.}
So
\begin{itemize*}
\item
there is a canonical map $\Box\tya\fto\tya$ (syntax to denotation)---we saw this map in part~1 of this example---but 
\item
not in general an inverse map $\tya\fto\Box\tya$ (denotation to syntax).
\end{itemize*}

\subsubsection{A term for \emph{Axiom K}}
\label{subsubsect.K}

\emph{Axiom K}, also called the \emph{normality axiom} \cite[Definition~1.39, Subsection~1.6]{blackburn:modl}; its type is $\Box(\tya\fto\tyb)\fto\Box\tya\fto\Box\tyb$.

We can write a term of this type.
Intuitively, the term below takes syntax for a function and syntax for an argument, and produces syntax for the function applied to the argument:
$$
\begin{array}{r@{\ }l@{\ }l}
\varnothing \cent& \lam{a{:}\Box(\tya \fto\tyb)}\lam{b{:}\Box\tya}\letbox{Y}{b}{\letbox{X}{a}{\Box(X_\at Y_\at)}}
:\Box(\tya \fto\tyb)\fto\Box\tya\fto\Box\tyb
\end{array}
$$

\begin{rmrk}
\label{rmrk.modal}
We exhibited terms of type $\Box\tya\fto\tya$, $\Box\tya\fto\Box\Box\tya$, and $\Box(\tya\fto\tyb)\fto\Box\tya\fto\Box\tyb$, so Figure~\ref{fig.modal.types} implements (at least) the deductive power of an intuitionistic variant of S4 \cite[Subsection~4.1, page~194]{blackburn:modl}.\footnote{The list of axioms of \cite[page~194]{blackburn:modl} uses $\Diamond$ instead of $\Box$.

A most remarkable family of theorems of Kripke semantics for modal logic relates geometric properties of the Kripke frame's \emph{accessibility relation} with logical properties of the modalities.  Axiom \rulefont{K} is satisfied by all frames.  Axiom \rulefont{T} expresses geometrically that accessibility is reflexive.  Axiom \rulefont{4} expresses that accessibility is transitive.}

The reader familiar with category theory may also ask whether $\Box$ can be viewed as a \emph{comonad}, since $\Box\tya\fto\tya$ and $\Box\tya\fto\Box\Box\tya$ look like the types of a \emph{counit} and \emph{comultiplication} (and perhaps $\Box(\tya\fto\tyb)\fto\Box\tya\fto\Box\tyb$ looks like the action of a functor).
We return to this in Section~\ref{sect.box.comonad}.
\end{rmrk}

%%%%%%%%%%%%%%%%%%%%%%%%%%%%%%%%%%%%%%%%%
\subsubsection{The example of exponentiation}
\label{subsubsect.exp}

This is a classic example of meta-programming: write a function that takes a number $n$ and returns syntax for the function $x\in\nat\mapsto x^n$.

Assuming a combinator for
primitive recursion over natural numbers and using some standard sugar, the following term implements exponentiation:
$$
\begin{array}[t]{r@{\ }l}
\f{exp}\ 0\Rightarrow&\Box\lam{b{:}\nat}1 
\\ 
\f{exp}\ (\f{succ}(n))\Rightarrow&\letbox{X}{\f{exp}\,n}{\bigl(\Box \lam{b{:}\nat}b*(X_\at b) \bigr)} . 
\end{array}
$$
However, the term above generates $\beta$-reducts. 
The reader can see this because of the `$\Box \lam{b{:}\nat}b*(X_\at b)$' above.
This application $X_\at b$ is trapped under a $\Box$ and \emph{will not} reduce.
 
Looking ahead to the reduction relation in Figure~\ref{fig.reduction}, $\f{exp}\,2$ reduces to 
$$
\Box(\lam{b{:}\nat}b*(\lam{b{:}\nat}b*((\lam{b{:}\nat}1)b)b))
\quad\text{and not to}\quad
\Box(\lam{b{:}\nat}(b*b*1)) .
$$
Looking ahead to the denotation of Figure~\ref{fig.denot.terms}, the denotation of $\f{exp}\,2$ will likewise be $\Box(\lam{b{:}\nat}b*(\lam{b{:}\nat}b*((\lam{b{:}\nat}1)b)b))$ in a suitable sense.
We indicate the calculation in Subsection~\ref{subsubsect.exp.denot}.

The contextual system of Section~\ref{sect.contextual.system} deals with this particular issue; see Subsection~\ref{subsubsect.exp.c}.

%%%%%%%%%%%%%%%%%%%%%%%%%%%%%%%%%
\subsection{Substitution}

\begin{defn}
\label{defn.substitution}
An \deffont{(atoms-)substitution} $\sigma$ is a finite partial function from atoms $\mathbb A$ to terms.
$\sigma$ will range over atoms-substitutions.

Write $\dom(\sigma)$ for the set ${\{a\mid\sigma(a)\text{ defined}\}}$ 

Write $\id$ for the \deffont{identity} substitution, such that $\dom(\sigma)=\varnothing$.
 
Write $[a{\ssm}t]$ for the map taking $a$ to $t$ and undefined elsewhere.

An \deffont{(unknowns-)substitution} $\theta$ is a finite partial function from unknowns $\unknowns$ to terms such that for every $X$, if $X\in\dom(\theta)$ then $\theta(X)=\Box r$ for some $r$ with $\fa(r)=\varnothing$. 

$\theta$ will range over unknowns-substitutions.

We write $\dom(\theta)$,\ $\id$,\ and $[X{\ssm}t]$ just as for atoms-substitutions.
\end{defn}

\begin{defn}
\label{defn.fa.sigma}
Define 
$$
\begin{array}{r@{\ }l}
\fa(\sigma)=&\dom(\sigma)\cup\{\fa(\sigma(a))\mid a\in\dom(\sigma)\}
\quad\text{and}
\\
\fv(\theta)=&\dom(\theta)\cup\{\fv(\theta(X))\mid X\in\dom(\theta)\} .
\end{array}
$$
\end{defn}

\begin{rmrk}
\label{rmrk.theta.varnothing}
Where $\theta$ is defined, it maps $X$ specifically to terms the form $\Box r$ with $\fa(r)=\varnothing$.

This is because `$\Box r$ with $\fa(r)=\varnothing$' is the syntax inhabiting modal types.
If we consider another class of syntax (e.g. in the contextual system of Section~\ref{sect.contextual.system} onwards), then the corresponding notion of unknowns-substitution changes in concert with that.
\end{rmrk}

\begin{figure}[t]
$$
\begin{array}{r@{\ }l@{\qquad}r@{\ }l@{\qquad}l}
C\sigma=&C
&
a\sigma=&\sigma(a) &(a\in\dom(\sigma))
\\
(rs)\sigma=&(r\sigma)(s\sigma)
&
a\sigma=&a &(a\not\in\dom(\sigma))
\\
(\Box r)\sigma=&\Box (r\sigma) 
&
(\lam{c{:}\tya}r)\sigma=&\lam{c{:}\tya}(r\sigma) &
(c\not\in\fa(\sigma))
\\
X_\at\sigma=&X_\at
&(\letbox{Y}{s}{r})\sigma=&\letbox{Y}{s\sigma}{r\sigma} 
\\[2.5ex]
C\theta=&C
&
a\theta=&a
\\
(rs)\theta=&(r\theta)(s\theta)
&
X_\at\theta=&s' 
&(\theta(X)=\Box s')
\\
(\Box r)\theta=&\Box(r\theta)
&
X_\at\theta=&X_\at 
&(X\not\in\dom(\theta))
\\
(\lam{c{:}\tya}r)\theta=&\lam{c{:}\tya}(r\theta) 
&
(\letbox{Y}{s}{r})\theta=&\letbox{Y}{s\theta}{r\theta}
&(Y\not\in\fv(\theta))
\end{array}
$$
\caption{Substitution actions for atoms and unknowns}
\label{fig.sub}
\end{figure}

Definition~\ref{defn.sub} describes how atoms and unknowns get instantiated.
We discuss it in Remark~\ref{rmrk.comments.on.sub} but one point is important above all others: if $\theta(X)=\Box s'$ then $X_\at\theta$ is equal to $s'$.
So a very simple reduction/computation is `built in' to the substitution action for unknowns, that $(\Box s')_\at \to s'$.\footnote{$(\Box s')_\at$ is not actually syntax, but if it were, then $(\Box s')_\at \to s'$ would be its reduction.} 
\begin{defn}
\label{defn.sub}
Define \deffont{atoms} and \deffont{unknowns} substitution actions $r\sigma$ and $r\theta$ inductively by the rules in Figure~\ref{fig.sub}.
\end{defn}

Lemma~\ref{lemm.fa.rtheta} illustrates a nice corollary of the point discussed in Remark~\ref{rmrk.theta.varnothing}.
It will be useful later in Proposition~\ref{prop.typesub}.
\begin{lemm}
\label{lemm.fa.rtheta}
$\fa(r\theta)=\fa(r)$.
\end{lemm}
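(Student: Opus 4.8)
The plan is to prove the equation by structural induction on $r$, reading $\fa$ off Definition~\ref{defn.hole.free.atoms} and the action $r\theta$ off Figure~\ref{fig.sub}. Most cases will be forced once the single interesting case is settled, so I would organise the argument to isolate that case clearly.

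For the base cases, when $r$ is a constant $C$ or an atom $a$ the substitution $\theta$ acts as the identity ($C\theta=C$ and $a\theta=a$), so the equality is immediate. The genuinely load-bearing base case is $r=X_\at$, where $\fa(X_\at)=\varnothing$ and I must check that $\fa(X_\at\theta)=\varnothing$ as well. If $X\notin\dom(\theta)$ then $X_\at\theta=X_\at$ and there is nothing to do; if $X\in\dom(\theta)$ then $X_\at\theta=s'$ where $\theta(X)=\Box s'$, and here I would invoke \textbf{the constraint built into Definition~\ref{defn.substitution}} that $\fa(s')=\varnothing$, giving $\fa(X_\at\theta)=\fa(s')=\varnothing$. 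This is the crux of the lemma and is exactly the phenomenon foreshadowed in Remark~\ref{rmrk.theta.varnothing}: because unknowns are instantiated only by atom-closed syntax, substituting for an unknown never introduces free atoms.

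For the inductive cases, $\theta$ commutes with each term-former and $\fa$ is defined homomorphically, so each step follows by applying the inductive hypothesis to the immediate subterms. For application I would write $\fa((rs)\theta)=\fa(r\theta)\cup\fa(s\theta)=\fa(r)\cup\fa(s)=\fa(rs)$; the $\Box r$ and $\letbox{Y}{s}{r}$ cases are analogous (the side condition $Y\notin\fv(\theta)$ concerns only unknowns and does not affect free atoms). The abstraction case $\lam{c{:}\tya}r$ is the only one where a capture concern could in principle intrude, but again the atom-closedness of the image of $\theta$ means no free atom of $\theta$ can be captured by $c$, so $\fa((\lam{c{:}\tya}r)\theta)=\fa(r\theta)\setminus\{c\}=\fa(r)\setminus\{c\}=\fa(\lam{c{:}\tya}r)$ by the inductive hypothesis.

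I do not expect a real obstacle here: every inductive step is mechanical, and the whole statement hinges on the $X_\at$ case, which is discharged directly by the requirement that $\theta$ maps into atom-closed boxed terms. The only bookkeeping to keep honest is the treatment of bound names in the abstraction and let-box cases, which is harmless because we work throughout up to $\alpha$-equivalence.
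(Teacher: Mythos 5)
Your proof is correct and follows exactly the paper's argument: a routine structural induction on $r$ whose only nontrivial case is $X_\at$, discharged by the requirement in Definition~\ref{defn.substitution} that $\theta(X)=\Box s'$ with $\fa(s')=\varnothing$. The paper states this in one line; your write-up simply makes the same induction explicit.
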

\begin{proof}
By a routine induction on $r$ using our assumption of Definition~\ref{defn.substitution} that if $X\in\dom(\theta)$ then $\fa(\theta(X))=\varnothing$.
\end{proof}

\begin{rmrk}
\label{rmrk.comments.on.sub}
A few comments on Definition~\ref{defn.sub}:
\begin{itemize*}
\item
The two capture avoidance side-conditions $c\not\in\fa(\sigma)$ and $Y\not\in\fv(\theta)$ can always be guaranteed by renaming.
\item
We write $(\Box r)\sigma=\Box(r\sigma)$.
This is computationally wasteful in the sense that the side-condition $\fa(r)=\varnothing$ on \rulefont{\Box I} (Figure~\ref{fig.modal.types}) guarantees that for typable terms (which is what we care about) $r\sigma=r$.

We prefer to keep basic definitions orthogonal from such optimisations, but
this is purely a design choice (and see the next item in this list).
\item
We write $(\lam{c{:}\tya}r)\theta=\lam{c{:}\tya}(r\theta)$ without any side-condition that $c$ should avoid capture by atoms in $\theta$.
This is because Definition~\ref{defn.substitution} insists that $\fa(\theta(X))=\varnothing$ always, so there can be no capture to avoid.
\end{itemize*}
\end{rmrk}

Recall the definition of $[a{\ssm}s]$ from Definition~\ref{defn.substitution}.
Lemma~\ref{lemm.standard.fv.sub} is a standard lemma which will be useful later:
\begin{lemm}
\label{lemm.standard.fv.sub}
If $a\not\in\fa(r)$ then $r[a{\ssm}s]=r$. 
\end{lemm}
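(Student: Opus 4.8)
The plan is to proceed by structural induction on $r$, following the clauses of Definition~\ref{defn.terms}, and in each case to unfold the definition of the atoms-substitution action (Figure~\ref{fig.sub}) together with the definition of $\fa$ (Definition~\ref{defn.hole.free.atoms}). Writing $\sigma$ for $[a{\ssm}s]$, I would note at the outset that $\dom(\sigma)=\{a\}$ and $\fa(\sigma)=\{a\}\cup\fa(s)$, since these are precisely the sets that the side-conditions in the substitution clauses refer to.

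First I would dispatch the base cases. For $r=C$ and $r=X_\at$ the substitution acts trivially, since $C\sigma=C$ and $X_\at\sigma=X_\at$, so there is nothing to check. For $r$ an atom, the hypothesis $a\notin\fa(r)$ together with $\fa(b)=\{b\}$ forces $r$ to be some atom $b\neq a$; hence $b\notin\dom(\sigma)$ and $b\sigma=b$. (The subcase $r=a$ never arises, as it would contradict $a\notin\fa(a)=\{a\}$.)

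For the inductive cases $r=r_1r_2$, $r=\Box r'$, and $r=\letbox{X}{s'}{r'}$ the reasoning is uniform: in each the relevant clause of $\fa$ is a union of the free atoms of the immediate subterms, so $a\notin\fa(r)$ yields $a\notin\fa$ of each subterm; the induction hypothesis then fixes each subterm, and the corresponding substitution clause merely pushes $\sigma$ into the subterms, reassembling $r$ unchanged.

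The one case requiring genuine care, which I expect to be the main (though routine) obstacle, is the abstraction $r=\lam{c{:}\tya}r'$, because of the binder. Here $\fa(r)=\fa(r')\setminus\{c\}$, so from $a\notin\fa(r)$ I cannot immediately conclude $a\notin\fa(r')$. I would first use the convention of working up to $\alpha$-equivalence to rename $c$ so that $c\neq a$ and $c\notin\fa(s)$; the latter gives $c\notin\fa(\sigma)$, which licenses the substitution clause $(\lam{c{:}\tya}r')\sigma=\lam{c{:}\tya}(r'\sigma)$, while $c\neq a$ upgrades $a\notin\fa(r')\setminus\{c\}$ to $a\notin\fa(r')$. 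The induction hypothesis then gives $r'\sigma=r'$, and the result follows.
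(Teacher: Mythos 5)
Your proposal is correct and is exactly the argument the paper intends: the paper's proof is simply ``by a routine induction on $r$'', and your case analysis (including the $\alpha$-renaming of the bound atom in the abstraction case to discharge the side-condition $c\not\in\fa(\sigma)$ and to upgrade $a\not\in\fa(r')\setminus\{c\}$ to $a\not\in\fa(r')$) is the standard way to fill in that routine induction. No gaps.
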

\begin{proof}
By a routine induction on $r$.
\end{proof}

Definition~\ref{defn.gamma.theta} and Proposition~\ref{prop.rtheta} are needed for Proposition~\ref{prop.typesub}.
\begin{defn}
\label{defn.gamma.theta}
Suppose $\Gamma$ is a typing context and $\theta$ is an unknowns substitution.
Write $\Gamma\cent\theta$ when 
if $X\in\dom(\theta)$ then $X{:}\Box\tya\in\Gamma$ for some $\tya$ and $\Gamma\cent\theta(X):\Box\tya$.
\end{defn}

Proposition~\ref{prop.rtheta} is needed for Theorem~\ref{thrm.type.soundness} (soundness of the denotation).
It is slightly unusual that soundness of typing under substitution should be needed for soundness under taking denotations.
But the syntax is going to be \emph{part of} the denotational semantics---that is its point---and so substitution is part of how this denotation is calculated (see the case of $\Box r$ in Figure~\ref{fig.denot.terms}).
\begin{prop}
\label{prop.rtheta}
Suppose $\Gamma$ is a typing context and $\theta$ is an unknowns substitution and suppose $\Gamma\cent\theta$ (Definition~\ref{defn.gamma.theta}).
Then $\Gamma\cent r:\tya$ implies $\Gamma\cent r\theta:\tya$. 
\end{prop}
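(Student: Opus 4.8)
The plan is to argue by induction on the derivation of $\Gamma\cent r:\tya$, computing $r\theta$ in each case from the clauses of Figure~\ref{fig.sub} and then reassembling a derivation of $\Gamma\cent r\theta:\tya$. (Equivalently one could do structural induction on $r$ and appeal to inversion on the syntax-directed typing rules; I will phrase it as induction on the derivation.) Before starting I would isolate one administrative fact that is used repeatedly: the hypothesis $\Gamma\cent\theta$ is preserved under enlarging the context. Precisely, if $\Gamma\cent\theta$ and $\Gamma'$ extends $\Gamma$ by a typing $a{:}\tya$, or by a typing $X{:}\Box\tya$ with $X\notin\dom(\theta)$, then $\Gamma'\cent\theta$. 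Indeed each $Y\in\dom(\theta)$ still has $Y{:}\Box\tyc\in\Gamma'$ (no existing typing is removed, and the freshly added $X$ lies outside $\dom(\theta)$), while $\Gamma'\cent\theta(Y):\Box\tyc$ follows from $\Gamma\cent\theta(Y):\Box\tyc$ by Weakening (Proposition~\ref{prop.ws}). This is exactly what lets the induction hypothesis fire in the binding cases.

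The leaf and congruence cases are then routine. For \rulefont{Hyp} and \rulefont{Const} we have $a\theta=a$ and $C\theta=C$, so the same typing holds unchanged. For \rulefont{{\to}E} the induction hypothesis on the two premises, combined with $(r'r)\theta=(r'\theta)(r\theta)$, rebuilds the application. For \rulefont{{\to}I} and \rulefont{{\Box}E} I would first invoke the observation above to obtain $\Gamma,a{:}\tya\cent\theta$ (respectively $\Gamma,X{:}\Box\tya\cent\theta$, where $X\notin\fv(\theta)$ may be assumed by the renaming convention on $\theta$), then apply the induction hypothesis to the premise, and close with the same rule using $(\lam{a{:}\tya}r)\theta=\lam{a{:}\tya}(r\theta)$ and $(\letbox{X}{s}{r})\theta=\letbox{X}{s\theta}{r\theta}$. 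The case \rulefont{{\Box}I} is where Lemma~\ref{lemm.fa.rtheta} is needed: the induction hypothesis gives $\Gamma\cent r\theta:\tya$, and to re-apply \rulefont{{\Box}I} to $(\Box r)\theta=\Box(r\theta)$ I must re-establish its side-condition, which holds because $\fa(r\theta)=\fa(r)=\varnothing$.

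The one case carrying genuine content is \rulefont{Ext}, where $\Gamma$ contains $X{:}\Box\tya$ and $r=X_\at$. If $X\notin\dom(\theta)$ then $X_\at\theta=X_\at$ and the same instance of \rulefont{Ext} applies. If $X\in\dom(\theta)$ then $X_\at\theta=s'$ where $\theta(X)=\Box s'$, and I must show $\Gamma\cent s':\tya$. Here $\Gamma\cent\theta$ supplies $X{:}\Box\tyc\in\Gamma$ with $\Gamma\cent\theta(X):\Box\tyc$; since $\Gamma$ is a partial function and already records $X{:}\Box\tya$, we get $\tyc=\tya$, hence $\Gamma\cent\Box s':\Box\tya$. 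As a term of the form $\Box s'$ can only be typed by \rulefont{{\Box}I}, inversion yields $\Gamma\cent s':\tya$, as required. I expect this interaction---matching the type recorded for $X$ in $\Gamma$ against the type of the substituted boxed term, and then peeling off the box by inversion---to be the crux of the argument; everything else is bookkeeping, with Proposition~\ref{prop.ws} and Lemma~\ref{lemm.fa.rtheta} doing the remaining work.
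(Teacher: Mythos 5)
Your proof is correct and follows essentially the same route as the paper's: induction on the typing derivation, with Lemma~\ref{lemm.fa.rtheta} re-establishing the side-condition of \rulefont{{\Box}I}, renaming to keep the bound unknown out of $\theta$ in the \rulefont{{\Box}E} case, and the \rulefont{Ext} case resolved by matching the type recorded for $X$ in $\Gamma$ against $\Gamma\cent\theta(X):\Box\tya$ and then peeling off the box. If anything you are slightly more explicit than the paper about the inversion of \rulefont{{\Box}I} and about why $\Gamma\cent\theta$ survives context extension; the paper leaves both tacit.
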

\begin{proof}
By a routine induction on the typing of $r$.
We consider four cases:
\begin{itemize*}
\item
\emph{The case of \rulefont{\Box I}.}\quad
Suppose $\Gamma\cent r:\tya$ and $\fa(r)=\varnothing$ so that $\Gamma\cent \Box r:\Box \tya$ by \rulefont{\Box I}.
By inductive hypothesis $\Gamma\cent r\theta:\tya$.
By Lemma~\ref{lemm.fa.rtheta} also $\fa(r\theta)=\varnothing$. 
We use \rulefont{\Box I} and the fact that $(\Box r)\theta=\Box(r\theta)$, and Proposition~\ref{prop.ws}.
\item
\emph{The case of \rulefont{Ext} for $X\in\dom(\theta)$.}\quad
By assumption in Definition~\ref{defn.substitution},\ $\theta(X)=\Box r'$ for some $r'$ with $\fa(r')=\varnothing$.
By assumption in Definition~\ref{defn.gamma.theta} $\varnothing\cent\theta(X):\Box\tya$.
By Definition~\ref{defn.sub} $(X_\at)\theta=r'$.
By Proposition~\ref{prop.ws} $\Gamma\cent r':\tya$ as required. 
\item
\emph{The case of \rulefont{{\fto}I}.}\quad
Suppose $\Gamma,a{:}\tya\cent r:\tyb$ so that by \rulefont{{\fto}I} $\Gamma\cent \lam{a{:}\tya}r:\tya\fto\tyb$.
By inductive hypothesis $\Gamma,a{:}\tya\cent r\theta:\tyb$.
We use \rulefont{{\fto}I}.
\item
\emph{The case of \rulefont{\Box E}.}\quad
Suppose $\Gamma,X{:}\Box\tya\cent r:\tyb$ and $\Gamma\cent s:\Box\tya$ so that by \rulefont{\Box E} $\Gamma\cent\letbox{X}{s}{r}:\tyb$.
Renaming if necessary, suppose $X\not\in\dom(\theta)$.
By inductive hypothesis $\Gamma,X{:}\Box\tya\cent r\theta:\tyb$ and $\Gamma\cent s\theta:\Box\tya$.
We use \rulefont{\Box E} and the fact that $(\letbox{X}{s}{r})\theta=\letbox{X}{s\theta}{r\theta}$.
\end{itemize*}
\end{proof}

%%%%%%%%%%%%%%%%%%%%%%%%%%%%%%%%%%%%%%%
\section{Denotational semantics for types and terms of the modal type system}
\label{sect.denotations}

We now develop a denotational semantics of the types and terms from Definitions~\ref{defn.types} and~\ref{defn.terms}.
The main definitions are in Figures~\ref{fig.denot.types} and~\ref{fig.denot.terms}.
The design is subtle, so there follows an extended discussion of the definition.

%%%%%%%%%%%%%%%%%%%%%%%%%%%%%%%%%%%%%%%%%%%%%
\subsection{Denotation of types}
\label{subsect.denot.types}

\begin{frametxt}
\begin{defn}
\label{defn.interpret.types}
Define $\hdenot{}{\tya}$ the \deffont{interpretation} of types by induction in Figure~\ref{fig.denot.types}. 
\end{defn}
\end{frametxt}

\begin{figure}[t]
$$
\begin{array}{r@{\ }l@{\quad}l}
\hdenot{}{o}=&\{\top^\hden,\bot^\hden\}
& \text{\it truth-values}
\\
\hdenot{}{\nat}=&\{0,1,2,\dots\}
& \text{\it natural numbers}
\\
\hdenot{}{\tya\fto\tyb}=&\hdenot{}{\tyb}^{\hdenot{}{\tya}}
&
\text{\it function-spaces}
\\
\hdenot{}{\Box\tya}=&\{\Box r\mid \varnothing\cent \Box r:\Box\tya\}\times\hdenot{}{\tya}
&
\text{\it closed syntax \& purported denotation}
\end{array}
$$
\caption{Denotational semantics of modal types}
\label{fig.denot.types}
\end{figure}

\begin{rmrk}
\label{rmrk.how.to.interpret.types}
$\hdenot{}{o}$ is a pair of truth-values, and $\hdenot{}{\mathbb N}$ is the set of natural numbers.
$\hdenot{}{\tyb}^{\hdenot{}{\tya}}$ is a function-space.\footnote{We could restrict this to computable functions or some other smaller set but we have our logician's hat on here, not our programmer's hat on: we \emph{want} the larger set.  This will make Corollary~\ref{corr.later} work.  If we chose a smaller, more sophisticated, and more complex notion of function-space here, then this would actually \emph{weaken} the results we then obtain from the semantics.}
No surprises here.

$z\in\hdenot{}{\Box\tya}$ is a pair $(\Box r,x)$.
We suggest the reader think of this as
\begin{itemize*}
\item
some syntax $\Box r$ and\footnote{We could drop the $\Box$ and just write $(r,x)$, but when we build the contextual system in Section~\ref{sect.contextual.system} the $\Box$ will fill with bindings (see Definition~\ref{defn.terms.c}) and cannot be dropped, so we keep it here.}
\item
its purported denotation $x$. 
\end{itemize*}
We say `purported' because there is no restriction that $x$ actually be a possible denotation of $r$. 
For instance, it is a fact that $\Box(0+1)::2\in\hdenot{}{\Box\nat}$, and $\Box(0+1)::2$ will not be the denotation of any $r$ such that $\varnothing\cent r:\nat$ (to check this, unpack Definition~\ref{defn.denot.terms} below).

So our semantics inflates: there are usually elements in $\hdenot{}{\Box\tya}$ that are not the denotation of any closed term.
The reader should remain calm; there are also usually elements in function-spaces that are not the denotation of any closed term.
The inflated elements in our semantics are an important part of our design.
\end{rmrk}

\begin{nttn}
\label{nttn.list.notation}
We will want to talk about nested pairs of the form $(x_1,(x_2,\dots,(x_n,x_{n+1})))$.
Accordingly we will use list notation, writing $x_1::x_2$ for $(x_1,x_2)$ and $x_1::\dots::x_n::x_{n+1}$ for $(x_1,(x_2,\dots,(x_n,x_{n+1})))$.
See for instance Remark~\ref{rmrk.next.remark}, Figure~\ref{fig.denot.terms}, and Subsection~\ref{subsect.den.letbox}.
\end{nttn}

\begin{rmrk}
\label{rmrk.next.remark}
Note that as standard, distinct syntax may have equal denotation.
For instance, $\Box(0+1)::1$ and $\Box(1+0)::1$ are not equal in $\hdenot{}{\nat}$.
\end{rmrk}

\begin{rmrk}
\label{rmrk.why.inflate}
Why do we inflate?
Surely it is both simpler and more intuitive to take 
$\hdenot{}{\Box\tya}$ to be $\{\Box r\mid \varnothing\cent \Box r:\Box\tya\}$.

We could do this, but then later on in Definition~\ref{defn.denot.terms} we would not be able to give a denotation to terms by induction on their syntax.

The problem is that our types, and terms, are designed to permit generation of syntax at modal type. 
Thus, our design brief is to allow dynamic (runtime) generation of syntax.
With the `intuitive' definition above, there is no guarantee of an inductively decreasing quantity; the runtime can generate syntax of any size.
To see this in detail, see Subsection~\ref{subsubsect.explain}.

The design of $\hdenot{}{\Box\tya}$ in Figure~\ref{fig.denot.types} gets around this by insisting, at the very moment we assert some denotation of a term $r$ of type $\Box\tya$---i.e. some syntax $r'$ of type $\tya$---to simultaneously volunteer a denotation for $r'$---i.e. an element in the denotation of $\tya$.
(As mentioned in Remark~\ref{rmrk.how.to.interpret.types} this denotation might be in some sense mistaken, but perhaps surprisingly that will not matter.)
\end{rmrk}

%%%%%%%%%%%%%%%%%%%%%%%%%%%%%%%%%%
\subsection{Denotation of terms}

We now set about interpreting terms in the denotation for types from Definition~\ref{defn.interpret.types}.
The main definition is Definition~\ref{defn.denot.terms}.
First, however, we need:
\begin{itemize*}
\item
some tools to handle the `syntax and purported denotation' design of $\hdenot{}{\Box\tya}$ (Definition~\ref{defn.hd.tl}); and 
\item
a suitable notion of valuation (Definition~\ref{defn.valuation}).
\end{itemize*}
We then discuss the design of the definitions.

Recall from Notation~\ref{nttn.list.notation} that we may use list notation and write $\Box r::x$ for $(\Box r,x)$.
\begin{defn}
\label{defn.hd.tl}
We define $\hd$ and $\tl$ on $x\in\hdenot{}{\tya}$ (Definition~\ref{defn.interpret.types}) as follows:
\begin{frametxt}
\begin{itemize*}
\item
If $x\in\hdenot{}{o}$ or $\hdenot{}{\nat}$ or $\hdenot{}{\tya\fto\tyb}$ then $\hd(x)=x$ and $\tl(x)$ is undefined. 
\item
If $(\Box r,x)\in\hdenot{}{\Box\tya}$ then $\hd((\Box r,x))=\Box r$ (first projection) and $\tl((\Box r,x))=x$ (second projection).
\end{itemize*}
\end{frametxt}
\end{defn}

\begin{defn}
\label{defn.valuation}
A \deffont{valuation} $\varsigma$ is a finite partial function on $\atoms\cup\unknowns$.
Write $\varsigma[X{\ssm}{x}]$ for the valuation such that:
\begin{itemize*}
\item
$(\varsigma[X{\ssm}{x}])(X)={x}$.
\item
$(\varsigma[X{\ssm}{x}])(Y)=\varsigma(Y)$ if $\varsigma(Y)$ is defined, for all $Y$ other than $X$.
\item
$(\varsigma[X{\ssm}{x}])(a)=\varsigma(a)$ if $\varsigma(a)$ is defined.
\item
$(\varsigma[X{\ssm}{x}])$ is undefined otherwise.
\end{itemize*}
Define $\varsigma[a{\ssm}{x}]$ similarly.
\end{defn}

\begin{defn}
\label{defn.gamma.varsigma}
Suppose $\Gamma$ is a typing context and $\varsigma$ a valuation.
Write $\Gamma\cent\varsigma$ when:
\begin{frametxt}
\begin{enumerate*}
\item
$\dom(\Gamma)=\dom(\varsigma)$.
\item
If $a\in\dom(\varsigma)$ then $a{:}\tya\in\Gamma$ for some $\tya$ and $\varsigma(a)\in\hdenot{}{\tya}$.
\item
If $X\in\dom(\varsigma)$ then $X{:}\Box\tya\in\Gamma$ for some $\tya$ and $\varsigma(X)\in\hdenot{}{\Box\tya}$.
\end{enumerate*}
\end{frametxt}
\end{defn}

\begin{rmrk}
\label{rmrk.box.values}
Unpacking Definition~\ref{defn.interpret.types}, clause~3 of Definition~\ref{defn.gamma.varsigma} (the one for $X$) means that $\varsigma(X)=\Box r'::x$ where $\varnothing\cent\Box r':\Box\tya$ and $x\in\hdenot{}{\tya}$.
Note also that by the form of the derivation rules in Figure~\ref{fig.modal.types}, it follows that $\varnothing\cent r':\tya$.
So an intuition for $\varsigma(X)$ (cf. Remark~\ref{rmrk.how.to.interpret.types}) is this---
\begin{quote}
``$\varsigma(X)$ is some closed syntax $r'$ (presented as $\Box r'\in\hdenot{}{\Box\tya}$), and a candidate denotation for it $x\in\hdenot{}{\tya}$'',
\end{quote}
---or more concisely this:
\begin{quote}
``$\varsigma(X)$ is a pair of syntax and denotation.''
\end{quote}
\end{rmrk}

\begin{defn}
\label{defn.varsigmaunknowns}
Write $\varsigma_\unknowns$ for the unknowns substitution (Definition~\ref{defn.substitution}) such that
\begin{frameqn}
\varsigma_\unknowns(X)=\hd(\varsigma(X))
\end{frameqn}
if $\varsigma(X)$ is defined, and $\varsigma_\unknowns$ is undefined otherwise.
\end{defn}

\begin{defn}
\label{defn.denot.terms}
For each constant $C:\tya$ other than $\top$, $\bot$, and $\tf{isapp}$ fix some interpretation $C^\hden$ which is an element $C^\hden\in\hdenot{}{\tya}$.
Suppose $\Gamma\cent\varsigma$ and $\Gamma\cent r:\tya$.
\begin{frametxt}
An \deffont{interpretation} of terms $\hdenot{\varsigma}{r}$ is defined in Figure~\ref{fig.denot.terms}.
\end{frametxt}
\end{defn}

\begin{figure}[t]
$\varsigma[a{\ssm}x]$ and $\varsigma[X{\ssm}x]$ from Definition~\ref{defn.valuation}.
$\varsigma_\unknowns$ from Definition~\ref{defn.varsigmaunknowns}.
$$
\begin{array}{r@{\ }l@{\qquad}l}
\hdenot{\varsigma}{\top}=&\top^\hden
\\
\hdenot{\varsigma}{\bot}=&\bot^\hden
\\
\hdenot{\varsigma}{a}=&\varsigma(a) &(a\in\dom(\varsigma))
\\
\hdenot{\varsigma}{\lam{a{:}\tya}r}=&(x{\in}\hdenot{}{\tya}\mapsto \hdenot{\varsigma[a{\ssm}x]}{r})
\\
\hdenot{\varsigma}{r'r}=&\hdenot{\varsigma}{r'}\,\hdenot{\varsigma}{r}
\\
\hdenot{\varsigma}{\Box r}=&(\Box(r\varsigma_\unknowns))::\hdenot{\varsigma}{r}
\\
\hdenot{\varsigma}{X_\at}=&\tl(\varsigma(X))
\\
\hdenot{\varsigma}{\letbox{X}{s}{r}}=&\hdenot{\varsigma[X{\ssm}\hdenot{\varsigma}{s}]}{r}
\\
\hdenot{\varsigma}{\tf{isapp}_\tya}(\Box(r'r''))=& \top^\hden 
\\
\hdenot{\varsigma}{\tf{isapp}_\tya}(\Box(r))=&\bot^\hden &(\Forall{r',r''}r\neq r'r'') 
\end{array}
$$
\caption{Denotational semantics of terms of the modal type system}
\label{fig.denot.terms}
\end{figure}

In Subsection~\ref{subsect.denot.discuss} we discuss the design of $\hdenot{\varsigma}{r}$, with examples.
In Subsection~\ref{subsect.denot.results} we prove some results about it.

%%%%%%%%%%%%%%%%%%%%%%%%%%%%%%%%%%%%%%%%%%%%%%%%%%%%%
\subsection{Discussion of the denotation}
\label{subsect.denot.discuss}

\subsubsection{About the term-formers}

The denotations of $\top$ and $\bot$ are as expected.
To give a denotation to an atom $a$, we just look it up using $\varsigma$, also as expected.
The definitions of $\lam{a{:}\tya}r$ and $r'r$ are also as standard.

As promised in Subsection~\ref{subsect.denot.types},\ $\hdenot{\varsigma}{\Box r}$ returns a pair of a syntax and its denotation.

$\tf{isapp}_\tya$ is there to illustrate concretely how we can express programming on syntax of box types: it takes a syntax argument and checks whether it is a syntactic application.%
\footnote{We know non-trivial pattern-matching on applications exists in our meta-logic because our meta-logic is English; $\hdenot{\varsigma}{\tf{isapp}}$ is a function on a set of syntax and we can define whatever operation we can define, on that set.} 
Of course many other such functions are possible, and if we want them we can add them as further constants (just as we might add $+$, $*$, and/or recursion as constants, given a type for numbers).

%%%%%%%%%%%%%%%%%%%%%%%%%
\subsubsection{Example: denotation of $\letbox{X}{\Box(1+2)}{\Box\Box X_\at}$}
\label{subsect.den.letbox}

To illustrate how Figure~\ref{fig.denot.terms} works, we calculate the denotation of $\letbox{X}{\Box(1+2)}{\Box\Box X_\at}$.
We reason as follows, where for compactness and clarity we write $\varsigma$ for the valuation $[X{\ssm}\Box(1{+}2)::3]$:
$$
\hspace{-2ex}\begin{array}{r@{\ }l@{\qquad}l}
\hdenot{\varnothing}{\letbox{X}{\Box(1{+}2)}{\Box\Box X}}=&\hdenot{[X{\ssm}\hdenot{\varnothing}{\Box(1{+}2)}]}{\Box\Box X_\at}
\\
=&\hdenot{\varsigma}{\Box\Box X_\at}
\\
=&\Box((\Box X_\at)[X{\ssm}\Box(1{+}2)]) :: \hdenot{\varsigma}{\Box X_\at}
\\
=&\Box\Box(1{+}2) :: \hdenot{\varsigma}{\Box X_\at}
\\
=&\Box\Box(1{+}2) :: \Box(X_\at[X{\ssm}\Box(1{+}2)]) :: \hdenot{\varsigma}{X_\at}
\\
=&\Box\Box(1{+}2) :: \Box(1{+}2) :: \hdenot{\varsigma}{X_\at}
\\
=&\Box\Box(1{+}2) :: \Box(1{+}2) :: \tl(\Box(1{+}2)::3)
\\
=&\Box\Box(1{+}2) :: \Box(1{+}2) :: 3 
\end{array}
$$
We leave it to the reader to verify that $\hdenot{\varnothing}{\Box(1{+}2)}=\Box(1{+}2)::3$ and that $X_\at[X{\ssm}\Box(1{+}2)]=1{+}2$.

Note that `$1+2$' and `$\Box(1+2)$' are different; $1+2$ denotes $3$ whereas $\Box(1+2)$ denotes the pair `The syntax $1+2$, with associated extension $3$'.
In some very special cases where the set of possible denotations is rather small (finite or countable), the distinction between terms and their denotations can be hard to see, though it is still there.
Usually sets of denotations are `quite large' and sets of syntax are `quite small', but sometimes this relationship is reversed: there are `somewhat more' terms denoting numbers, than numbers\footnote{$6+5$ and $5+6$ denote the same number, whose calculation we leave as an exercise to the energetic reader.} (but much fewer terms denoting functions from numbers to numbers than functions from numbers to numbers).
See Corollary~\ref{corr.later} and Example~\ref{xmpl.unevaluate.nat}.

Note also the difference between the valuation $\varsigma=[X{\ssm}\Box(1{+}2)::3]$ and the substitution $[X{\ssm}\Box(1{+}2)]$.
The first is a valuation because it maps $X$ to $\hdenot{}{\Box\nat}$, the second is a substitution because it makes $X$ to a term of type $\Box\nat$. 

Sometimes a mapping can be both valuation and substitution; for instance $[a{\ssm}3]$ is a valuation ($a$ maps to an element of $\hdenot{}{\nat}$), and is also a substitution.

%%%%%%%%%%%%%%%%%%%%%%%%%%%%%%%%%%
\subsubsection{Why the natural version does not work}
\label{subsubsect.explain}

Natural versions of Definitions~\ref{defn.interpret.types} and~\ref{defn.denot.terms} take
\begin{itemize*}
\item
the denotation of box type to be just boxed syntax rather than a pair of boxed syntax and denotation $\hdenot{}{\Box\tya}=\{\Box r\mid \varnothing\cent\Box r:\Box\tya\}$, and
\item 
$\hdenot{\varsigma}{\Box r}=\Box(r\varsigma_\unknowns)$ and
\item
$\hdenot{\varsigma}{X_\at}=\hdenot{\varnothing}{r}$ where $\varsigma(X)=\Box r$.
\end{itemize*}
However, this seems not to work; $\varsigma(X)$ need not necessarily be a smaller term than $X$ so the `definition' above is not inductive.
This is not just a hypothetical issue: a term of the form $\hdenot{\varsigma}{\letbox{X}{s}{r}}$ may cause $\varsigma(X)$ to be equal to $\hdenot{\varsigma}{s}$, and $s$ might generate syntax of any size.

The previous paragraph is not a mathematical proof; aside from anything else we have left the notion `size of term' unspecified.
The reader can experiment with different candidates: obvious `subterm of', `depth of', and `number of symbols' of are all vulnerable to the problem described above, as is a more sophisticated notion of size which gives $X$ size $\omega$ the least infinite cardinal---since we can generate multiple copies of terms of the form $\letbox{X}{r}{s}$, and even if this is closed it can contain bound copies of $X$.

%%%%%%%%%%%%%%%%%%%%%%%%%%%%%%%%%%%%%%%%
\subsubsection{Example: denotation of $\f{exp}\,2$}
\label{subsubsect.exp.denot}

Recalling Subsection~\ref{subsubsect.exp}, we calculate the denotation of $\hdenot{\varnothing}{\f{exp}\,2}$ where $\f{exp}$ is specified by: 
$$
\begin{array}[t]{r@{\ }l}
\f{exp}\ 0\Rightarrow&\Box\lam{b{:}\nat}1 
\\ 
\f{exp}\ (\f{succ}(n))\Rightarrow&\letbox{X}{\f{exp}\,n}{\Box (\lam{b{:}\nat}b*(X_\at b) )} . 
\end{array}
$$
We sketch part of the calculation:
$$
\begin{array}{r@{\ }l}
\hdenot{\varnothing}{\f{exp}\,(\f{succ}\,(\f{succ}\, 0))}=&\hdenot{\varnothing}{\letbox{X}{\f{exp}\,(\f{succ}\, 0)}{\Box(\lam{b{:}\nat}b*(X_\at b)})}
\\[1ex]
=&\hdenot{[X{\ssm}\hdenot{\varnothing}{\f{exp}\,(\f{succ}\,0)}]}{\Box(\lam{b{:}\nat}b*(X_\at b))}
\\[1.5ex]
=&\Box(\lam{b{:}\nat}b*(X_\at b))[X{\ssm}\hd\hdenot{\varnothing}{\f{exp}\,(\f{succ}\,0))}]
\\
&\quad ::\hdenot{[X{\ssm}\hdenot{\varnothing}{\f{exp}\,(\f{succ}\,0)}]}{\lam{b{:}\nat}b*(X_\at b)}
\\
\raisebox{3pt}{$\cdots$}&
\\
=&\Box(\lam{b{:}\nat}b*(\lam{b{:}\nat}b*((\lam{b{:}\nat}1)b)b))::(x\in\nat{\mapsto}x*x)
\end{array}
$$

%%%%%%%%%%%%%%%%%%%%%%%%%%%%%%%%%%%%%%%%
\subsubsection{Example: denotation of terms for axioms \rulefont{T} and \rulefont{4}}
\label{subsubsect.4.denot}

In Subsection~\ref{subsubsection.one.line} we considered the terms
$$
\begin{array}{l}
\lam{a{:}\Box\tya}\letbox{X}{a}{X_\at}\ :\ \Box\tya\fto\tya\quad\text{and}
\\
\lam{a{:}\Box\tya}\letbox{X}{a}{\Box\Box X_\at}\ :\ \Box\tya\fto\Box\Box\tya 
\end{array}
$$
which implement the modal logic axioms \rulefont{T} and \rulefont{4}.
We now describe their denotations, without working:
\begin{itemize*}
\item
$\hdenot{\varnothing}{\lam{a{:}\Box\tya}\letbox{X}{a}{X_\at}}$ maps $\Box r::tl\in\hdenot{}{\Box\tya}$ to $tl$.
\item
$\hdenot{\varnothing}{\lam{a{:}\Box\tya}\letbox{X}{a}{\Box\Box X_\at}}$ maps $\Box r::tl\in\hdenot{}{\Box\tya}$ to $\Box\Box r::\Box r::tl$.
\end{itemize*}

%%%%%%%%%%%%%%%%%%%%%%%%%%%%%%%%%%%%%%%%%%%%%%
\subsection{Results about the denotation}
\label{subsect.denot.results}

We need a technical result and some notation for Proposition~\ref{prop.typesub}:
\begin{lemm}
\label{lemm.varsigma.to.varsigmaun}
If $\Gamma\cent\varsigma$ (Definition~\ref{defn.gamma.varsigma})
then $\Gamma\cent\varsigma_\unknowns$ (Definition~\ref{defn.gamma.theta}).
\end{lemm}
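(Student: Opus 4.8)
The plan is to prove this by directly unwinding the definitions; no induction is required. Fix $X\in\dom(\varsigma_\unknowns)$; by Definition~\ref{defn.varsigmaunknowns} this means $\varsigma(X)$ is defined, so $X\in\dom(\varsigma)$. Since $\Gamma\cent\varsigma$, clause~3 of Definition~\ref{defn.gamma.varsigma} supplies a type $\tya$ with $X{:}\Box\tya\in\Gamma$ and $\varsigma(X)\in\hdenot{}{\Box\tya}$. The first of these is exactly the membership condition demanded by Definition~\ref{defn.gamma.theta}, so all that remains is to establish the typing judgement $\Gamma\cent\varsigma_\unknowns(X):\Box\tya$.

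Next I would compute $\varsigma_\unknowns(X)$ explicitly. Unfolding $\hdenot{}{\Box\tya}$ from Figure~\ref{fig.denot.types}, the element $\varsigma(X)$ is a pair $\Box r::x$ whose first component satisfies $\varnothing\cent\Box r:\Box\tya$ and whose second lies in $\hdenot{}{\tya}$. By Definition~\ref{defn.hd.tl} the head projection gives $\hd(\varsigma(X))=\Box r$, and by Definition~\ref{defn.varsigmaunknowns} this is precisely $\varsigma_\unknowns(X)$. Hence $\varsigma_\unknowns(X)=\Box r$, and the denotation has conveniently packaged with it the empty-context derivation $\varnothing\cent\Box r:\Box\tya$.

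It then remains only to promote this empty-context typing to one over $\Gamma$. Because $\Box r$ types in the empty context, inspection of the rules of Figure~\ref{fig.modal.types} shows it can have no free variables at all---the only variable-introducing rules \rulefont{Hyp} and \rulefont{Ext} require the variable to sit in the context---so $\fa(\Box r)=\fv(\Box r)=\varnothing$. Weakening (Proposition~\ref{prop.ws}) applied to $\varnothing\cent\Box r:\Box\tya$ then yields $\Gamma\cent\Box r:\Box\tya$, completing the verification of $\Gamma\cent\varsigma_\unknowns$.

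The only step carrying any content is this final promotion from $\varnothing$ to $\Gamma$, and even it is routine once one observes closedness; everything preceding it is definition-chasing. I therefore expect no genuine obstacle, the one thing to get right being the bookkeeping between a valuation (which records a syntax--denotation pair in $\hdenot{}{\Box\tya}$) and the induced unknowns-substitution $\varsigma_\unknowns$ (which retains only the syntactic head via $\hd$).
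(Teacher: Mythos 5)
Your proof is correct and follows essentially the same route as the paper's: unwind Definitions~\ref{defn.varsigmaunknowns} and~\ref{defn.gamma.varsigma} to see that $\varsigma_\unknowns(X)=\hd(\varsigma(X))=\Box r$ with $\varnothing\cent\Box r:\Box\tya$, then conclude. You are in fact slightly more explicit than the paper, which ends with ``and we are done'' and leaves the final promotion from $\varnothing$ to $\Gamma$ via Proposition~\ref{prop.ws} implicit.
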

\begin{proof}
If $X\not\in\dom(\varsigma)$ then $X\not\in\dom(\varsigma_\unknowns)$.

Suppose $X\in\dom(\varsigma)$.
By Definition~\ref{defn.varsigmaunknowns} $\varsigma_\unknowns(X)=\hd(\varsigma(X))$. 
By Definition~\ref{defn.gamma.varsigma} $\varsigma_\unknowns(X)\in\hdenot{}{\Box\tya}$ for some $\tya$.
Unpacking Figure~\ref{fig.denot.types} this implies that $\varsigma_\unknowns(X)=\Box r$ for some $\varnothing\cent r:\tya$, and we are done.
\end{proof}

Proposition~\ref{prop.typesub} relies on a dual role played by syntax in $\varsigma_\unknowns$.
It is coerced between denotation and syntax in $\rulefont{\Box I}$, and `in the other direction' in \rulefont{Ext}. 
Proposition~\ref{prop.typesub} expresses this important dynamic in the mathematics of the paper. 
Technically, the result is needed for the case of \rulefont{\Box I} in the proof of Theorem~\ref{thrm.type.soundness}.
Recall the notation $\Gamma|_U$ from Notation~\ref{nttn.restrict}.
\begin{prop}
\label{prop.typesub}
Suppose $\Gamma\cent r:\tya$ and $\Gamma\cent \varsigma$.
Then $\Gamma|_\atoms\cent r\,\varsigma_\unknowns:\tya$. 

\noindent ($\varsigma_\unknowns$ is defined in Definition~\ref{defn.varsigmaunknowns}; its action on $r$ is defined in Definition~\ref{defn.sub}.)
\end{prop}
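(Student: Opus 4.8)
The plan is to combine the typing-under-substitution result Proposition~\ref{prop.rtheta} with the Strengthening direction of Proposition~\ref{prop.ws}. The substitution $\varsigma_\unknowns$ is exactly the bridge that turns the valuation $\varsigma$ into an unknowns substitution, and the crux of the argument is that the syntactic content stored in $\varsigma$ is \emph{closed}, so that applying $\varsigma_\unknowns$ eliminates every free unknown from $r$.

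First I would invoke Lemma~\ref{lemm.varsigma.to.varsigmaun} to obtain $\Gamma\cent\varsigma_\unknowns$ (in the sense of Definition~\ref{defn.gamma.theta}) from the hypothesis $\Gamma\cent\varsigma$. This discharges the premise of Proposition~\ref{prop.rtheta}, which applied to $\Gamma\cent r:\tya$ yields $\Gamma\cent r\,\varsigma_\unknowns:\tya$. This is already the right type in \emph{almost} the right context; all that remains is to strengthen $\Gamma$ down to $\Gamma|_\atoms$, i.e.\ to delete every unknown typing from the context.

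The key step, and the main obstacle, is to show $\fv(r\,\varsigma_\unknowns)=\varnothing$, since this is what licenses removing all the unknown typings. Two facts combine here. On the one hand, since $\Gamma\cent r:\tya$ the free unknowns of $r$ are all typed in $\Gamma$, and because $\Gamma\cent\varsigma$ forces $\dom(\Gamma)=\dom(\varsigma)$ (Definition~\ref{defn.gamma.varsigma}), we get $\fv(r)\subseteq\dom(\varsigma)\cap\unknowns=\dom(\varsigma_\unknowns)$. On the other hand, unpacking Lemma~\ref{lemm.varsigma.to.varsigmaun} together with the shape of $\hdenot{}{\Box\tya}$ in Figure~\ref{fig.denot.types}, each $\varsigma_\unknowns(X)$ is some $\Box r'$ with $\varnothing\cent r':\tyb$, and so is a \emph{closed} term with $\fv(\varsigma_\unknowns(X))=\varnothing$. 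Since every free unknown of $r$ lies in $\dom(\varsigma_\unknowns)$ and each is replaced by something with no free unknowns, a routine computation of $\fv$ under substitution gives $\fv(r\,\varsigma_\unknowns)=\varnothing$.

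Finally I would apply the Strengthening half of Proposition~\ref{prop.ws}. Because $\fv(r\,\varsigma_\unknowns)=\varnothing$, we have $\fv(r\,\varsigma_\unknowns)\cup\fa(r\,\varsigma_\unknowns)\subseteq\atoms$, and on atoms $\Gamma|_\atoms$ agrees with $\Gamma$. Hence $(\Gamma|_\atoms)|_{\fv(r\varsigma_\unknowns)\cup\fa(r\varsigma_\unknowns)}=\Gamma|_{\fv(r\varsigma_\unknowns)\cup\fa(r\varsigma_\unknowns)}$, so Proposition~\ref{prop.ws} upgrades $\Gamma\cent r\,\varsigma_\unknowns:\tya$ to $\Gamma|_\atoms\cent r\,\varsigma_\unknowns:\tya$, which is exactly the claim.
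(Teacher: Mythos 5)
Your proof is correct and follows essentially the same route as the paper's: Lemma~\ref{lemm.varsigma.to.varsigmaun} to get $\Gamma\cent\varsigma_\unknowns$, then Proposition~\ref{prop.rtheta}, then strengthening via Proposition~\ref{prop.ws}. In fact you spell out one step the paper leaves implicit --- that $\fv(r\,\varsigma_\unknowns)=\varnothing$ because $\dom(\Gamma)=\dom(\varsigma)$ and each $\varsigma_\unknowns(X)$ is closed, which is what actually licenses deleting the unknown typings from the context (the paper's proof only cites Lemma~\ref{lemm.fa.rtheta} for the atoms side).
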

\begin{proof}
By Lemma~\ref{lemm.varsigma.to.varsigmaun} $\Gamma\cent\varsigma_\unknowns$.
By Proposition~\ref{prop.rtheta} $\Gamma\cent r\,\varsigma_\unknowns:\tya$.
By Lemma~\ref{lemm.fa.rtheta} $\fa(r\varsigma_\unknowns)=\fa(r)$.
Now it is a fact that $\fa(r)\subseteq\dom(\Gamma|_\atoms)$, so by Proposition~\ref{prop.ws} $\Gamma|_\atoms\cent r\,\varsigma_\unknowns:\tya$ as required.
\end{proof}

\begin{frametxt}
\begin{thrm}[Soundness]
\label{thrm.type.soundness}
If $\Gamma\cent r:\tya$ and $\Gamma\cent\varsigma$ then $\hdenot{\varsigma}{r}$ is defined and $\hdenot{\varsigma}{r}\in\hdenot{}{\tya}$.
\end{thrm}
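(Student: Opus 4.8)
The plan is to argue by induction on the derivation of $\Gamma\cent r:\tya$ (equivalently, by induction on $r$ following the rules of Figure~\ref{fig.modal.types}), keeping $\Gamma$, $\varsigma$ and $\tya$ universally quantified so that the inductive hypothesis stays available after the context and valuation are extended in the \rulefont{{\fto}I} and \rulefont{\Box E} cases. Definedness of $\hdenot{\varsigma}{r}$ will come out of the same induction: the only clauses of Figure~\ref{fig.denot.terms} that can fail are the lookups $\varsigma(a)$ and $\varsigma(X)$, and these succeed because $\dom(\varsigma)=\dom(\Gamma)$ contains the relevant variable. So I concentrate on the membership claim.

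The base cases are read off from Definition~\ref{defn.gamma.varsigma}. For \rulefont{Hyp}, $\hdenot{\varsigma}{a}=\varsigma(a)$ and clause~2 of $\Gamma\cent\varsigma$ gives $\varsigma(a)\in\hdenot{}{\tya}$. For \rulefont{Ext}, $\hdenot{\varsigma}{X_\at}=\tl(\varsigma(X))$; clause~3 gives $\varsigma(X)\in\hdenot{}{\Box\tya}$, and by Remark~\ref{rmrk.box.values} this is a pair $\Box r'{::}x$ with $x\in\hdenot{}{\tya}$, so $\tl(\varsigma(X))=x\in\hdenot{}{\tya}$. For \rulefont{Const}: $\top$ and $\bot$ land in $\hdenot{}{o}$ by construction; a general $C{:}\tya$ lands in $\hdenot{}{\tya}$ since $C^\hden$ was fixed there in Definition~\ref{defn.denot.terms}; and for $\tf{isapp}_\tya$ one checks that the two clauses of Figure~\ref{fig.denot.terms} define a \emph{total} function on $\hdenot{}{\Box\tya}$ (case-splitting on whether the syntactic first component is an application), hence an element of $\hdenot{}{o}^{\hdenot{}{\Box\tya}}=\hdenot{}{(\Box\tya)\fto o}$.

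The inductive cases \rulefont{{\fto}I}, \rulefont{{\fto}E} and \rulefont{\Box E} are routine, the only thing to verify being that the extended valuation still matches the extended context. For \rulefont{{\fto}I}, given $x\in\hdenot{}{\tya}$ we have $\Gamma,a{:}\tya\cent\varsigma[a{\ssm}x]$ (Definitions~\ref{defn.valuation} and~\ref{defn.gamma.varsigma}), so the inductive hypothesis yields $\hdenot{\varsigma[a{\ssm}x]}{r}\in\hdenot{}{\tyb}$; hence $x\mapsto\hdenot{\varsigma[a{\ssm}x]}{r}$ is a well-defined element of $\hdenot{}{\tyb}^{\hdenot{}{\tya}}=\hdenot{}{\tya\fto\tyb}$. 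For \rulefont{{\fto}E}, the inductive hypothesis gives $\hdenot{\varsigma}{r'}\in\hdenot{}{\tya\fto\tyb}$ and $\hdenot{\varsigma}{r}\in\hdenot{}{\tya}$, and applying the former to the latter lands in $\hdenot{}{\tyb}$. For \rulefont{\Box E}, the inductive hypothesis on $s$ gives $\hdenot{\varsigma}{s}\in\hdenot{}{\Box\tya}$, whence $\Gamma,X{:}\Box\tya\cent\varsigma[X{\ssm}\hdenot{\varsigma}{s}]$, and the inductive hypothesis on $r$ closes the case.

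The one case with real content is \rulefont{\Box I}, and this is where I expect the main obstacle. Here $r=\Box r'$ with $\fa(r')=\varnothing$ and $\Gamma\cent r':\tya$, and $\hdenot{\varsigma}{\Box r'}=\Box(r'\varsigma_\unknowns){::}\hdenot{\varsigma}{r'}$. The second component lies in $\hdenot{}{\tya}$ by the inductive hypothesis, but for the pair to lie in $\hdenot{}{\Box\tya}$ (Figure~\ref{fig.denot.types}) I must check that the \emph{syntactic} first component really is a legal closed boxed term, i.e.\ $\varnothing\cent\Box(r'\varsigma_\unknowns):\Box\tya$. This is exactly what Proposition~\ref{prop.typesub} supplies: it gives $\Gamma|_\atoms\cent r'\varsigma_\unknowns:\tya$, and Lemma~\ref{lemm.fa.rtheta} together with $\fa(r')=\varnothing$ gives $\fa(r'\varsigma_\unknowns)=\varnothing$; since $\Gamma|_\atoms$ contains no unknowns, typability forces $\fv(r'\varsigma_\unknowns)=\varnothing$ as well, so $r'\varsigma_\unknowns$ is closed and Proposition~\ref{prop.ws} strengthens the judgement to $\varnothing\cent r'\varsigma_\unknowns:\tya$. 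Applying \rulefont{\Box I} yields $\varnothing\cent\Box(r'\varsigma_\unknowns):\Box\tya$, so the first component is admissible and the pair lies in $\hdenot{}{\Box\tya}$. The subtlety here---and the reason soundness of typing under the unknowns-substitution (Proposition~\ref{prop.rtheta}) was developed first---is precisely that the denotation embeds syntax, so establishing membership in $\hdenot{}{\Box\tya}$ is a \emph{typing} obligation rather than a purely semantic one.
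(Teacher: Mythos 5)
Your proposal is correct and follows essentially the same route as the paper: induction on the typing derivation, with the routine cases handled by extending the valuation in step with the context, and the crux at \rulefont{\Box I} discharged by Proposition~\ref{prop.typesub} (plus Lemma~\ref{lemm.fa.rtheta} and Proposition~\ref{prop.ws}) to establish the typing obligation $\varnothing\cent\Box(r'\varsigma_\unknowns):\Box\tya$ on the syntactic component of the pair. If anything you are slightly more explicit than the paper, which compresses the strengthening from $\Gamma|_\atoms$ to $\varnothing$ into a single appeal to Proposition~\ref{prop.typesub}.
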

\end{frametxt}
\begin{proof}
By induction on the derivation of $\Gamma\cent r:\tya$.
Most of the rules follow by properties of sets and functions.
We consider the interesting cases:
\begin{itemize*}
\item
\emph{Rule \rulefont{\Box I}.}\quad
Suppose $\Gamma\cent r:\tya$ and $\fa(r){=}\varnothing$ so that by \rulefont{\Box I} $\Gamma\cent\Box r:\tya$.

Suppose $\Gamma\cent\varsigma$.
Then by inductive hypothesis $\hdenot{\varsigma}{r}\in\hdenot{}{\tya}$.
Also, by Proposition~\ref{prop.typesub} $\varnothing\cent r\varsigma_\unknowns:\tya$ 

It follows by Definition~\ref{defn.interpret.types} that
$$
\hdenot{\varsigma}{\Box r}=(\Box(r\varsigma_\unknowns))::\hdenot{\varsigma}{r}\in\hdenot{}{\Box\tya}
$$ 
as required.
\item
\emph{Rule \rulefont{\Box E}.}\quad
Suppose $\Gamma,X{:}\Box\tya\cent r:\tyb$ and $\Gamma\cent s:\Box\tya$ so that by \rulefont{\Box E} $\Gamma\cent\letbox{X}{s}{r}:\tya$.

Suppose $\Gamma\cent\varsigma$.
By inductive hypothesis for $\Gamma\cent s:\Box\tya$ we have $\hdenot{\varsigma}{s}\in\hdenot{}{\Box\tya}$ and so there is some term $s'$ and some $x\in\hdenot{}{\tya}$ such that $(\Box s')::x=\hdenot{\varsigma}{s}$ and $\varnothing\cent \Box s':\Box\tya$.
Unpacking Definition~\ref{defn.gamma.varsigma}, $\Gamma,X{:}\Box\tya\cent\varsigma[X{\ssm}(\Box s')::x]$.
By inductive hypothesis for $\Gamma,X{:}\Box\tya\cent r:\tyb$ we have 
$$
\hdenot{\varsigma[X{\ssm}(\Box s')::x]}{r}\in\hdenot{}{\tyb}
$$
and using Definition~\ref{defn.denot.terms} we have 
$$
\hdenot{\varsigma}{\letbox{X}{s}{r}}=\hdenot{\varsigma[X{\ssm}(\Box s')::x]}{r}\in\hdenot{}{\tyb}
$$
as required.
\item
\emph{Rule \rulefont{Ext}.}\quad
By \rulefont{Ext} $\Gamma,X{:}\Box\tya\cent X_\at{:}\Box\tya$.

Suppose $\Gamma,X{:}\Box\tya\cent\varsigma$.
Unpacking Definition~\ref{defn.gamma.varsigma}, this means that $\varsigma(X)=(\Box s')::x$ for some $s'$ and $x$ such that $\varnothing\cent\Box s':\Box\tya$ and $x\in\hdenot{}{\tya}$.
From Definition~\ref{defn.denot.terms} $\hdenot{\varsigma}{X_\at}=x\in\hdenot{}{\tya}$ as required.
\item
\emph{Rule \rulefont{Hyp}.}\quad
Suppose $\Gamma,a{:}\tya\cent\varsigma$.
By Definition~\ref{defn.gamma.varsigma} this means that $\varsigma(a)\in\hdenot{}{\tya}$.
By Definition~\ref{defn.denot.terms} $\hdenot{\varsigma}{a}=\varsigma(a)$.
The result follows.
\end{itemize*}
\end{proof}

\begin{frametxt}
\begin{corr}
\label{corr.later}
There is no term $s$ such that 
$\varnothing\cent s:(\nat\fto\nat)\fto\Box(\nat\fto\nat)$
is typable and such that the map $\lam{x{\in}\nat^\nat}\hd(\hdenot{\varnothing}{s}\,x)\in \hd(\hdenot{}{\Box(\nat\fto\nat)})^{\hdenot{}{\nat\fto\nat}}$ is injective.
\end{corr}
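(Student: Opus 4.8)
The plan is to run a cardinality argument. The key observation is that the domain of the map, namely $\hdenot{}{\nat\fto\nat}=\nat^\nat$, is uncountable, whereas its codomain $\hd(\hdenot{}{\Box(\nat\fto\nat)})$ is a set of closed syntax and is therefore countable; no function from the former to the latter can be injective, regardless of which $s$ we start from.

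First I would invoke Soundness (Theorem~\ref{thrm.type.soundness}): since $\varnothing\cent s:(\nat\fto\nat)\fto\Box(\nat\fto\nat)$, the denotation $\hdenot{\varnothing}{s}$ is defined and lies in $\hdenot{}{\Box(\nat\fto\nat)}^{\hdenot{}{\nat\fto\nat}}$, so it is a total function from $\nat^\nat$ to $\hdenot{}{\Box(\nat\fto\nat)}$. Composing with the first projection $\hd$ (Definition~\ref{defn.hd.tl}) yields a total function from $\nat^\nat$ to $\hd(\hdenot{}{\Box(\nat\fto\nat)})$, which is precisely the map named in the statement. It therefore suffices to show that no function of this source and target can be injective.

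Next I would pin down the two cardinalities. By Definition~\ref{defn.interpret.types}, $\hdenot{}{\nat\fto\nat}=\{0,1,2,\dots\}^{\{0,1,2,\dots\}}=\nat^\nat$, which is uncountable by Cantor's diagonal argument. On the other hand, unpacking $\hd$ on the product in Definition~\ref{defn.interpret.types} shows that $\hd(\hdenot{}{\Box(\nat\fto\nat)})$ is exactly $\{\Box r\mid\varnothing\cent\Box r:\Box(\nat\fto\nat)\}$, a set of terms taken up to $\alpha$-equivalence. Since terms are finite expressions built by the grammar of Definition~\ref{defn.terms} over the countable alphabets $\atoms$ and $\unknowns$ together with a countable set of constants, there are only countably many terms, so this codomain is countable. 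An injection from an uncountable set into a countable set does not exist, hence the map cannot be injective and no $s$ with the stated property exists.

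The only point requiring care — less a difficulty than a thing to state explicitly — is the countability of the codomain: one must record that syntax is enumerable, which rests on terms being finite and on the underlying alphabets and constant set being countable. Granting that, the remainder is a one-line cardinality mismatch. This is also what makes the footnote's caveat precise: for a type such as $\nat$, whose denotation is only countable, there is no cardinality obstruction and a specific inverting term can be written (Example~\ref{xmpl.unevaluate.nat}); it is exactly the uncountability of $\nat^\nat$ that blocks the map here.
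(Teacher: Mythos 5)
Your proposal is correct and follows exactly the paper's argument: the paper's own proof is the same two-step cardinality mismatch ($\hdenot{}{\nat\fto\nat}$ uncountable versus the countable set of closed syntax in $\hd(\hdenot{}{\Box(\nat\fto\nat)})$), with Theorem~\ref{thrm.type.soundness} invoked to ensure the denotation lands where it should. Your explicit remark that countability of the codomain rests on terms being finite expressions over countable alphabets is a detail the paper leaves implicit, but it is not a different route.
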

\end{frametxt}
\begin{proof}
$\hdenot{}{\nat\fto\nat}$ is an uncountable set whereas $\hd(\hdenot{}{\Box(\nat\fto\nat)})=\{r\mid \varnothing\cent r:\nat\fto\nat\}$ is countable.
The result follows from Theorem~\ref{thrm.type.soundness}. 
\end{proof}

\begin{xmpl}
\label{xmpl.unevaluate.nat}
By Corollary~\ref{corr.later} there can be no term representing a function %of type $\tya \fto \Box \tya$
which reifies an element of $\hdenot{}{\tya}$ to corresponding syntax.

Of course, there might be a term which reifies those elements of $\hdenot{}{\tya}$ that are representable by syntax.
For specific `sufficiently small' $\tya$, this might even include all of $\hdenot{}{\tya}$.

For example, if $\tya = \nat$ then the following function does the job:
$$
\begin{array}[t]{r@{\ }l}
\f{reifyNat}\ 0\Rightarrow&\Box 0
\\ 
\f{reifyNat}\ (\tf{succ}(n))\Rightarrow&\letbox{X}{\f{reifyNat}(n)}{\Box (X_\at{+}1)}.
\end{array}
$$
\end{xmpl}

\begin{rmrk}
Similar arguments to those used in Corollary~\ref{corr.later} and Example~\ref{xmpl.unevaluate.nat} also justify why the Haskell programming language has a \emph{Show} function for certain types, but not for function types.\footnote{See \url{haskell.org/haskellwiki/Show_instance_for_functions}, retrieved on January 20, 2012.} 
We chose full function spaces in Figure~\ref{fig.denot.terms}, so that the models for which we prove soundness in Theorem~\ref{thrm.type.soundness} would be large, and we did that so that the proof of Corollary~\ref{corr.later} would become relatively easy. 
Careful consideration has gone into the precise designs of $\hdenot{}{\tyb}^\hdenot{}{\tya}$ and $\hdenot{}{\Box\tya}$.

We will later on in Corollary~\ref{corr.later.c} prove a similar result for the contextual system, and then later still in Corollary~\ref{corr.more.c} surprisingly leverage this to a result which even works for functions to all of $\hdenot{}{\Box(\nat\fto\nat)}$ rather than just to the (much smaller) $\hd(\hdenot{}{\Box(\nat\fto\nat)})$.
\end{rmrk}

%%%%%%%%%%%%%%%%%%%%%%%%%%%%%%%%
\section{Reduction}
\label{sect.reduction}

We have Theorem~\ref{thrm.type.soundness} (soundness) and Corollary~\ref{corr.later} (impossibility in general of reifying denotation to syntax).
The other major property of interest is that typing and denotation are consistent with a natural notion of reduction on terms.

So we now turn our attention to the lemmas leading up to Proposition~\ref{prop.type.soundness} and Theorem~\ref{thrm.red.sound}.

\subsection{Results concerning substitution on atoms}

Recall from Definition~\ref{defn.sub} the definition of the atoms-substitution action.
Lemma~\ref{lemm.typesub.a} is a counterpart to Proposition~\ref{prop.typesub}.
We had to prove Proposition~\ref{prop.typesub} earlier because calculating the denotation $\hdenot{\varsigma}{\Box r}$ in Figure~\ref{fig.denot.terms} involves calculating $r\varsigma_\unknowns$ (an unknowns-substitution applied to a term).\footnote{In the contextual system, calculating the denotation will involve atoms-substitution as well.}
Now we are working towards reduction, and $\beta$-reduction can generate atoms-substitution, so we need Lemma~\ref{lemm.typesub.a}.
\begin{lemm}
\label{lemm.typesub.a}
Suppose $\Gamma,a{:}\tyb\cent r:\tya$ and $\Gamma\cent s:\tyb$. 
Then $\Gamma\cent r[a{\ssm}s]:\tya$.
\end{lemm}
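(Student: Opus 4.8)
The plan is to argue by induction on the derivation of $\Gamma,a{:}\tyb\cent r:\tya$ (equivalently, by induction on the structure of $r$), following the rules of Figure~\ref{fig.modal.types} together with the substitution clauses of Definition~\ref{defn.sub}. In each case I compute $r[a{\ssm}s]$ using the relevant clause and then reassemble a typing derivation from the inductive hypotheses, using Proposition~\ref{prop.ws} to adjust contexts wherever a binder is crossed.

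The base cases are \rulefont{Hyp}, \rulefont{Const}, and \rulefont{Ext}. For \rulefont{Hyp} I split on whether the atom in question is $a$: if $r=a$ then the derivation forces $\tya=\tyb$ and $a[a{\ssm}s]=s$, so $\Gamma\cent s:\tyb$ gives exactly what is wanted; if $r=c$ with $c\neq a$ then $c[a{\ssm}s]=c$ and $c{:}\tya\in\Gamma$, so \rulefont{Hyp} applies directly. For \rulefont{Const} and \rulefont{Ext} the substitution acts trivially ($C[a{\ssm}s]=C$ and $X_\at[a{\ssm}s]=X_\at$ by Definition~\ref{defn.sub}), so the very same rule retypes the result.

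The step case \rulefont{{\to}E} is immediate from the inductive hypotheses applied to the two premises. The binding cases \rulefont{{\to}I} and \rulefont{\Box E} are where the small amount of care is needed: I first $\alpha$-rename the bound name (the atom $c$, respectively the unknown $X$) so that it avoids $a$ and $\fa(s)$, which lets the substitution commute past the binder as in Definition~\ref{defn.sub}; I then apply the inductive hypothesis in a context extended by the bound typing, which requires first weakening $\Gamma\cent s:\tyb$ to that extended context via Proposition~\ref{prop.ws}, and finally reapply the relevant introduction or elimination rule. The case \rulefont{\Box I} carries the only modal content, but it is in fact the easiest: for the boxed subterm $r'$ we have $\fa(r')=\varnothing$, so $a\notin\fa(r')$ and Lemma~\ref{lemm.standard.fv.sub} gives $r'[a{\ssm}s]=r'$; the substitution is therefore vacuous, and $\Gamma\cent\Box r':\Box\tya'$ follows from the premise together with Proposition~\ref{prop.ws} (to strengthen away $a{:}\tyb$) and a reapplication of \rulefont{\Box I}.

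I do not expect any genuine obstacle: this is the standard typed substitution lemma, and the modal rule \rulefont{\Box I} simplifies rather than complicates matters, precisely because modal types contain only closed syntax and so the substituted atom cannot occur under a $\Box$. The one place to stay attentive is the context bookkeeping in the binding cases—ensuring the $\alpha$-renaming and the weakening through Proposition~\ref{prop.ws} are performed before invoking the inductive hypothesis—and the $\tya=\tyb$ identification forced in the \rulefont{Hyp} case.
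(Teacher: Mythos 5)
Your proposal is correct and follows essentially the same route as the paper: a routine induction on the typing derivation, using Lemma~\ref{lemm.standard.fv.sub} to make the substitution vacuous in the \rulefont{\Box I} (and \rulefont{Ext}) cases and Proposition~\ref{prop.ws} to adjust contexts. The paper merely writes out fewer cases explicitly; your additional attention to $\alpha$-renaming and weakening in the binding cases is the standard bookkeeping the paper leaves implicit.
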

\begin{proof}
By a routine induction on the typing of $r$.
We consider three cases:
\begin{itemize*}
\item
\emph{The case of \rulefont{\Box I}.}\quad
Suppose $\Gamma,a{:}\tyb\cent r:\tya$ and $\fa(r){=}\varnothing$ so that $\Gamma,a{:}\tyb\cent \Box r:\Box \tya$ by \rulefont{\Box I}.
But then by Lemma~\ref{lemm.standard.fv.sub} $r[a{\ssm}s]=r$, and the result follows from Proposition~\ref{prop.ws}.
\item
\emph{The case of \rulefont{Ext}}\quad is similar to that of \rulefont{\Box I}.
\item
\emph{The case of \rulefont{\Box E}.}\quad
Using the fact from Definition~\ref{defn.sub} that 
$$
(\letbox{X}{s'}{r})[a{\ssm}s]\ =\ \letbox{X}{s'[a{\ssm}s]}{r[a{\ssm}s]}.
$$
\end{itemize*}
\end{proof}

\begin{lemm}
\label{lemm.sub.var}
Suppose $\Gamma,a{:}\tyb\cent r:\tya$ and $\Gamma\cent s:\tyb$, and suppose $\Gamma\cent\varsigma$.
Then $\hdenot{\varsigma}{r[a{\ssm}s]}=\hdenot{\varsigma[a{\ssm}\hdenot{\varsigma}{s}]}{r}$.
\end{lemm}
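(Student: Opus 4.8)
The plan is to prove the identity by induction on the typing derivation $\Gamma,a{:}\tyb\cent r:\tya$; working on the typing derivation rather than merely on the raw term is what will make the modal case trivial. Before starting, I would record two preliminary observations. First, by Theorem~\ref{thrm.type.soundness} we have $\hdenot{\varsigma}{s}\in\hdenot{}{\tyb}$, so $\Gamma,a{:}\tyb\cent\varsigma[a{\ssm}\hdenot{\varsigma}{s}]$ holds and, together with Lemma~\ref{lemm.typesub.a}, both sides of the claimed equation are defined; this lets me manipulate them freely. Second, I would establish (or invoke, since it is the denotational analogue of Proposition~\ref{prop.ws} and proved by the same routine induction) the fact that $\hdenot{\varsigma}{s}$ depends only on $\restr{\varsigma}{\fa(s)\cup\fv(s)}$; concretely, extending $\varsigma$ at a variable not free in $s$ leaves $\hdenot{\varsigma}{s}$ unchanged. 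This small weakening fact is what I will lean on in the binder cases, and I expect its (unstated) use to be the only genuinely non-mechanical ingredient.

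The base cases are direct calculations from Figure~\ref{fig.denot.terms} and the substitution clauses of Figure~\ref{fig.sub}. For $r=a$ both sides equal $\hdenot{\varsigma}{s}$; for $r=b$ with $b\neq a$, for $r=C$, for $r=\top/\bot$, and for $r=X_\at$, the atoms-substitution acts as the identity (note $X_\at[a{\ssm}s]=X_\at$) and the valuation update at $a$ is invisible to these terms, so both sides coincide. The application case $r=r'r''$ is immediate from the two inductive hypotheses, since the denotation of an application is the application of the denotations.

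The interesting work is in the cases that bind a variable, where I would $\alpha$-rename the bound variable to be fresh for $a$, for $s$, and for $\dom(\varsigma)$. For $r=\lam{c{:}\tyc}r'$, the denotation of the substituted term is the function $x\mapsto\hdenot{\varsigma[c{\ssm}x]}{r'[a{\ssm}s]}$; applying the inductive hypothesis for $r'$ (in the context $\Gamma,c{:}\tyc,a{:}\tyb$, using Proposition~\ref{prop.ws} to retype $s$) rewrites the body as $\hdenot{\varsigma[c{\ssm}x][a{\ssm}\hdenot{\varsigma[c{\ssm}x]}{s}]}{r'}$. Here the weakening fact gives $\hdenot{\varsigma[c{\ssm}x]}{s}=\hdenot{\varsigma}{s}$ because $c\notin\fa(s)$, and since $c\neq a$ the two valuation updates commute, so the body becomes $\hdenot{\varsigma[a{\ssm}\hdenot{\varsigma}{s}][c{\ssm}x]}{r'}$ — which is exactly the body of $\hdenot{\varsigma[a{\ssm}\hdenot{\varsigma}{s}]}{\lam{c{:}\tyc}r'}$. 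The case $r=\letbox{X}{s'}{r'}$ runs identically: I apply the inductive hypothesis first to $s'$ to name the value $v=\hdenot{\varsigma[a{\ssm}\hdenot{\varsigma}{s}]}{s'}$, then to $r'$ under the extended valuation $\varsigma[X{\ssm}v]$, again using the weakening fact (this time $X\notin\fv(s)$) and the commutation of the $X$- and $a$-updates.

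Finally, the modal case $r=\Box r'$ is where choosing induction on the \emph{typing} derivation pays off: rule $\rulefont{\Box I}$ supplies the side-condition $\fa(r')=\varnothing$, so $a\notin\fa(r')$ and hence $r'[a{\ssm}s]=r'$ by Lemma~\ref{lemm.standard.fv.sub}. The head components of both pairs then agree on the nose (also using that $(\varsigma[a{\ssm}\hdenot{\varsigma}{s}])_\unknowns=\varsigma_\unknowns$, since updating an atom does not affect the values at unknowns), and the tail components agree by the inductive hypothesis for $r'$. The main obstacle, to reiterate, is not any single case but the denotational weakening fact used silently in both binder cases; everything else is bookkeeping with valuation updates.
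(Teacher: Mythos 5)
Your proof is correct and takes essentially the same route as the paper's: a (terse, in the paper's case) induction on the typing derivation $\Gamma,a{:}\tyb\cent r:\tya$, with the $\rulefont{\Box I}$ case discharged via Lemma~\ref{lemm.standard.fv.sub} because $\fa(r')=\varnothing$ forces $r'[a{\ssm}s]=r'$, and the $\rulefont{Ext}$ and $\rulefont{Hyp}$ cases handled by direct calculation. The denotational weakening fact you correctly flag as the one non-mechanical ingredient in the binder cases is exactly the paper's Proposition~\ref{prop.ws.denote}, which the paper only states (and proves by the same routine induction you describe) immediately after this lemma.
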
 
\begin{proof}
By a routine induction on the derivation of $\Gamma,a{:}\tyb\cent r:\tya$ (Figure~\ref{fig.modal.types}).
We consider three cases:
\begin{itemize*}
\item
\emph{The case of \rulefont{\Box I}.}\quad
We use Lemma~\ref{lemm.standard.fv.sub} and Proposition~\ref{prop.ws} (as in the case of \rulefont{\Box I} in the proof of Lemma~\ref{lemm.typesub.a}). 
\item
\emph{The case of \rulefont{Ext}.}\quad
By \rulefont{Ext} $\Gamma,a{:}\tyb,X{:}\tya\cent X_\at:\tya$.
By definition $X_\at[a{\ssm}s]=X_\at$.
We use Proposition~\ref{prop.ws}.
\item
\emph{The case of \rulefont{Hyp} for $a$.}\quad
By \rulefont{Hyp} $\Gamma,a{:}\tyb\cent a:\tyb$.
By assumption $\Gamma,a{:}\tyb\cent \varsigma$ so unpacking Definition~\ref{defn.gamma.varsigma}, $\varsigma(a)\in\hdenot{}{\tyb}$.
By Figure~\ref{fig.denot.terms} $\varsigma(a)=\hdenot{\varsigma}{a}$, and we are done.
\end{itemize*}
\end{proof}

Proposition~\ref{prop.ws.denote} can be viewed as a denotational counterpart of Proposition~\ref{prop.ws}:
\begin{prop}
\label{prop.ws.denote}
Suppose $\Gamma\cent r:\tya$ and $\Gamma\cent\varsigma$ and $\Gamma\cent\varsigma'$. 
Suppose 
$\varsigma(a)=\varsigma'(a)$ for every $a\in\fa(r)$ and 
$\varsigma(X)=\varsigma'(X)$ for every $X\in\fa(r)$.

Then $\hdenot{\varsigma}{r}=\hdenot{\varsigma'}{r}$.
\end{prop}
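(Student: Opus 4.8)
The plan is to prove this by induction on the derivation of $\Gamma\cent r:\tya$, exactly parallel to the proof of Theorem~\ref{thrm.type.soundness}, using at each step that $\varsigma$ and $\varsigma'$ agree on the free atoms and free unknowns of the relevant subterm. (I read the second hypothesis as ``$\varsigma(X)=\varsigma'(X)$ for every $X\in\fv(r)$'', since it is the free \emph{unknowns} of $r$ that matter here and $\fa(r)$ contains no unknowns; this appears to be a small typo in the statement.) Soundness guarantees that every denotation appearing in the argument is defined, so the equalities make sense, and the recursion of Figure~\ref{fig.denot.terms} makes the induction natural.

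The leaf and non-binding cases are immediate. For \rulefont{Const}, $\top$, and $\bot$ the denotation is a fixed constant independent of the valuation. For \rulefont{Hyp} we have $\fa(a)=\{a\}$, so $\varsigma(a)=\varsigma'(a)$ gives $\hdenot{\varsigma}{a}=\hdenot{\varsigma'}{a}$ directly; for \rulefont{Ext} we have $\fv(X_\at)=\{X\}$, so $\varsigma(X)=\varsigma'(X)$ gives $\tl(\varsigma(X))=\tl(\varsigma'(X))$. For \rulefont{{\fto}E} the free atoms and unknowns of each of $r'$ and $r$ are contained in those of $r'r$, so the two inductive hypotheses apply and the application denotations coincide. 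The binding cases \rulefont{{\fto}I} and \rulefont{\Box E} require only the routine observation that extending both valuations by the same value at the bound variable preserves agreement. For $\lam{a{:}\tya}r$ I would fix $x\in\hdenot{}{\tya}$ and apply the inductive hypothesis to $\varsigma[a{\ssm}x]$ and $\varsigma'[a{\ssm}x]$: these agree at $a$, and, since $\fa(r)\setminus\{a\}=\fa(\lam{a{:}\tya}r)$ and $\fv(r)=\fv(\lam{a{:}\tya}r)$, they inherit agreement on the remaining free atoms and unknowns of $r$, giving equality of the two function-graphs pointwise. The case of $\letbox{X}{s}{r}$ is analogous: first $\hdenot{\varsigma}{s}=\hdenot{\varsigma'}{s}$ by the inductive hypothesis for $s$, and then I extend by this common value at $X$ and apply the inductive hypothesis for $r$, checking agreement via $\fa(\letbox{X}{s}{r})=\fa(r)\cup\fa(s)$ and $\fv(\letbox{X}{s}{r})=(\fv(r)\setminus\{X\})\cup\fv(s)$.

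The one case needing a genuinely new ingredient is \rulefont{\Box I}, where $\hdenot{\varsigma}{\Box r}=\Box(r\varsigma_\unknowns)::\hdenot{\varsigma}{r}$. The second (denotation) component is handled by the inductive hypothesis for $r$, since $\fa(\Box r)=\fa(r)$ and $\fv(\Box r)=\fv(r)$. The first (syntactic) component requires showing $r\varsigma_\unknowns=r\varsigma'_\unknowns$, and this is where I expect the main obstacle. I would first note that since $\varsigma$ and $\varsigma'$ agree on $\fv(r)$, their induced unknowns-substitutions $\varsigma_\unknowns$ and $\varsigma'_\unknowns$ (Definition~\ref{defn.varsigmaunknowns}) agree on $\fv(r)$, because $\varsigma_\unknowns(X)=\hd(\varsigma(X))$ is computed pointwise. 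Then I would invoke the fact that the result $r\theta$ of an unknowns-substitution depends only on the values of $\theta$ on $\fv(r)$; this is a one-line routine induction on $r$ in the spirit of Lemma~\ref{lemm.standard.fv.sub}, which I would state as an auxiliary lemma if it is not already available. Combining the two observations yields $r\varsigma_\unknowns=r\varsigma'_\unknowns$, hence equality of the boxed syntactic component, and the case closes.
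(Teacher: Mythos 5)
Your proposal is correct and matches the paper's approach: the paper's proof is simply ``by a routine induction on $r$,'' and your case analysis (including the observation that the statement's second hypothesis should read $X\in\fv(r)$, and the auxiliary fact that $r\theta$ depends only on $\theta$ restricted to $\fv(r)$, needed for the $\rulefont{\Box I}$ case) is exactly the detail that routine induction elides.
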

\begin{proof}
By a routine induction on $r$.
\end{proof}

\begin{lemm}
\label{lemm.suba.to.varsigma}
Suppose $\Gamma,a{:}\tya\cent r:\tyb$ and $\Gamma\cent s:\tya$,\ and suppose $\Gamma\cent\varsigma$.
Then 
$$
\denot{}{\varsigma}{(\lam{a{:}\tya}r)s}=\denot{}{\varsigma[a{\ssm}\denot{}{\varsigma}{s}]}{r}.
$$
\end{lemm}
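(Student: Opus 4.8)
The plan is to prove this directly, by unfolding the definition of the term denotation from Figure~\ref{fig.denot.terms}; no induction is needed. First I would apply the clause for application, rewriting $\hdenot{\varsigma}{(\lam{a{:}\tya}r)s}$ as $\hdenot{\varsigma}{\lam{a{:}\tya}r}\,\hdenot{\varsigma}{s}$. Then I would apply the clause for $\lambda$-abstraction, so that $\hdenot{\varsigma}{\lam{a{:}\tya}r}$ is literally the set-theoretic function $x\in\hdenot{}{\tya}\mapsto\hdenot{\varsigma[a{\ssm}x]}{r}$. Evaluating this function at the argument $\hdenot{\varsigma}{s}$ then yields $\hdenot{\varsigma[a{\ssm}\hdenot{\varsigma}{s}]}{r}$, which is precisely the right-hand side.

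The one substantive point is to check that this function may legitimately be applied to $\hdenot{\varsigma}{s}$, i.e. that $\hdenot{\varsigma}{s}$ really is an element of the domain $\hdenot{}{\tya}$ over which $\hdenot{\varsigma}{\lam{a{:}\tya}r}$ is defined. This is exactly where Theorem~\ref{thrm.type.soundness} (soundness) is used: from $\Gamma\cent s:\tya$ and $\Gamma\cent\varsigma$ it follows that $\hdenot{\varsigma}{s}$ is defined and lies in $\hdenot{}{\tya}$. With that in hand, the function-application step above is well-posed.

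I would also note, for bookkeeping, that the right-hand side itself is defined. Since $\Gamma\cent\varsigma$ and $\hdenot{\varsigma}{s}\in\hdenot{}{\tya}$, Definition~\ref{defn.gamma.varsigma} gives $\Gamma,a{:}\tya\cent\varsigma[a{\ssm}\hdenot{\varsigma}{s}]$, so by soundness applied to the hypothesis $\Gamma,a{:}\tya\cent r:\tyb$ the denotation $\hdenot{\varsigma[a{\ssm}\hdenot{\varsigma}{s}]}{r}$ is indeed defined and lies in $\hdenot{}{\tyb}$.

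I do not expect any real obstacle here: once soundness supplies the domain membership of $\hdenot{\varsigma}{s}$, the equality is a short chain of definitional rewrites from Figure~\ref{fig.denot.terms}. The only thing needing a little care is the two definedness checks, both discharged by appeals to Theorem~\ref{thrm.type.soundness}. This lemma is the ``left half'' of denotational soundness of $\beta$-reduction; combined with Lemma~\ref{lemm.sub.var} it will give $\hdenot{\varsigma}{(\lam{a{:}\tya}r)s}=\hdenot{\varsigma}{r[a{\ssm}s]}$, which is presumably how it is used downstream.
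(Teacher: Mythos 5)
Your proposal is correct and matches the paper's proof, which likewise just unpacks the clauses for $\lambda$-abstraction and application in Definition~\ref{defn.denot.terms}. The only difference is that you spell out the definedness/domain-membership checks via Theorem~\ref{thrm.type.soundness}, which the paper leaves implicit.
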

\begin{proof}
We unpack the cases of $\lambda$ and application in Definition~\ref{defn.denot.terms}.
\end{proof}

\subsection{Results concerning substitution on unknowns}

\begin{lemm}
\label{lemm.letbox.denot}
Suppose $\Gamma\cent(\letbox{X}{s}{r}):\tya$ and $\Gamma\cent\varsigma$.
Then 
$$
\hdenot{\varsigma}{\letbox{X}{s}{r}}=\hdenot{\varsigma[X{\ssm}\hdenot{\varsigma}{s}]}{r} .
$$
\end{lemm}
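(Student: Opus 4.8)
The plan is to recognise that the desired equation is nothing more than the defining clause for $\letbox{}{}{}$ recorded in Figure~\ref{fig.denot.terms} (Definition~\ref{defn.denot.terms}). The equation to be proved coincides verbatim with that clause, so there is no inductive argument or calculation to perform; the only real work is to confirm that both sides are well-defined, which we get from Soundness (Theorem~\ref{thrm.type.soundness}).

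First I would invert the typing hypothesis. Since $\Gamma\cent(\letbox{X}{s}{r}):\tya$ is derivable, it must conclude by rule \rulefont{\Box E}, so there is some $\tyc$ with $\Gamma\cent s:\Box\tyc$ and $\Gamma,X{:}\Box\tyc\cent r:\tya$. Next I would check that $\varsigma[X{\ssm}\hdenot{\varsigma}{s}]$ is a valuation compatible with the extended context. By Theorem~\ref{thrm.type.soundness} applied to $\Gamma\cent s:\Box\tyc$ together with $\Gamma\cent\varsigma$, the value $\hdenot{\varsigma}{s}$ is defined and lies in $\hdenot{}{\Box\tyc}$; hence clause~3 of Definition~\ref{defn.gamma.varsigma} is satisfied for $X$, giving $\Gamma,X{:}\Box\tyc\cent\varsigma[X{\ssm}\hdenot{\varsigma}{s}]$. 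A second appeal to Theorem~\ref{thrm.type.soundness}, this time for $\Gamma,X{:}\Box\tyc\cent r:\tya$, shows that the right-hand side $\hdenot{\varsigma[X{\ssm}\hdenot{\varsigma}{s}]}{r}$ is defined and lies in $\hdenot{}{\tya}$.

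With both sides confirmed to be defined, the equation holds immediately by Definition~\ref{defn.denot.terms}: it is precisely the entry for $\hdenot{\varsigma}{\letbox{X}{s}{r}}$ in Figure~\ref{fig.denot.terms}. The only place one could stumble is the well-definedness bookkeeping above, and even that is discharged entirely by Soundness, so I expect no genuine obstacle. The value of stating this as a separate lemma is purely that it packages the defining clause in a citable form (with its well-definedness preconditions made explicit) for use in the reduction results that follow.
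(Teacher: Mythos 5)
Your proposal is correct and matches the paper's proof exactly: the paper also just unpacks the clause for $\letbox{X}{s}{r}$ in Figure~\ref{fig.denot.terms} and cites Theorem~\ref{thrm.type.soundness} for well-definedness. Your version merely spells out the typing inversion and the two applications of Soundness that the paper leaves implicit.
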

\begin{proof}
We just unpack the clause for $\letbox{X}{s}{r}$ in Figure~\ref{fig.denot.terms} (well-definedness is from Theorem~\ref{thrm.type.soundness}).
\end{proof}

\begin{lemm}
\label{lemm.comm.theta.X}
Suppose $\theta$ is an unknowns-substitution (Definition~\ref{defn.substitution}).
Suppose $X\not\in\dom(\theta)$ and suppose $\fv(\theta(Z))=\varnothing$ for every $Z\in\dom(\theta)$.

Then $r[X{\ssm}\Box s]\theta=r\theta[X{\ssm}\Box(s\theta)]$.
\end{lemm}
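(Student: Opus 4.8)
The plan is a routine structural induction on $r$; the only case that uses any hypothesis on $\theta$ is that of an unknown $Y_\at$, and there the assumption $\fv(\theta(Z))=\varnothing$ does all the work. Before the induction I would settle some bookkeeping. Since $[X{\ssm}\Box s]$ is a valid unknowns-substitution we have $\fa(s)=\varnothing$, and then Lemma~\ref{lemm.fa.rtheta} gives $\fa(s\theta)=\fa(s)=\varnothing$, so $[X{\ssm}\Box(s\theta)]$ is a valid unknowns-substitution as well; this guarantees the $\lambda$-clauses of Figure~\ref{fig.sub} introduce no atom-capture on either side. Using $\alpha$-equivalence I would further assume that every bound name encountered avoids $X$, $\dom(\theta)$, $\fv(\theta)$ and $\fv(s)\cup\fv(s\theta)$, so that the substitution clauses for $\lambda$ and for $\letbox{Y}{\cdot}{\cdot}$ apply cleanly.

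The cases $C$, $a$, $r'r''$, $\Box r'$, $\lam{a{:}\tya}r'$ and $\letbox{Y}{s'}{r'}$ are all routine: one pushes both outer substitutions inward using Figure~\ref{fig.sub} and applies the inductive hypothesis to the immediate subterms, with no use of any hypothesis on $\theta$. The interesting case is $r=Y_\at$, which I would split three ways. If $Y=X$ then both sides equal $s\theta$: on the left $X_\at[X{\ssm}\Box s]=s$, and on the right $X_\at\theta=X_\at$ (since $X\notin\dom(\theta)$) so the right side is $X_\at[X{\ssm}\Box(s\theta)]=s\theta$. If $Y\neq X$ and $Y\notin\dom(\theta)$ then both sides are simply $Y_\at$. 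The crux is $Y\neq X$ with $Y\in\dom(\theta)$, say $\theta(Y)=\Box t'$: the left side is $Y_\at[X{\ssm}\Box s]\theta=Y_\at\theta=t'$, whereas the right side is $Y_\at\theta[X{\ssm}\Box(s\theta)]=t'[X{\ssm}\Box(s\theta)]$.

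I expect the single genuine step to be this last subcase, where the two sides must be reconciled. They agree exactly because $\fv(\theta(Y))=\varnothing$ forces $\fv(t')=\varnothing$, so $X\notin\fv(t')$ and hence $t'[X{\ssm}\Box(s\theta)]=t'$; this is the evident unknowns-analogue of Lemma~\ref{lemm.standard.fv.sub}, established by the same routine induction. So the hypothesis $\fv(\theta(Z))=\varnothing$ is used precisely once, to ensure that the outer substitution $[X{\ssm}\Box(s\theta)]$ cannot disturb the already-substituted image $t'$; dropping it, the two sides could differ whenever $X$ occurs free in some $\theta(Y)$. The remaining difficulty is purely the $\alpha$-renaming bookkeeping noted above, which keeps the $\lambda$- and $\letbox{Y}{\cdot}{\cdot}$-clauses commuting on both sides.
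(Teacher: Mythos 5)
Your proof is correct and follows essentially the same route as the paper's: a routine structural induction on $r$ in which only the unknown cases require thought. The paper spells out only the $X_\at$ case (both sides reduce to $s\theta$); you additionally make explicit the $Y_\at$ subcase with $Y\in\dom(\theta)$, where the hypothesis $\fv(\theta(Z))=\varnothing$ is actually consumed --- a detail the paper leaves implicit under ``routine''.
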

\begin{proof}
By a routine induction on $r$.
The interesting case is $X_\at$, for which it is easy to check that:
$$
X_\at\theta[X{\ssm}(\Box s)\theta]=s\theta
\quad\text{and}\quad
X_\at[X{\ssm}\Box s]\theta=s\theta.
$$
\end{proof}

\maketab{tab1}{@{\hspace{-2em}}R{8em}@{\ }L{17em}@{}L{10em}}

\begin{lemm}
\label{lemm.Xsub.denot}
Suppose $\Gamma,X{:}\Box\tyb\cent r:\tya$ and $\Gamma\cent \Box s:\Box\tyb$, and suppose $\Gamma\cent\varsigma$.
Then $\hdenot{\varsigma}{r[X{\ssm}\Box s]}=\hdenot{\varsigma[X{\ssm}\hdenot{\varsigma}{\Box s}]}{r}$. 
\end{lemm}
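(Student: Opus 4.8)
The plan is to prove the statement by induction on the derivation of $\Gamma,X{:}\Box\tyb\cent r:\tya$, exactly paralleling the proof of Lemma~\ref{lemm.sub.var} (its counterpart for atoms-substitution). Throughout, well-definedness of the intermediate denotations follows from Theorem~\ref{thrm.type.soundness} together with the substitution-typing results (Proposition~\ref{prop.rtheta}). The cases \rulefont{Const}, \rulefont{Hyp}, and \rulefont{{\fto}E} are immediate: substitution commutes syntactically with these term-formers and the denotation is defined pointwise, so after applying the inductive hypothesis the two sides coincide. For \rulefont{{\fto}I} (a $\lambda$-abstraction) and \rulefont{\Box E} (a $\letbox{Y}{s'}{r'}$, with $Y\neq X$ by $\alpha$-renaming) I would apply the inductive hypothesis under the updated valuations $\varsigma[a{\ssm}x]$ resp. $\varsigma[Y{\ssm}\hdenot{\varsigma}{s'[X{\ssm}\Box s]}]$; here I use that $a,Y$ are distinct from $X$ so the two valuation-updates commute, and that $\hdenot{\varsigma[a{\ssm}x]}{\Box s}=\hdenot{\varsigma}{\Box s}$ because $\fa(\Box s)=\varnothing$ (from \rulefont{\Box I}) and $a,Y\notin\fv(s)$, via Proposition~\ref{prop.ws.denote}.

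The case \rulefont{Ext} for the variable being substituted is where the substitution fires. On the left, $X_\at[X{\ssm}\Box s]=s$ by Figure~\ref{fig.sub}, so the left-hand side is $\hdenot{\varsigma}{s}$. On the right, $\hdenot{\varsigma[X{\ssm}\hdenot{\varsigma}{\Box s}]}{X_\at}=\tl(\hdenot{\varsigma}{\Box s})$, and unpacking $\hdenot{\varsigma}{\Box s}=(\Box(s\varsigma_\unknowns))::\hdenot{\varsigma}{s}$ gives $\tl(\hdenot{\varsigma}{\Box s})=\hdenot{\varsigma}{s}$; the two sides agree. For \rulefont{Ext} on some $Y_\at$ with $Y\neq X$, both the substitution and the valuation-update leave $Y$ untouched, so the case is trivial.

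I expect the main obstacle to be the case of \rulefont{\Box I}, i.e. $r=\Box r'$. Writing $\varsigma'=\varsigma[X{\ssm}\hdenot{\varsigma}{\Box s}]$, both sides are pairs; their tails are $\hdenot{\varsigma}{r'[X{\ssm}\Box s]}$ and $\hdenot{\varsigma'}{r'}$, which agree by the inductive hypothesis. The heads require the purely syntactic identity $(r'[X{\ssm}\Box s])\varsigma_\unknowns=r'\varsigma'_\unknowns$. I would first compute that $\varsigma'_\unknowns=\varsigma_\unknowns[X{\ssm}\Box(s\varsigma_\unknowns)]$: indeed $\varsigma'_\unknowns(X)=\hd(\varsigma'(X))=\hd(\hdenot{\varsigma}{\Box s})=\Box(s\varsigma_\unknowns)$, and $\varsigma'_\unknowns$ agrees with $\varsigma_\unknowns$ on all other unknowns. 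The required identity is then precisely the conclusion of Lemma~\ref{lemm.comm.theta.X} instantiated with $\theta=\varsigma_\unknowns$.

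It remains to discharge the two side-conditions of Lemma~\ref{lemm.comm.theta.X}. First, $X\notin\dom(\varsigma_\unknowns)$: since $\Gamma\cent\varsigma$ forces $\dom(\varsigma)=\dom(\Gamma)$ and $X\notin\dom(\Gamma)$, we have $X\notin\dom(\varsigma)=\dom(\varsigma_\unknowns)$. Second, $\fv(\varsigma_\unknowns(Z))=\varnothing$ for every $Z\in\dom(\varsigma_\unknowns)$: reasoning as in Lemma~\ref{lemm.varsigma.to.varsigmaun}, $\varsigma_\unknowns(Z)=\hd(\varsigma(Z))=\Box r''$ with $\varnothing\cent r'':\tyc$ for some $\tyc$, so $r''$ is closed and hence $\fv(\Box r'')=\fv(r'')=\varnothing$. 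With both side-conditions verified, Lemma~\ref{lemm.comm.theta.X} applies and the heads match, completing the difficult case and thereby the induction.
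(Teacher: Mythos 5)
Your proposal is correct and follows essentially the same route as the paper: induction on the typing derivation, with the \rulefont{Ext}-for-$X$ case discharged by unpacking $\tl(\hdenot{\varsigma}{\Box s})=\hdenot{\varsigma}{s}$, and the \rulefont{\Box I} case reduced to the syntactic identity $(r'[X{\ssm}\Box s])\varsigma_\unknowns=r'\varsigma_\unknowns[X{\ssm}\Box(s\varsigma_\unknowns)]$ via Lemma~\ref{lemm.comm.theta.X} with $\theta=\varsigma_\unknowns$. Your explicit verification of the side-conditions of Lemma~\ref{lemm.comm.theta.X} is a detail the paper leaves implicit, but it is not a different argument.
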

\begin{proof}
By induction on the derivation of $\Gamma,X{:}\Box\tyb\cent r:\tya$.
\begin{itemize*}
\item
\emph{The case of \rulefont{\Box I}.}\quad
Suppose $\Gamma,X{:}\Box\tyb\cent r:\tya$ and $\fa(r)=\varnothing$ so that by \rulefont{\Box I} $\Gamma,X{:}\Box\tyb\cent\Box r:\Box\tya$. 
We sketch the necessary reasoning:
\begin{tab1}
\hdenot{\varsigma}{(\Box r)[X{\ssm}\Box s]}=&\hdenot{\varsigma}{\Box(r[X{\ssm}\Box s])}
&\text{Definition~\ref{defn.sub}}
\\
=&\Box(r[X{\ssm}\Box s])\varsigma_\unknowns::\hdenot{\varsigma}{r[X{\ssm}\Box s]}
&\text{Figure~\ref{fig.denot.terms}}
\\
=&\Box(r[X{\ssm}\Box s])\varsigma_\unknowns::\hdenot{\varsigma[X{\ssm}\hdenot{\varsigma}{\Box s}]}{r}
&\text{Ind. Hyp.}
\\
=&\Box(r\varsigma_\unknowns[X{\ssm}\Box(s\varsigma_\unknowns)])::\hdenot{\varsigma[X{\ssm}\hdenot{\varsigma}{\Box s}]}{r}
&\text{Lemma~\ref{lemm.comm.theta.X}}
\\[1.5ex]
\hdenot{\varsigma[X{\ssm}\hdenot{\varsigma}{\Box s}]}{\Box r}
=&(\Box r)(\varsigma[X{\ssm}\hdenot{\varsigma}{\Box s}])_\unknowns::\hdenot{\varsigma[X{\ssm}\hdenot{\varsigma}{\Box s}]}{r}
&\text{Figure~\ref{fig.denot.terms}}
\\
=&(\Box r)\varsigma_\unknowns[X{\ssm}\Box(s\varsigma_\unknowns)]::\hdenot{\varsigma[X{\ssm}\hdenot{\varsigma}{\Box s}]}{r}
&\text{Figure~\ref{fig.denot.terms}}
\\
=&\Box(r\varsigma_\unknowns[X{\ssm}\Box(s\varsigma_\unknowns)])::\hdenot{\varsigma[X{\ssm}\hdenot{\varsigma}{\Box s}]}{r}
&\text{Definition~\ref{defn.sub}}
\end{tab1}
\item
\emph{The case of \rulefont{Ext} for $X$.}\quad
By \rulefont{Ext} $\Gamma,X{:}\Box\tyb\cent X_\at:\tyb$.
Then we reason as follows:
\begin{tab1}
\hdenot{\varsigma[X{\ssm}\hdenot{\varsigma}{\Box s}]}{X_\at}=&\tl(\hdenot{\varsigma}{\Box s})
&\text{Figure~\ref{fig.denot.terms}}
\\
=&\hdenot{\varsigma}{s}
&\text{Figure~\ref{fig.denot.terms}}
\\[1.5ex]
\hdenot{\varsigma}{X_\at[X{\ssm}\Box s]}=&\hdenot{\varsigma}{s}
&\text{Definition~\ref{defn.sub}}
\end{tab1}
\end{itemize*}
\end{proof}

%%%%%%%%%%%%%%%%%%%%%%%%%%
\subsection{Reduction}
\label{subsect.reduction}

\begin{frametxt}
\begin{defn}
\label{defn.reduction}
Define \deffont{$\beta$-reduction} $r\bto r'$ inductively by the rules in Figure~\ref{fig.reduction}.
\end{defn}
\end{frametxt}

\begin{figure}[t]
$r[a{\ssm}s]$ and $r[X{\ssm}s]$ from Definition~\ref{defn.sub}.
$$
\begin{array}{c@{\qquad}c}
\begin{prooftree}
\phantom{h}
\justifies
(\lam{a{:}\tya}r)r'\bto r[a{\ssm}r']
\using\rulefont{\beta}
\end{prooftree}
&
\begin{prooftree}
\phantom{h}
\justifies
\letbox{X}{\Box s}{r}\bto r[X{\ssm}\Box s]
\using\rulefont{\beta_\Box}
\end{prooftree}
\\[5ex]
\begin{prooftree}
r\bto r'\quad s\bto s'
\justifies
rs\bto r's'
\using\rulefont{cnga}
\end{prooftree}
&
\begin{prooftree}
r\bto r'
\justifies
\lam{a{:}\tya}r\bto\lam{a{:}\tya}r'
\using\rulefont{cngl}
\end{prooftree}
\\[4ex]
\begin{prooftree}
r\bto r'\quad s\bto s'
\justifies
\letbox{X}{s}{r}\bto \letbox{X}{s'}{r'}
\using\rulefont{cnge}
\end{prooftree}
\\[4ex]
\begin{prooftree}
\phantom{h}
\justifies
\tf{isapp}\,\Box(r'r)\bto \top
\using\rulefont{isapp\top}
\end{prooftree}
&
\begin{prooftree}
(r\text{ not of the form }r'r)
\justifies
\tf{isapp}\,\Box(r)\bto\bot
\using\rulefont{isapp\bot}
\end{prooftree}
\end{array}
$$
\caption{Reduction rules for the modal system}
\label{fig.reduction}
\end{figure}

\begin{rmrk}
We do not have a rule that if $r\bto r'$ then $\Box r\bto \Box r'$.
This would be wrong because it does not respect the integrity of the syntax of a term; syntax, in denotation, does not inherently reduce.

We do however allow reduction under a $\lambda$.
This is purely a design choice; we are interested in making as many terms as possible $\beta$-convertible, and less immediately interested in this paper in finding nice notions of $\beta$-normal form.
If we did not have a denotational semantics then we might have to be more sensitive to such questions (because normal forms are important for consistency)---because we \emph{do} have a denotational semantics, we obtain consistency via soundness and the precise notion of normal form is not so vital. 
\end{rmrk}

\begin{prop}
\label{prop.type.soundness}
If $\Gamma\cent r:\tya$ and $r\to r'$ then $\Gamma\cent r':\tya$.
\end{prop}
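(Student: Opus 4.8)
The plan is to prove this \emph{subject reduction} property by induction on the derivation of $r\bto r'$ using the rules of Figure~\ref{fig.reduction}, inverting the typing derivation $\Gamma\cent r:\tya$ in each case. Since each syntactic form is introduced by a unique typing rule, inversion is unambiguous. For the congruence rules $\rulefont{cnga}$, $\rulefont{cngl}$, and $\rulefont{cnge}$ the argument is routine: I would invert the last typing rule ($\rulefont{{\to}E}$, $\rulefont{{\to}I}$, and $\rulefont{{\Box}E}$ respectively) to type the immediate subterms, apply the inductive hypothesis to each reducing subterm, and reassemble with the same typing rule (keeping the side-condition $\fa(r)=\varnothing$ on $\rulefont{{\Box}I}$ satisfied, and in the case of $\rulefont{cnge}$ using Proposition~\ref{prop.ws} to align contexts). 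For the two $\rulefont{isapp\top}$ and $\rulefont{isapp\bot}$ rules the result is immediate: the redex $\tf{isapp}\,\Box(\dots)$ has type $o$, and both $\top$ and $\bot$ type as $o$ by $\rulefont{Const}$, so the type is preserved.

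The first genuine $\beta$-rule, $\rulefont{\beta}$, is handled by the atoms-substitution lemma. Inverting the typing of $(\lam{a{:}\tya'}r)r'$ gives, via $\rulefont{{\to}E}$, that $\Gamma\cent\lam{a{:}\tya'}r:\tya'\fto\tya$ and $\Gamma\cent r':\tya'$, and via $\rulefont{{\to}I}$ that $\Gamma,a{:}\tya'\cent r:\tya$. This is exactly the hypothesis of Lemma~\ref{lemm.typesub.a}, which immediately yields $\Gamma\cent r[a{\ssm}r']:\tya$, as required.

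The second $\beta$-rule, $\rulefont{\beta_\Box}$, is the interesting case and draws on the unknowns-substitution machinery. Inverting the typing of $\letbox{X}{\Box s}{r}$ by $\rulefont{{\Box}E}$ gives $\Gamma\cent\Box s:\Box\tyb$ and $\Gamma,X{:}\Box\tyb\cent r:\tya$; inverting $\rulefont{{\Box}I}$ on the former gives $\Gamma\cent s:\tyb$ together with $\fa(s)=\varnothing$, so that $[X{\ssm}\Box s]$ is a legitimate unknowns-substitution $\theta$ in the sense of Definition~\ref{defn.substitution}. I would then verify $\Gamma,X{:}\Box\tyb\cent\theta$ as in Definition~\ref{defn.gamma.theta}, weakening $\Gamma\cent\Box s:\Box\tyb$ into the extended context by Proposition~\ref{prop.ws}, and apply Proposition~\ref{prop.rtheta} to obtain $\Gamma,X{:}\Box\tyb\cent r\theta:\tya$. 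Finally, since the $\alpha$-convention makes $X$ fresh for $s$ (and $X\notin\dom(\Gamma)$), the substitution removes every free occurrence of $X$, so $X\notin\fv(r\theta)$; Proposition~\ref{prop.ws} then strengthens the context to $\Gamma\cent r[X{\ssm}\Box s]:\tya$, closing the case.

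The main obstacle I anticipate is precisely this last step of the $\rulefont{\beta_\Box}$ case: bridging the single-unknown substitution $[X{\ssm}\Box s]$ to the general hypothesis $\Gamma\cent\theta$ of Proposition~\ref{prop.rtheta}. In particular one must discharge the side-condition $\fa(s)=\varnothing$ so that $\theta$ is even well-formed, and must orchestrate the weakening and subsequent strengthening via Proposition~\ref{prop.ws} so that $X$ can be dropped from the context after substitution. Everything else is bookkeeping of the same kind already carried out in Lemma~\ref{lemm.typesub.a} and Proposition~\ref{prop.rtheta}.
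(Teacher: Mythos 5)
Your proposal is correct and follows essentially the same route as the paper, which disposes of the statement in one line: a routine induction with the $\rulefont{\beta}$ case handled by Lemma~\ref{lemm.typesub.a} and the $\rulefont{\beta_\Box}$ case by the unknowns-substitution machinery. The only (cosmetic) difference is that for $\rulefont{\beta_\Box}$ you invoke Proposition~\ref{prop.rtheta} directly, with the weakening/strengthening bookkeeping via Proposition~\ref{prop.ws} spelled out, whereas the paper cites Proposition~\ref{prop.typesub}; your citation is arguably the more precise one, since Proposition~\ref{prop.typesub} concerns valuations and is itself derived from Proposition~\ref{prop.rtheta}.
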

\begin{proof}
By a routine induction on $r$.
The case of \rulefont{\beta} $(\lam{a{:}\tya}r)r'\bto r[a{\ssm}r']$ follows by Lemma~\ref{lemm.typesub.a}; that of \rulefont{\beta_\Box} 
follows by Proposition~\ref{prop.typesub}.
\end{proof}

\begin{thrm}
\label{thrm.red.sound}
Suppose $\Gamma\cent r:\tya$ and $\Gamma\cent \varsigma$.
Suppose $r\bto r'$.
Then $\hdenot{\varsigma}{r}=\hdenot{\varsigma}{r'}$. 
\end{thrm}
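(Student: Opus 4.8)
The plan is to argue by induction on the derivation of $r\bto r'$ (Figure~\ref{fig.reduction}), proving the statement for \emph{all} typing contexts $\Gamma$, types $\tya$, and valuations $\varsigma$ satisfying the hypotheses. This generalisation matters because the congruence rules $\rulefont{cngl}$ and $\rulefont{cnge}$ pass under binders, so the inductive hypothesis must be available at the \emph{modified} valuations $\varsigma[a{\ssm}x]$ and $\varsigma[X{\ssm}\hdenot{\varsigma}{s}]$. Throughout, well-definedness of every denotation that appears is guaranteed by Theorem~\ref{thrm.type.soundness} together with Proposition~\ref{prop.type.soundness} (which ensures $r'$ is typable whenever $r$ is), so we may concentrate on the equalities.

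The two $\beta$-rules are the heart of the argument, but the work has already been packaged into the substitution lemmas. For $\rulefont{\beta}$ we must show $\hdenot{\varsigma}{(\lam{a{:}\tya}r)r'}=\hdenot{\varsigma}{r[a{\ssm}r']}$; by Lemma~\ref{lemm.suba.to.varsigma} the left-hand side equals $\hdenot{\varsigma[a{\ssm}\hdenot{\varsigma}{r'}]}{r}$, and by Lemma~\ref{lemm.sub.var} the right-hand side equals the same expression. For $\rulefont{\beta_\Box}$ we must show $\hdenot{\varsigma}{\letbox{X}{\Box s}{r}}=\hdenot{\varsigma}{r[X{\ssm}\Box s]}$; here Lemma~\ref{lemm.letbox.denot} rewrites the left-hand side as $\hdenot{\varsigma[X{\ssm}\hdenot{\varsigma}{\Box s}]}{r}$ and Lemma~\ref{lemm.Xsub.denot} rewrites the right-hand side as the same thing. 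In both cases the point is just to chain the `semantic substitution equals syntactic substitution' equalities already established.

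The congruence rules follow by unfolding the relevant compositional clause of Figure~\ref{fig.denot.terms} and applying the inductive hypothesis. For $\rulefont{cnga}$, $\hdenot{\varsigma}{rs}=\hdenot{\varsigma}{r}\,\hdenot{\varsigma}{s}$ and the two factors are rewritten by the hypotheses for $r\bto r'$ and $s\bto s'$. For $\rulefont{cngl}$ the denotation is the function $x\mapsto\hdenot{\varsigma[a{\ssm}x]}{r}$, and the hypothesis applied at each $\varsigma[a{\ssm}x]$ shows this agrees pointwise with $x\mapsto\hdenot{\varsigma[a{\ssm}x]}{r'}$, whence the two functions are equal. For $\rulefont{cnge}$ we use Lemma~\ref{lemm.letbox.denot} to reduce to $\hdenot{\varsigma[X{\ssm}\hdenot{\varsigma}{s}]}{r}$, apply the hypothesis for $r\bto r'$ at the valuation $\varsigma[X{\ssm}\hdenot{\varsigma}{s}]$, and then use the hypothesis $\hdenot{\varsigma}{s}=\hdenot{\varsigma}{s'}$ to replace the stored value before folding back up to $\hdenot{\varsigma}{\letbox{X}{s'}{r'}}$.

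Finally, the two $\tf{isapp}$ rules are handled by directly unfolding the clause for $\tf{isapp}$ in Figure~\ref{fig.denot.terms}, which inspects the syntactic head $\hd(\hdenot{\varsigma}{\Box(r_0)})=\Box(r_0\varsigma_\unknowns)$ of its boxed argument. I expect this to be the most delicate point. The rule's side-condition (`$r_0$ is, or is not, a syntactic application') is purely syntactic, whereas the denotation tests the head \emph{after} applying the embedded unknowns-substitution $\varsigma_\unknowns$, so the two must be reconciled: for $\rulefont{isapp\top}$ one checks that $\Box((r'r)\varsigma_\unknowns)$ is visibly an application (application being preserved by substitution), while for $\rulefont{isapp\bot}$ one must verify that substitution cannot turn a non-application head into an application. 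This interaction between the denotation's built-in substitution and the syntactic reduction rule is the part of the proof warranting the most care.
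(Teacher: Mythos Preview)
Your overall strategy and your treatment of the two $\beta$-rules match the paper's proof exactly: induction on the reduction derivation, with \rulefont{\beta} discharged by Lemmas~\ref{lemm.sub.var} and~\ref{lemm.suba.to.varsigma} and \rulefont{\beta_\Box} by Lemmas~\ref{lemm.letbox.denot} and~\ref{lemm.Xsub.denot}. The paper stops there; your explicit handling of the congruence rules is a correct and welcome elaboration.

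Where you go beyond the paper is in flagging the \rulefont{isapp\bot} case, and your instinct is right --- but the situation is worse than ``delicate''. The verification you say is needed, that $\varsigma_\unknowns$ cannot turn a non-application into an application, actually \emph{fails}. Take $r=X_\at$: this is not an application, so \rulefont{isapp\bot} fires and $\tf{isapp}\,\Box(X_\at)\bto\bot$; yet $X_\at\,\varsigma_\unknowns$ is whatever closed term sits under the box in $\varsigma(X)$, and nothing prevents that from being an application. Concretely, with $X{:}\Box o$ and $\varsigma(X)=\Box(\tf{isapp}_o\,(\Box\top))::\bot^\hden$, we get $\hd\hdenot{\varsigma}{\Box X_\at}=\Box(\tf{isapp}_o\,(\Box\top))$, an application, so $\hdenot{\varsigma}{\tf{isapp}_o\,\Box X_\at}=\top^\hden\neq\bot^\hden=\hdenot{\varsigma}{\bot}$. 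All other term-formers have their head constructor preserved by $\theta$ (Figure~\ref{fig.sub}), so $X_\at$ is the unique offender.

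So the gap you sensed is genuine: as written, \rulefont{isapp\bot} is not sound for the denotation unless one adds a side-condition such as $\fv(r)=\varnothing$ (so that $r\varsigma_\unknowns=r$), or restricts attention to closed terms. The paper's proof simply does not address this case. Your proposal is otherwise correct and more complete than the paper's; you should record the counterexample rather than leave the verification as an exercise.
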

\begin{proof}
By induction on the derivation of $r\bto r'$.
\begin{itemize*}
\item
\emph{The case of \rulefont{\beta}}\ follows by Lemmas~\ref{lemm.sub.var} and~\ref{lemm.suba.to.varsigma}.
\item
\emph{The case of \rulefont{\beta_\Box}}\ follows by Lemmas~\ref{lemm.letbox.denot} and~\ref{lemm.Xsub.denot}.
\end{itemize*}
\end{proof}

%%%%%%%%%%%%%%%%%%%%%%%%%%%%%%%%%%%%%%%%%
\section{Syntax and typing of the system with contextual types}
\label{sect.contextual.system}

The modal type system is beautiful, but is a little too weak for some applications. 
The issue is that $X$ ranges over \emph{closed} syntax.
If we are working under some $\lambda$-abstractions, we may well find this limiting; we want to work with \emph{open} syntax so that we can refer to the enclosing binder.
This really matters, because it affects the programs we can write.
For instance in the example of exponentiation from Subsection~\ref{subsubsect.exp}, the issue of working under a $\lambda$-abstraction forced us to generate unwanted $\beta$-redexes. 

The contextual system is one way to get around this.
Syntax is still closed, but the notion of closure is liberalised by introducing a context into the modality; to see the critical difference, compare the \rulefont{[]I} rule in Figure~\ref{fig.cmtt.types} with the \rulefont{\Box I} rule from Figure~\ref{fig.modal.types}.
The interested reader can see how this allows us to write a nicer program for exponentiation, which does not generate $\beta$-redexes, in Subsection~\ref{subsubsect.exp.c}.

\subsection{Syntax of the contextual system}

\begin{nttn}
The contextual system needs many vectors of types and atoms-and-types.
For clarity, we write these vectors subscripted, for instance:
\begin{itemize*}
\item
$(a_i{:}\tya_i)_1^n$ is shorthand for $\{a_1{:}\tya_1,\dots,a_n{:}\tya_n\}$.
\item
$[\tya_i]_1^n\tya$ is shorthand for $[\tya_1,\dots,\tya_n]\tya$.
\item
$(\tya_i)_1^n\fto\tya$ is shorthand for $\tya_1\fto(\tya_2\fto\dots(\tya_n\fto\tya))$.
\item
$\{a_i\}_1^n$ is shorthand for $\{a_1,\dots,a_n\}$.
\item
$\lam{(x_i{:}\tya_i)_1^n}r$ is shorthand for $\lam{x_1{:}\tya_1}\dots\lam{x_n{:}\tya_n}r$.
\item
$[a_i{\ssm}x_i]_1^n$ will be shorthand for the map taking $a_i$ to $x_i$ for $1{\leq}i{\leq}n$ and undefined elsewhere (Definition~\ref{defn.substitution.c}).
\end{itemize*}
We may omit the interval where it is understood or irrelevant, so for instance $\{a_i\}$ and $\{a_i\}_i$ are both shorthand for the same thing: ``$\{a_1,\dots,a_n\}$ for some $n$ whose precise value we will never need to reference'', and $(\tya_i)\fto\tya$ is shorthand for ``$(\tya_i)_1^n\fto\tya$ for some $n$ whose precise value we will never need to reference''.
\end{nttn}

We take atoms and unknowns as in Definition~\ref{defn.atoms}.
\begin{defn}
\label{defn.types.c}
Define \deffont{types} inductively by:
\begin{frameqn}
\begin{array}{r@{\ }l}
\tya::=& o \mid \nat \mid \tya\to \tya \mid [\tya_i]_1^n\tya
\end{array}
\end{frameqn}
\end{defn}
$o$ (truth-values),\ $\nat$ (numbers),\ and $\tya\fto\tyb$ (functions) are as in Definition~\ref{defn.types}.
$[\tya_i]_1^n\tya$ is a \emph{contextual type}.
Think of this as generalising the modal types of Definition~\ref{defn.types.c} by `allowing bindings in the box'.
 
\begin{defn}
Fix a set of \deffont{constants} $C$ to each of which is assigned a type $\f{type}(C)$.
We write $C:\tya$ as shorthand for `$C$ is a constant and $\f{type}(C)=\tya$'.
We insist that constants include the following:
$$
\bot:o
\qquad
\top:o
\qquad
\tf{isapp}_\tya:(\Box \tya)\fto o
$$
We may omit the type subscripts where they are clear from context or do not matter.
\end{defn}

\begin{defn}
\label{defn.terms.c}
Define \deffont{terms} inductively by:
\begin{frameqn}
r::= C \mid a \mid \lam{a{:}\tya}r \mid rr \mid [a_i{:}\tya_i]r \mid X\at (r_i)_1^n \mid \letbox{X}{r}{r}
\end{frameqn}
\end{defn}

\begin{rmrk}
\label{rmrk.transfer.to.c}
The syntax of the modal type system in Definition~\ref{defn.terms} injects naturally into that of Definition~\ref{defn.terms.c}, if we map $\Box \text{-}$ to $[\,]\text{-}$ (the empty context) and $\text{-}_\at$ to $\text{-}\at()$.

The important extra complexity is in $X\at(r_i)_1^n$; when $X$ is instantiated by a substitution $\theta$, this triggers an atoms-substitution of the form $[a_i{\ssm}r_i]_1^n$.
See Definition~\ref{defn.sub.c}.
\end{rmrk}

\begin{defn}
\label{defn.hole.free.atoms.c}
Define \deffont{free atoms} $\fa(\rtm)$ and \deffont{free unknowns} $\fv(\rtm)$ by:
\begin{displaymath}
\begin{array}{r@{\ }l@{\qquad}r@{\ }l}
\fa(C) = & \varnothing
&
\fa(a) = & \{ a \}                               
\\
\fa(\lam{a{:}\tya}r) = & \fa(r) \setminus \{ a \}
&
\fa(\rtm\stm) = & \fa(\rtm) \cup \fa(\stm) 
\\
\fa([a_i{:}\tya_i]_{1}^n r) = & \fa(r)\setminus\{a_1,\dots,a_n\}
&
\fa(\letbox{X}{s}{r}) = & \fa(r)\cup\fa(s)
\\
\fa(X\at(s_i)_i) = & \bigcup_i \fa(s_i)
\\[2ex]
\fv(C) = & \varnothing
&
\fv(a) = & \varnothing
\\
\fv(\lam{a{:}\tya}\rtm) = & \fv(\rtm) 
&
\fv(\rtm\stm) = & \fv(\rtm) \cup \fv(\stm)  
\\
\fv([a_i{:}\tya_i]r) = & \fv(r)
&
\fv(\letbox{X}{s}{r}) = & (\fv(r){\setminus}\{X\})\cup\fv(s)
\\
\fv(X\at(s_i)_i) = & \{X\}\cup\bigcup_i\fv(s_i)
\end{array}
\end{displaymath}
\end{defn}

\begin{defn}
We take $a$ to be bound in $r$ in $\lam{a{:}\tya}r$ and $a_1,\dots,a_n$ to be bound in $r$ in $[a_i{:}\tya]_1^n r$,\ and we take $X$ to be bound in $r$ in $\letbox{X}{s}{r}$.
We take syntax up to $\alpha$-equivalence as usual.
For example:
\begin{itemize*}
\item
$\lam{a{:}\tya}a=\lam{b{:}\tya}b$
\item
$\lam{a{:}\tya}[b{:}\tyb]((X\at(b))a)=\lam{b{:}\tya}[a{:}\tyb]((X\at a)b)\neq\lam{b{:}\tya}[b{:}\tyb]((X\at(b))b)$
\item
$\begin{array}[t]{l}
\letbox{X}{[a{:}\tya]a}{(X\at(b))}
\\
=\letbox{Y}{[a{:}\tya]a}{(Y\at (b))}
\\
=\letbox{Y}{[b{:}\tya]b}{(Y\at(b))}
\end{array}$
\end{itemize*}
\end{defn}

\begin{figure}[t]
$$
\begin{array}{c@{\quad}c}
\begin{prooftree}
\phantom{h}
\justifies
\Gamma,a:\tya\cent a:\tya
\using\rulefont{Hyp}
\end{prooftree}
&
\begin{prooftree}
\phantom{h}
\justifies
\Gamma\cent C:\f{type}(C)
\using\rulefont{Const}
\end{prooftree}
\\[2em]
\begin{prooftree}
\Gamma,a{:}\tya\cent r:\tyb 
\justifies
\Gamma\cent (\lam{a{:}\tya}r):\tya\to \tyb
\using\rulefont{{\to}I}
\end{prooftree}
&
\begin{prooftree}
\Gamma\cent r':\tya\to \tyb
\quad
\Gamma\cent r:\tya
\justifies
\Gamma\cent r'r:\tyb
\using\rulefont{{\to}E}
\end{prooftree}
\\[2em]
\begin{prooftree}
\Gamma,(a_i{:}\tya_i)_i\cent r:\tya 
\quad (\fa(r){\subseteq}\{a_i\}_i)
\justifies
\Gamma\cent [a_i{:}\tya_i]r:[\tya_i]\tya
\using\rulefont{[\,]I}
\end{prooftree}
&
\begin{prooftree}
\Gamma,X{:}[\tya_i]\tya\cent r {:} \tyb \ \ \Gamma\cent s{:}[\tya_i]\tya 
\justifies
\Gamma\cent \letbox{X}{s}{r}:\tyb
\using\rulefont{[\,]E}
\end{prooftree}
\\[2em]
\begin{prooftree}
\Gamma,X:[\tya_i]_1^n\tya \cent r_j:\tya_j\quad (1{\leq}j{\leq}n)
\justifies
\Gamma,X:[\tya_i]_1^n\tya \cent X\at(r_i)_1^n:\tya
\using\rulefont{Ext}
\end{prooftree}
\end{array}
$$
\caption{Contextual modal type theory typing rules}
\label{fig.cmtt.types}
\end{figure}

%%%%%%%%%%%%%%%%%%%%%%%%%%%%%
\subsection{Typing for the contextual system}

\begin{defn}
\label{defn.typing.rules.c}
A \deffont{typing} is a pair $a:\tya$ or $X:[\tya_i]_i\tya$.
A \deffont{typing context} $\Gamma$ is a finite partial function from $\atoms\cup\unknowns$ to types (as in Definition~\ref{defn.typing.rules}, except that unknowns have contextual types instead of just box types $\Box\tya$).

A \deffont{typing sequent} is a tuple $\Gamma\cent r:\tya$ of a typing context, a term, and a type.
\begin{frametxt}
Define the \deffont{valid typing sequents} of the contextual modal type system by the rules in Figure~\ref{fig.cmtt.types}.
\end{frametxt}
\end{defn}

Recall the notation $\Gamma|_U$ from Notation~\ref{nttn.restrict}.
Proposition~\ref{prop.ws.c} repeats Proposition~\ref{prop.ws} for the contextual system:
\begin{prop}
\label{prop.ws.c}
If $\Gamma\cent r:\tya$ and $\Gamma'|_{\fv(r)\cup\fa(r)}=\Gamma|_{\fv(r)\cup\fa(r)}$ then $\Gamma'\cent r:\tya$.
\end{prop}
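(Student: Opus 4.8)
The plan is to proceed by induction on the derivation of $\Gamma\cent r:\tya$, showing in each case that the same final rule applies with $\Gamma'$ in place of $\Gamma$. The guiding principle is that every rule of Figure~\ref{fig.cmtt.types} inspects the context only at the atoms and unknowns that occur free in the term, so agreement of $\Gamma$ and $\Gamma'$ on $\fv(r)\cup\fa(r)$ is exactly what is needed. For the leaf rules this is immediate: \rulefont{Const} imposes no free-variable constraint, so $\Gamma'\cent C:\f{type}(C)$ holds directly; and for \rulefont{Hyp} on $a$ we have $\fa(a)=\{a\}$, so $\Gamma'$ assigns $a$ the same type and the rule re-applies. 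For the rules whose premises introduce no fresh binding---the two premises of \rulefont{{\to}E}, the $s$ premise of \rulefont{[\,]E}, and the premises of \rulefont{Ext}---each immediate subterm $t$ satisfies $\fa(t)\cup\fv(t)\subseteq\fa(r)\cup\fv(r)$, so $\Gamma'$ and $\Gamma$ still agree on its free variables; one applies the inductive hypothesis to each premise and re-applies the rule. In \rulefont{Ext} one additionally notes that $X\in\fv(X\at(r_i)_1^n)$, so $\Gamma'$ carries the same contextual type $[\tya_i]_1^n\tya$ for $X$ that the rule demands in its conclusion.

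The substance of the argument, and the only place where real care is needed, is in the three rules that pass under a binder, namely \rulefont{{\to}I}, \rulefont{[\,]I}, and the body premise of \rulefont{[\,]E}, because entering a binder enlarges the relevant free-variable set. For \rulefont{{\to}I} the subderivation is $\Gamma,a{:}\tya\cent r:\tyb$, and I would apply the inductive hypothesis to the extended contexts $\Gamma,a{:}\tya$ and $\Gamma',a{:}\tya$: these agree on $a$ by construction, and since $\fa(\lam{a{:}\tya}r)=\fa(r)\setminus\{a\}$ with $\fv$ unchanged, the hypothesis on $\Gamma,\Gamma'$ already gives agreement on $(\fa(r)\setminus\{a\})\cup\fv(r)$, so after adjoining $a$ they agree on all of $\fa(r)\cup\fv(r)$, which is the side condition the inductive hypothesis requires. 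As usual we may $\alpha$-rename $a$ to be fresh for both $\Gamma$ and $\Gamma'$, so no capture arises. The body premise of \rulefont{[\,]E} is handled identically with the unknown $X$ in place of the atom $a$, using $\fv(\letbox{X}{s}{r})=(\fv(r)\setminus\{X\})\cup\fv(s)$. Finally \rulefont{[\,]I} is the same argument binding the vector $(a_i{:}\tya_i)_i$ at once: here $\fa([a_i{:}\tya_i]r)=\fa(r)\setminus\{a_i\}_i$, and the rule's side condition $\fa(r)\subseteq\{a_i\}_i$ makes this empty, so agreement on $\fv(r)$ together with the shared bound typings $(a_i{:}\tya_i)_i$ suffices to invoke the inductive hypothesis and re-apply \rulefont{[\,]I}. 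Apart from this bookkeeping the induction is routine, running in exact parallel with the proof of Proposition~\ref{prop.ws}.
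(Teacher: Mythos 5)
Your proof is correct and matches the paper's approach: the paper simply states ``by a routine induction on $r$'', and your case analysis (induction on the derivation, which coincides with induction on $r$ here since the rules are syntax-directed) is exactly the routine argument being alluded to, with the binder cases handled properly via $\alpha$-renaming and extension of the agreeing contexts.
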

\begin{proof}
By a routine induction on $r$.
\end{proof}

%%%%%%%%%%%%%%%%%%%%%%%%%%%%%%%%%
\subsection{Substitution}

Definition~\ref{defn.substitution.c} reflects Definition~\ref{defn.substitution} for the richer syntax of terms:
\begin{defn}
\label{defn.substitution.c}
An \deffont{(atoms-)substitution} $\sigma$ is a finite partial function from atoms $\mathbb A$ to terms.
$\sigma$ will range over atoms-substitutions.

Write $\dom(\sigma)$ for the set ${\{a\mid\sigma(a)\text{ defined}\}}$ 

Write $\id$ for the \deffont{identity} substitution, such that $\dom(\sigma)=\varnothing$.
 
Write $[a_i{\ssm}x_i]_1^n$ for the map taking $a_i$ to $x_i$ for $1{\leq}i{\leq}n$ and undefined elsewhere.

An \deffont{(unknowns-)substitution} $\theta$ is a finite partial function from unknowns $\unknowns$ to terms such that if $\theta(X)$ is defined then $\theta(X)=[a_i{:}\tya_i]_1^n r$ for some $r$ with $\fa(r)\subseteq\{a_1,\dots,a_n\}$ (so $\fa(\theta(X))=\varnothing$ for every $X\in\dom(\theta)$). 

$\theta$ will range over unknowns-substitutions.

We write $\dom(\theta)$,\ $\id$,\ and $[X_i{\ssm}t_i]_1^n$ just as for atoms-substitutions (we will be most interested in the case that $n=1$).
\end{defn}

We also reflect Definition~\ref{defn.fa.sigma} and write $\fa(\sigma)$ and $\fv(\theta)$, but using the notions of `free atoms' and `free unknowns' from Definition~\ref{defn.hole.free.atoms.c}.
The definition is formally identical:
$$
\begin{array}{r@{\ }l}
\fa(\sigma)=&\dom(\sigma)\cup\{\fa(\sigma(a))\mid a\in\dom(\sigma)\}
\quad\text{and}
\\
\fv(\theta)=&\dom(\theta)\cup\{\fv(\theta(X))\mid X\in\dom(\theta)\}
\end{array}
$$

\begin{frametxt}
\begin{defn}
\label{defn.sub.c}
Define substitution actions $r\sigma$ and $r\theta$ by the rules in Figure~\ref{fig.sub.c}.
\end{defn}
\end{frametxt}

\begin{rmrk}
The capture-avoidance side-conditions of Definition~\ref{defn.sub.c} (of the form `$\ast\not\in\fa(\sigma)$' or `$\ast\not\in\fv(\theta)$') can be guaranteed by $\alpha$-renaming.

Strictly speaking the case of $(X\at(r_i)_1^n)\theta$ introduces a partiality into the notion of substitution action; we assume that $\theta(X)=[a_i{:}\tya_i]_1^m s'$ and \emph{for this to make sense} it must be that $n=m$; if $n\neq m$ then the definition is not well-defined.
However, for well-typed syntax this is guaranteed not to happen, and since this is the only case we will care about, we will never notice this.
\end{rmrk} 

\begin{figure}[t]
$$
\hspace{-.5ex}
\begin{array}{r@{\ }l@{\hspace{-1ex}}r@{\ }l@{\ \ }l}
C\sigma=&C
&
a\sigma=&\sigma(a) & (a\in\dom(\sigma))
\\
(rs)\sigma=&(r\sigma)(s\sigma)
&
a\sigma=&a & (a\not\in\dom(\sigma))
\\
(X\at(r_i)_i)\sigma=&X\at(r_i\sigma)_i
&
(\lam{c{:}\tya}r)\sigma=&\lam{c{:}\tya}(r\sigma) &
(c\not\in\fa(\sigma))
\\
(\letbox{Y}{s}{r})\sigma=&\letbox{Y}{s\sigma}{r\sigma}
&
([a_i{:}\tya_i]r)\sigma=&[a_i{:}\tya_i](r\sigma) 
& (a_i\not\in\fa(\sigma)\ \text{all}\ i) 
\\[2.5ex]
C\theta=&C
&
a\theta=&a
\\
(rs)\theta=&(r\theta)(s\theta)
&
(X\at(r_i)_i)\theta=&s'[a_i{\ssm}r_i] 
&(\theta(X){=}[a_i{:}\tya_i]s')
\\
([a_i{:}\tya_i]r)\theta=&[a_i{:}\tya_i](r\theta)
&
(X\at(r_i))\theta=&X\at(r_i)
&(X\not\in\dom(\theta))
\\
(\lam{c{:}\tya}r)\theta=&\lam{c{:}\tya}(r\theta) 
&
(\letbox{Y}{s}{r})\theta=&\letbox{Y}{s\theta}{r\theta}
&(Y\not\in\fv(\theta))
\end{array}
$$
\caption{Substitution actions for atoms and unknowns (contextual syntax)}
\label{fig.sub.c}
\end{figure}

We conclude this section with some important definitions and results about the interaction of substitution and typing, which will be needed for Theorem~\ref{thrm.type.soundness.c}.

Definition~\ref{defn.gamma.theta.c} reflects Definition~\ref{defn.gamma.theta}, but we need $\Gamma\cent\sigma$ as well as $\Gamma\cent\theta$:
\begin{defn}
\label{defn.gamma.theta.c}
Write $\Gamma\cent\theta$ when 
if $X\in\dom(\theta)$ then $X{:}[\tya_i]\tya\in\Gamma$ for some $[\tya_i]\tya$ and $\Gamma\cent\theta(X):[\tya_i]\tya$.

Similarly write $\Gamma\cent\sigma$ when 
if $a\in\dom(\sigma)$ then $a{:}\tya\in\Gamma$ for some $\tya$ and $\Gamma\cent \sigma(a):\tya$.
\end{defn}

\begin{lemm}
\label{lemm.fa.rtheta.c}
$\fa(r\theta)=\fa(r)$ where $r\theta$ is defined.
\end{lemm}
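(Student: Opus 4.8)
The plan is to prove the equality by induction on the structure of $r$, following the template of Lemma~\ref{lemm.fa.rtheta} but now tracking the two features peculiar to the contextual syntax: the binder $[a_i{:}\tya_i]r$ and the argument vectors in $X\at(r_i)_i$. Throughout I would use the global hypothesis built into Definition~\ref{defn.substitution.c}: whenever $\theta(X)$ is defined it has the form $[a_i{:}\tya_i]_1^n s'$ with $\fa(s')\subseteq\{a_1,\dots,a_n\}$, so in particular $\fa(\theta(X))=\varnothing$. The driving observation is that each clause of $\fa$ in Definition~\ref{defn.hole.free.atoms.c} assembles free-atom sets of immediate subterms purely by union and by removal of bound atoms, while each clause of $\theta$ in Figure~\ref{fig.sub.c} either leaves such a subterm untouched or rewrites it into a subterm to which the inductive hypothesis applies.

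First I would dispatch the routine cases $C$, $a$, $rs$, $\lam{c{:}\tya}r$, $[a_i{:}\tya_i]r$, and $\letbox{Y}{s}{r}$. For each I unfold the matching clause of Figure~\ref{fig.sub.c}, note that $\theta$ commutes with the term-former (the capture side-conditions being harmless precisely because $\fa(\theta(Z))=\varnothing$), apply the inductive hypothesis to each immediate subterm, and reassemble the identical set on both sides using the corresponding $\fa$-clause. None of these go beyond the modal proof.

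The hard part will be the case $X\at(r_i)_1^n$. When $X\notin\dom(\theta)$ the term is left essentially fixed and the equality is immediate, collapsing (once $\theta$ is propagated into the arguments) to $\bigcup_i\fa(r_i\theta)=\bigcup_i\fa(r_i)$ by the inductive hypothesis. When $X\in\dom(\theta)$, write $\theta(X)=[a_i{:}\tya_i]_1^n s'$, so that $(X\at(r_i)_i)\theta=s'[a_i{\ssm}r_i]_1^n$ while $\fa(X\at(r_i)_i)=\bigcup_{i=1}^n\fa(r_i)$. Here I would first establish, by a subsidiary induction on $s'$ (the simultaneous-substitution counterpart of Lemma~\ref{lemm.standard.fv.sub}), the computation
$$
\fa\bigl(s'[a_i{\ssm}r_i]_1^n\bigr)=\bigl(\fa(s')\setminus\{a_1,\dots,a_n\}\bigr)\cup\textstyle\bigcup_{a_i\in\fa(s')}\fa(r_i).
$$
The binder condition $\fa(s')\subseteq\{a_1,\dots,a_n\}$ from Definition~\ref{defn.substitution.c} annihilates the set-difference term, reducing the right-hand side to a union of the $\fa(r_i)$ indexed by the atoms occurring in $s'$. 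The remaining and most delicate point — the one the whole lemma turns on — is to match this union with the full $\bigcup_{i=1}^n\fa(r_i)=\fa(X\at(r_i)_i)$, i.e. to align the context $[a_i{:}\tya_i]$ of $\theta(X)$ with the argument vector $(r_i)_i$ of $X\at(r_i)_i$ under the arity condition that makes $r\theta$ defined. It is for this alignment that the hypothesis $\fa(\theta(X))=\varnothing$ is baked into the very definition of an unknowns-substitution, and getting this bookkeeping between binder and argument vector exactly right is where the argument must be set up with care rather than merely asserted.
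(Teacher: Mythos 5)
Your reduction of the crucial case is correct as far as it goes, but the proposal stops exactly where the proof is supposed to happen. For $X\in\dom(\theta)$ with $\theta(X)=[a_i{:}\tya_i]_1^n s'$ and $\fa(s')\subseteq\{a_1,\dots,a_n\}$, your subsidiary substitution lemma correctly yields $\fa(s'[a_i{\ssm}r_i]_1^n)=\bigcup_{a_i\in\fa(s')}\fa(r_i)$. You then declare that matching this set with $\fa(X\at(r_i)_1^n)=\bigcup_{i=1}^n\fa(r_i)$ is ``where the argument must be set up with care rather than merely asserted''---and proceed to merely assert it. This is a genuine gap, and it cannot be filled: the union you computed ranges only over those $i$ with $a_i\in\fa(s')$, and nothing in Definition~\ref{defn.substitution.c} forces $s'$ to mention every $a_i$; if $a_j\notin\fa(s')$ then the substitution discards $r_j$ together with all its free atoms. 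Concretely, take $\theta=[X{\ssm}[a{:}o]\top]$ (a legitimate unknowns-substitution, since $\fa(\top)=\varnothing\subseteq\{a\}$) and $r=X\at(b)$; then $r\theta=\top[a{\ssm}b]=\top$ is defined, yet $\fa(r\theta)=\varnothing$ while $\fa(r)=\{b\}$, and this persists even when $r$ is well-typed and $\Gamma\cent\theta$. So the equality you are trying to prove is false; your computation establishes only the inclusion $\fa(r\theta)\subseteq\fa(r)$.

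To be fair, what you have run into is a defect of the statement itself rather than a clumsy choice of route. The paper's own proof is a one-line appeal to ``routine induction using the assumption that $\fa(\theta(X))=\varnothing$'', which glosses over precisely the case you isolated; your more honest bookkeeping exposes that the induction is not routine and that only the $\subseteq$ direction survives. That inclusion is also all the paper ever uses: in Proposition~\ref{prop.rtheta.c} the lemma is invoked to obtain $\fa(r\theta)\subseteq\{b_j\}_j$ for the side-condition of \rulefont{[\,]I}, and in Lemma~\ref{lemm.typesub.c} to bound $\fa(r\varsigma_\unknowns)$ before applying Proposition~\ref{prop.ws}. The way to salvage your write-up is therefore to weaken the conclusion to $\fa(r\theta)\subseteq\fa(r)$, after which every case of your induction goes through and the delicate ``alignment'' becomes the trivial observation $\bigcup_{a_i\in\fa(s')}\fa(r_i)\subseteq\bigcup_{i=1}^n\fa(r_i)$; alternatively one could strengthen Definition~\ref{defn.substitution.c} to demand $\fa(s')=\{a_1,\dots,a_n\}$. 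As the lemma stands, no proof of the stated equality can exist.
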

\begin{proof}
By a routine induction on $r$ using our assumption of Definition~\ref{defn.substitution.c} that if $X\in\dom(\theta)$ then $\fa(\theta(X))=\varnothing$.
\end{proof}
 
Lemma~\ref{lemm.sigma.c} reflects Lemma~\ref{lemm.typesub.a}.
However, unlike was the case for the modal system, it is needed for Proposition~\ref{prop.rtheta.c}/\ref{prop.rtheta} because the case of $(X\at(r_i))\theta$ in Definition~\ref{defn.sub.c} triggers an atoms-substitution.
\begin{lemm}
\label{lemm.sigma.c}
Suppose $\Gamma\cent r:\tya$ and $\Gamma\cent\sigma$. 
Then $\Gamma\cent r\sigma:\tya$. 
\end{lemm}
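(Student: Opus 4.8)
The plan is to argue by induction on the derivation of $\Gamma\cent r:\tya$, mirroring the proof of Lemma~\ref{lemm.typesub.a} for the modal system, except that $\sigma$ is now a general atoms-substitution rather than a single $[a{\ssm}s]$. The base cases \rulefont{Hyp} and \rulefont{Const}, and the inductive case \rulefont{{\to}E}, are routine. For \rulefont{Hyp} the substitution clause $a\sigma=\sigma(a)$ (when $a\in\dom(\sigma)$) together with the hypothesis $\Gamma\cent\sigma$ (Definition~\ref{defn.gamma.theta.c}) delivers exactly $\Gamma\cent\sigma(a):\tya$, while for $a\notin\dom(\sigma)$ we have $a\sigma=a$; \rulefont{{\to}E} is immediate from the two inductive hypotheses and the clause $(rs)\sigma=(r\sigma)(s\sigma)$ from Figure~\ref{fig.sub.c}.

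The binder cases \rulefont{{\to}I} and \rulefont{[\,]I} carry the bookkeeping. In both I first invoke the Remark following Definition~\ref{defn.sub.c} to $\alpha$-rename the bound atoms so that they avoid $\fa(\sigma)$; this makes the relevant substitution clause applicable and, crucially, guarantees that each bound atom lies outside $\dom(\sigma)$ and outside $\fa(\sigma(a))$ for every $a\in\dom(\sigma)$. That freshness lets me weaken $\Gamma\cent\sigma$ to the extended context: by Proposition~\ref{prop.ws.c} each $\Gamma\cent\sigma(a):\tya$ upgrades to the enlarged context, so $\Gamma,(a_i{:}\tya_i)_i\cent\sigma$ holds. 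The inductive hypothesis then gives $\Gamma,(a_i{:}\tya_i)_i\cent r\sigma:\tya$, and reapplying the introduction rule to $([a_i{:}\tya_i]r)\sigma=[a_i{:}\tya_i](r\sigma)$ finishes the case. For \rulefont{[\,]I} there is the extra side-condition $\fa(r)\subseteq\{a_i\}_i$ to re-establish: since the $a_i$ avoid $\dom(\sigma)$ and $\fa(r)\subseteq\{a_i\}_i$, no free atom of $r$ lies in $\dom(\sigma)$, so by the evident generalisation of Lemma~\ref{lemm.standard.fv.sub} (if $\dom(\sigma)\cap\fa(r)=\varnothing$ then $r\sigma=r$) we in fact have $r\sigma=r$, whence $\fa(r\sigma)=\fa(r)\subseteq\{a_i\}_i$ and \rulefont{[\,]I} applies. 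This is the precise analogue of how $\fa(r)=\varnothing$ forced $r\sigma=r$ in the \rulefont{\Box I} case of Lemma~\ref{lemm.typesub.a}.

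For \rulefont{Ext} the substitution acts componentwise, $(X\at(r_i)_i)\sigma=X\at(r_i\sigma)_i$; since $\sigma$ is an atoms-substitution it never touches the unknown $X$, so $\Gamma,X{:}[\tya_i]\tya\cent\sigma$ still holds, and the inductive hypothesis applied to each premise $\Gamma,X{:}[\tya_i]\tya\cent r_j:\tya_j$ yields $\Gamma,X{:}[\tya_i]\tya\cent r_j\sigma:\tya_j$; reapplying \rulefont{Ext} concludes. This is the one genuinely new phenomenon relative to the modal system, where \rulefont{Ext} took no arguments and the substitution passed through trivially — here $\sigma$ really does descend into the $r_i$, which is exactly why (as the surrounding text notes) this lemma is now needed for Proposition~\ref{prop.rtheta.c}. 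Finally \rulefont{[\,]E} is handled as for \rulefont{\Box E}: rename $X$ fresh, use $(\letbox{X}{s}{r})\sigma=\letbox{X}{s\sigma}{r\sigma}$, and appeal to the two inductive hypotheses.

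The only real content, and the step I expect to require the most care, is the \rulefont{[\,]I} case: I must check that the free-atoms restriction $\fa(r)\subseteq\{a_i\}_i$ attached to \rulefont{[\,]I} survives the substitution, which hinges on choosing the bound $a_i$ disjoint from $\dom(\sigma)$. Everything else is a matter of threading contexts through Proposition~\ref{prop.ws.c} and respecting the capture-avoidance side-conditions of Figure~\ref{fig.sub.c}, where I anticipate no surprises.
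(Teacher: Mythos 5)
Your proof is correct and takes the same approach as the paper, which disposes of this lemma in one line (``by routine inductions on the derivation of $\Gamma\cent r:\tya$'') and omits all details. Your case analysis --- in particular re-establishing the side-condition $\fa(r)\subseteq\{a_i\}_i$ in \rulefont{[\,]I} by choosing the bound atoms fresh for $\sigma$ so that $r\sigma=r$, and pushing $\sigma$ componentwise into the arguments of $X\at(r_i)_i$ in \rulefont{Ext} --- fills in exactly the steps the paper treats as routine.
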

\begin{proof}
By routine inductions on the derivation of $\Gamma\cent r:\tya$.
\end{proof}

Proposition~\ref{prop.rtheta.c} reflects Proposition~\ref{prop.rtheta} and is needed for soundness of the denotation.
The proof is significantly more complex, because of the atoms-substitution that can be introduced by the case of $(X\at(s_j))\theta$.
This is handled in the proof below using Lemma~\ref{lemm.sigma.c}.
\begin{prop}
\label{prop.rtheta.c}
Suppose $\Gamma\cent r:\tya$ and $\Gamma\cent \theta$.
Then $\Gamma\cent r\theta:\tya$.
\end{prop}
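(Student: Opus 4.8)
The plan is to argue by induction on the derivation of $\Gamma\cent r:\tya$, following the template of the modal Proposition~\ref{prop.rtheta} but giving special attention to \rulefont{Ext}, which in the contextual system unfolds into an atoms-substitution and so forces an appeal to Lemma~\ref{lemm.sigma.c}.

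First I would dispatch the structural cases. \rulefont{Hyp} and \rulefont{Const} are immediate, since $a\theta=a$ and $C\theta=C$. For \rulefont{{\to}I} and \rulefont{{\to}E} I would apply the inductive hypothesis to the premises and reassemble using the compositional clauses $(\lam{c{:}\tya}r)\theta=\lam{c{:}\tya}(r\theta)$ and $(rs)\theta=(r\theta)(s\theta)$ from Figure~\ref{fig.sub.c}; no capture arises in the $\lambda$-case because Definition~\ref{defn.substitution.c} guarantees $\fa(\theta(X))=\varnothing$. Note that $\Gamma\cent\theta$ is preserved when atom-typings are added to $\Gamma$ (by weakening, Proposition~\ref{prop.ws.c}), so the inductive hypothesis does apply to premises typed in extended contexts. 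The case of \rulefont{[\,]I} mirrors \rulefont{\Box I} in Proposition~\ref{prop.rtheta}: from $\Gamma,(a_i{:}\tya_i)_i\cent r:\tya$ with $\fa(r)\subseteq\{a_i\}_i$, the inductive hypothesis yields $\Gamma,(a_i{:}\tya_i)_i\cent r\theta:\tya$, Lemma~\ref{lemm.fa.rtheta.c} gives $\fa(r\theta)=\fa(r)\subseteq\{a_i\}_i$, and then \rulefont{[\,]I} reapplies to $([a_i{:}\tya_i]r)\theta=[a_i{:}\tya_i](r\theta)$. For \rulefont{[\,]E} I would $\alpha$-rename so that $X\notin\dom(\theta)\cup\fv(\theta)$, apply the inductive hypothesis to both premises, and reassemble via $(\letbox{X}{s}{r})\theta=\letbox{X}{s\theta}{r\theta}$.

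The main obstacle is \rulefont{Ext}, where $X{:}[\tya_i]_1^n\tya\in\Gamma$ and $\Gamma\cent r_j:\tya_j$ for each $j$ give $\Gamma\cent X\at(r_i)_1^n:\tya$. If $X\notin\dom(\theta)$ then $(X\at(r_i))\theta=X\at(r_i\theta)$, and I would apply the inductive hypothesis to each argument and reapply \rulefont{Ext}. The delicate subcase is $X\in\dom(\theta)$: here $\theta(X)=[a_i{:}\tya_i]s'$ and the substitution unfolds to the atoms-substitution $s'[a_i{\ssm}r_i]$. To type this I would (i) use $\Gamma\cent\theta$ to obtain $\Gamma\cent\theta(X):[\tya_i]\tya$ and invert \rulefont{[\,]I} to recover $\Gamma,(a_i{:}\tya_i)_i\cent s':\tya$ with $\fa(s')\subseteq\{a_i\}_i$; (ii) type each argument by the corresponding premise of \rulefont{Ext} (pushing $\theta$ through via the inductive hypothesis as needed); and (iii) feed these into Lemma~\ref{lemm.sigma.c}, applied in the context $\Gamma,(a_i{:}\tya_i)_i$ with $\sigma=[a_i{\ssm}r_i]$, to obtain $\Gamma,(a_i{:}\tya_i)_i\cent s'[a_i{\ssm}r_i]:\tya$. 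Because the substitution eliminates every free $a_i$, Proposition~\ref{prop.ws.c} then strengthens the context back to $\Gamma$, giving $\Gamma\cent(X\at(r_i))\theta:\tya$. The essential new ingredient, absent from Proposition~\ref{prop.rtheta}, is exactly this use of Lemma~\ref{lemm.sigma.c} to discharge the box-bound atoms $a_i$; the care lies in tracking which context each judgement inhabits and in checking that no $a_i$ survives to block the final strengthening.
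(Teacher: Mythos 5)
Your proposal is correct and follows essentially the same route as the paper's proof: induction on the typing derivation, with \rulefont{[\,]I} handled via Lemma~\ref{lemm.fa.rtheta.c} and the delicate subcase \rulefont{Ext} with $X\in\dom(\theta)$ handled by inverting the typing of $\theta(X)$ and discharging the triggered atoms-substitution with Lemma~\ref{lemm.sigma.c}. If anything you are more explicit than the paper about the context bookkeeping (applying Lemma~\ref{lemm.sigma.c} in the extended context $\Gamma,(a_i{:}\tya_i)_i$ and then strengthening back with Proposition~\ref{prop.ws.c}), which the paper elides.
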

\begin{proof}
By a routine induction on the typing of $r$.
We consider two cases:
\begin{itemize*}
\item
\emph{The case of \rulefont{[\,] I}.}\quad
Suppose $\Gamma,(b_j{:}\tyb_j)\cent r:\tya$ and $\fa(r){\subseteq}\{b_j\mid j\}$ so that $\Gamma\cent [b_j{:}\tyb_j]r:[\tyb_j]\tya$ by \rulefont{[\,] I}.
By inductive hypothesis $\Gamma,(b_j{:}\tyb_j)\cent r\theta:\tya$.
By Lemma~\ref{lemm.fa.rtheta.c} $\fv(r\theta){\subseteq}\{b_j\mid j\}$. 
We use \rulefont{[\,]I} and the fact that $([b_j{:}\tyb_j]r)\theta=[b_j{:}\tyb_j](r\theta)$.
\item
\emph{The case of \rulefont{Ext} for $X\in\dom(\theta)$.}\quad
Suppose $\Gamma,X{:}[\tya_j]_1^m\tya\cent s_j:\tya_j$ for each $1{\leq}j{\leq}m$ so that by \rulefont{Ext} $\Gamma,X{:}[\tya_j]_j\tya\cent X\at(s_j)_j{:}\tya$.
By inductive hypothesis $\Gamma\cent s_j\theta:\tya_j$ for each $j$.
By assumption $\varnothing\cent \varsigma(X):[\tya_j]_j\tya$, which implies that $\varsigma(X)=[a_j{:}\tya_j]r'$ for some $r'$ such that $(a_j{:}\tya_j)_j\cent r':\tya$.
By Lemma~\ref{lemm.sigma.c} $\Gamma\cent r'[a_j{\ssm}s_j\theta]:\tya$.
By the definitions $(X\at(s_j)_j)\theta=r'[a_j{\ssm}s_j]_j$, so we are done.
\end{itemize*}
\end{proof}

We could now give a theory of reduction for the contextual system, following the definition of reduction for the modal system in Subsection~\ref{subsect.reduction}.
However, we will skip over this; the interested reader is referred elsewhere \cite{nan+pfe:jfp05}.
What is more interesting, from the point of view of this paper, is the models we define for the contextual system, which we come to next.

%%%%%%%%%%%%%%%%%%%%%%%%%%%%%%%%%%%%%%%
\section{Contextual models}
\label{sect.contextual.models}

\subsection{Denotational semantics}

\maketab{tab3}{@{\hspace{-0em}}R{12em}@{\ }L{23em}}

Definition~\ref{defn.interpret.types.c} is like Definition~\ref{defn.interpret.types}, except that instead of box types, we have contextual types:
\begin{defn}
\label{defn.interpret.types.c}
Define $\hdenot{}{\tya}$ the \deffont{interpretation} of types by induction in Figure~\ref{fig.denot.types.c}. 
\end{defn}

\begin{figure}[t]
\begin{minipage}{\textwidth}
\begin{tab3}
\hdenot{}{o}=&\{\top^\hden,\bot^\hden\}
\\
\hdenot{}{\nat}=&\{0,1,2,\dots\}
\\
\hdenot{}{\tya\fto\tyb}=&\hdenot{}{\tyb}^{\hdenot{}{\tya}}
\\
\hdenot{}{[\tya_i]_1^n\tya}=&
\{[a_i{:}\tya_i]_1^n r\mid \varnothing\cent [a_i{:}\tya_i]_1^n r:[\tya_i]_1^n \tya\}
\times
\hdenot{}{\tya}^{\Pi_{i{=}1}^n\hdenot{}{\tya_i}}
\end{tab3}
\end{minipage}
\caption{Denotational semantics for CMTT types}
\label{fig.denot.types.c}
\end{figure}
\begin{figure}[t]
\begin{minipage}{\textwidth}
\begin{tab3}
\hdenot{\varsigma}{\top}=&\top^\hden
\\
\hdenot{\varsigma}{\bot}=&\bot^\hden
\\
\hdenot{\varsigma}{a}=&\varsigma(a)
\\
\hdenot{\varsigma}{\lam{a{:}\tya}r}=&(x{\in}\hdenot{}{\tya}\mapsto \hdenot{\varsigma[a{\ssm}x]}{r})
\\
\hdenot{\varsigma}{r'r}=&\hdenot{\varsigma}{r'}\,\hdenot{\varsigma}{r}
\\
\hdenot{\varsigma}{[a_i{:}\tya_i]_1^n r}=&[a_i{:}\tya_i]_1^n(r\,\varsigma_\unknowns) :: \bigl(\lam{(x_i{\in}\hdenot{}{\tya_i})_1^n}\hdenot{\varsigma[a_i{\ssm}x_i]_1^n}{r}\bigr) 
\\
\hdenot{\varsigma}{X\at(r_i)_1^n}=&\tl(\varsigma(X))\,(\hdenot{\varsigma}{r_i})_1^n 
\\
\hdenot{\varsigma}{\letbox{X}{s}{r}}=&\hdenot{\varsigma[X{\ssm}\hdenot{\varsigma}{s}]}{r}
\\
\hdenot{\varsigma}{\tf{isapp}_\tya}([a_i{:}\tya_i](r'r))=& \top^\hden 
\\
\hdenot{\varsigma}{\tf{isapp}_\tya}([a_i{:}\tya_i](r))=&\bot^\hden \quad\text{otherwise} 
\end{tab3}
\end{minipage}
\caption{Denotational semantics for terms of the contextual system}
\label{fig.denot.terms.c}
\end{figure}

\begin{defn}
\label{defn.valuation.c}
A \deffont{valuation} $\varsigma$ is a finite partial function on $\atoms\cup\unknowns$.

We define $\varsigma[X{\ssm}x]$ and $\varsigma[a{\ssm}x]$ just as in Definition~\ref{defn.valuation}.
\end{defn}

\begin{defn}
\label{defn.varsigmaunknowns.c}
Write $\varsigma_\unknowns$ for the substitution (Definition~\ref{defn.substitution.c}) such that $\varsigma_\unknowns(X)=\hd(\varsigma(X))$ if $\varsigma(X)$ is defined,
and $\varsigma_\unknowns(X)$ is undefined if $\varsigma(X)$ is undefined.
\end{defn}

\begin{defn}
\label{defn.gamma.varsigma.c}
If $\Gamma$ is a typing context then write $\Gamma\cent\varsigma$ when:
\begin{frametxt}
\begin{enumerate*}
\item
$\dom(\Gamma)=\dom(\varsigma)$.
\item
If $a\in\dom(\varsigma)$ then $\Gamma(a)=\tya$ for some $\tya$ and $\varsigma(a)\in\denot{}{}{\tya}$.
\item
If $X\in\dom(\varsigma)$ then $\Gamma(X)=[\tya_i]\tya$ and $\varsigma(X)\in\hdenot{}{[\tya_i]\tya}$.
\end{enumerate*}
\end{frametxt}
\end{defn}

\begin{rmrk}
Unpacking Definition~\ref{defn.interpret.types.c}, clause~3 (the one for $X$) means that $\varsigma(X)=[a_i{:}\tya_i] r'$ and $\varnothing\cent [a_i{:}\tya_i]r':[\tya_i]\tya$.
Following the typing rules of Figure~\ref{fig.cmtt.types}, this is equivalent to $(a_i{:}\tya_i)_i\cent r':\tya$.
\end{rmrk}

\begin{defn}
\label{defn.denot.terms.c}
For each constant $C:\tya$ other than $\top$, $\bot$, and $\tf{isapp}$ fix some interpretation $C^\hden$ which is an element $C^\hden\in\hdenot{}{\tya}$.
Suppose $\Gamma\cent\varsigma$ and $\Gamma\cent r:\tya$.
\begin{frametxt}
An \deffont{interpretation} of terms $\hdenot{\varsigma}{r}$ is defined in Figure~\ref{fig.denot.terms.c}.
\end{frametxt}
\end{defn}

\begin{rmrk}
Definition~\ref{defn.denot.terms.c} is in the same spirit as Definition~\ref{defn.denot.terms}, but now the modal types are contextual; the modal box contains a context $a_1{:}\tya_1,\dots,a_n{:}\tya_n$.
When we calculate $\hdenot{\varsigma}{X\at(r_i)_1^n}$ the denotation of $X\at(r_i)_1^n$, the denotations of the terms $r_i$ provide denotations for the variables in that context.
\end{rmrk}

\begin{lemm}
\label{lemm.varsigma.to.varsigmaun.c}
If $\Gamma\cent\varsigma$ then $\Gamma\cent\varsigma_\unknowns$.
\end{lemm}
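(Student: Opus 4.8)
The plan is to mirror the proof of its modal counterpart, Lemma~\ref{lemm.varsigma.to.varsigmaun}, case-splitting on whether $X\in\dom(\varsigma)$. If $X\not\in\dom(\varsigma)$ then by Definition~\ref{defn.varsigmaunknowns.c} $X\not\in\dom(\varsigma_\unknowns)$ too, and there is nothing to check. So all the content lies in the case $X\in\dom(\varsigma)$, where I must discharge the two obligations imposed by Definition~\ref{defn.gamma.theta.c}: that $X{:}[\tya_i]\tya\in\Gamma$, and that $\Gamma\cent\varsigma_\unknowns(X):[\tya_i]\tya$.

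First I would unpack $\Gamma\cent\varsigma$ via clause~3 of Definition~\ref{defn.gamma.varsigma.c}: this already supplies $\Gamma(X)=[\tya_i]\tya$ for some $[\tya_i]\tya$ (discharging the first obligation) together with $\varsigma(X)\in\hdenot{}{[\tya_i]\tya}$. Reading off Figure~\ref{fig.denot.types.c}, the latter means $\varsigma(X)$ is a pair $[a_i{:}\tya_i]r :: f$ whose syntactic first component satisfies $\varnothing\cent[a_i{:}\tya_i]r:[\tya_i]\tya$; the only difference from the modal case is that the second component $f$ is here a \emph{function} rather than a single element, but since $\hd$ projects out only the first component this plays no role. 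By Definition~\ref{defn.varsigmaunknowns.c} we have $\varsigma_\unknowns(X)=\hd(\varsigma(X))=[a_i{:}\tya_i]r$. The second obligation then follows by weakening the closed typing judgement $\varnothing\cent[a_i{:}\tya_i]r:[\tya_i]\tya$ up to $\Gamma$ using Proposition~\ref{prop.ws.c}, yielding $\Gamma\cent[a_i{:}\tya_i]r:[\tya_i]\tya$.

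The one point requiring marginally more care than in the modal case is confirming that $\varsigma_\unknowns$ is genuinely an unknowns-substitution in the sense of Definition~\ref{defn.substitution.c}, i.e. that $[a_i{:}\tya_i]r$ really has the required shape with $\fa(r)\subseteq\{a_i\}_i$. This is not a fresh hypothesis but falls straight out of the typing judgement already obtained: inverting the rule \rulefont{[\,]I} (the only rule deriving a contextual type) delivers exactly the side-condition $\fa(r)\subseteq\{a_i\}_i$. I do not anticipate any genuine obstacle; the argument is a routine unpacking of definitions, its only substantive ingredients being the shape of $\hdenot{}{[\tya_i]\tya}$ in Figure~\ref{fig.denot.types.c} and weakening via Proposition~\ref{prop.ws.c}.
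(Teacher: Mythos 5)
Your proof is correct and follows essentially the same route as the paper's: case-split on $X\in\dom(\varsigma)$, unpack clause~3 of Definition~\ref{defn.gamma.varsigma.c} and the shape of $\hdenot{}{[\tya_i]\tya}$ from Figure~\ref{fig.denot.types.c} to see that $\hd(\varsigma(X))$ is closed syntax typable at $[\tya_i]\tya$. You are in fact slightly more careful than the paper, which leaves implicit both the weakening from $\varnothing$ to $\Gamma$ and the check that $\varsigma_\unknowns$ has the shape required of an unknowns-substitution by Definition~\ref{defn.substitution.c}.
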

\begin{proof}
If $X\not\in\dom(\varsigma)$ then $X\not\in\dom(\varsigma_\unknowns)$.

Suppose $X\in\dom(\varsigma)$.
By Definition~\ref{defn.varsigmaunknowns.c} $\varsigma_\unknowns(X)=\hd(\varsigma(X))$. 
By Definition~\ref{defn.gamma.varsigma.c} $\varsigma_\unknowns(X)\in\hdenot{}{[\tya_i]_1^n\tya}$ for some $[\tya_i]_1^n\tya$.
Unpacking Figure~\ref{fig.denot.types.c} this implies that $\varsigma_\unknowns(X)=[a_i{:}\tya_i]_1^n r$ for some $\varnothing\cent [a_i{:}\tya_i]_1^n r:[\tya_i]_1^n\tya$, and we are done.
\end{proof}

\begin{lemm}
\label{lemm.typesub.c}
Suppose $\Gamma\cent r:\tya$ and $\Gamma\cent \varsigma$.
Then $\Gamma|_\atoms\cent r\varsigma_\unknowns:\tya$.
\end{lemm}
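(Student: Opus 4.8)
The plan is to follow the same three-step composition used for Proposition~\ref{prop.typesub}, now drawing on the contextual analogues of each ingredient. The statement to be proved is essentially a piece of bookkeeping on top of results already established: Lemma~\ref{lemm.varsigma.to.varsigmaun.c} (turning a valuation-typing $\Gamma\cent\varsigma$ into an unknowns-substitution-typing $\Gamma\cent\varsigma_\unknowns$), Proposition~\ref{prop.rtheta.c} (preservation of typing under an unknowns-substitution), Lemma~\ref{lemm.fa.rtheta.c} (unknowns-substitution leaves free atoms unchanged), and Proposition~\ref{prop.ws.c} (combined weakening and strengthening).

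First I would observe that $\Gamma\cent\varsigma_\unknowns$, which is exactly Lemma~\ref{lemm.varsigma.to.varsigmaun.c}. Then, since $\Gamma\cent r:\tya$ by assumption, Proposition~\ref{prop.rtheta.c} gives $\Gamma\cent r\varsigma_\unknowns:\tya$. This is where all the real work lives---in particular the clause for $(X\at(s_j))\theta$ triggers an atoms-substitution, which is discharged inside Proposition~\ref{prop.rtheta.c} via Lemma~\ref{lemm.sigma.c}---but here we simply invoke that proposition as a black box.

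It then remains to strengthen the context from $\Gamma$ to $\Gamma|_\atoms$. For this I would note two things. By Lemma~\ref{lemm.fa.rtheta.c} we have $\fa(r\varsigma_\unknowns)=\fa(r)$. Moreover, applying $\varsigma_\unknowns$ eliminates every free unknown: each $X\in\fv(r)$ lies in $\dom(\Gamma)=\dom(\varsigma)=\dom(\varsigma_\unknowns)$, and is replaced by $\hd(\varsigma(X))=[a_i{:}\tya_i]r'$ with $r'$ closed (as forced by clause~3 of Definition~\ref{defn.gamma.varsigma.c} together with Figure~\ref{fig.denot.types.c}), so $\fv(r\varsigma_\unknowns)=\varnothing$. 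Hence $\fa(r\varsigma_\unknowns)\cup\fv(r\varsigma_\unknowns)=\fa(r)$, a set of atoms contained in $\dom(\Gamma|_\atoms)$. Since $\Gamma$ and $\Gamma|_\atoms$ agree on all atoms, they agree in particular on $\fa(r)$, so Proposition~\ref{prop.ws.c} yields $\Gamma|_\atoms\cent r\varsigma_\unknowns:\tya$ as required.

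I expect no serious obstacle within this lemma itself; the difficulty has been pushed entirely into Proposition~\ref{prop.rtheta.c}, which this proof reuses. Once that is in hand, the argument here is a clean composition of the shape of the modal Proposition~\ref{prop.typesub}, the only genuine point of care being the observation that $\varsigma_\unknowns$ genuinely discharges the free unknowns of $r$ before the context is cut down to its atoms.
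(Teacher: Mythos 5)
Your proof is correct and follows essentially the same route as the paper's: Lemma~\ref{lemm.varsigma.to.varsigmaun.c}, then Proposition~\ref{prop.rtheta.c}, then Lemma~\ref{lemm.fa.rtheta.c} plus weakening/strengthening to cut the context down to $\Gamma|_\atoms$. Your extra observation that $\fv(r\varsigma_\unknowns)=\varnothing$ is a small point the paper leaves implicit, and it is exactly what is needed to apply Proposition~\ref{prop.ws.c} cleanly.
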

\begin{proof}
By Lemma~\ref{lemm.varsigma.to.varsigmaun.c} $\Gamma\cent\varsigma_\unknowns$.
By Proposition~\ref{prop.rtheta.c} $\Gamma\cent r\,\varsigma_\unknowns:\tya$.
By Lemma~\ref{lemm.fa.rtheta.c} $\fa(r\varsigma_\unknowns)=\fa(r)$.
It is a fact that $\fa(r)\subseteq\dom(\Gamma|_\atoms)$, so by Proposition~\ref{prop.ws} $\Gamma|_\atoms\cent r\,\varsigma_\unknowns:\tya$ as required.
\end{proof}

\begin{frametxt}
\begin{thrm}[Soundness]
\label{thrm.type.soundness.c}
If $\Gamma\cent r:\tya$ and $\Gamma\cent\varsigma$ then $\hdenot{\varsigma}{r}\in\hdenot{}{\tya}$.
\end{thrm}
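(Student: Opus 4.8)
The plan is to mirror the proof of Theorem~\ref{thrm.type.soundness}, proceeding by induction on the derivation of $\Gamma\cent r:\tya$. The cases for constants, atoms (\rulefont{Hyp}), abstraction (\rulefont{{\to}I}) and application (\rulefont{{\to}E}) go through exactly as in the modal system, using only elementary facts about sets and function-spaces together with the inductive hypothesis; the $\tf{isapp}$ constant is handled directly by its defining clauses in Figure~\ref{fig.denot.terms.c}. The three genuinely modal rules \rulefont{[\,]I}, \rulefont{[\,]E} and \rulefont{Ext} are where the contextual structure appears, and these I would treat explicitly.

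The case of \rulefont{[\,]I} is the crux. Here $\Gamma,(a_i{:}\tya_i)_1^n\cent r:\tya$ with $\fa(r)\subseteq\{a_i\}_i$, and by Figure~\ref{fig.denot.terms.c} the denotation is the pair $[a_i{:}\tya_i]_1^n(r\,\varsigma_\unknowns)::\bigl(\lam{(x_i{\in}\hdenot{}{\tya_i})_1^n}\hdenot{\varsigma[a_i{\ssm}x_i]_1^n}{r}\bigr)$, which must be shown to lie in the product $\{[a_i{:}\tya_i]_1^n r'\mid \varnothing\cent[a_i{:}\tya_i]_1^n r':[\tya_i]_1^n\tya\}\times\hdenot{}{\tya}^{\Pi_i\hdenot{}{\tya_i}}$ of Figure~\ref{fig.denot.types.c}. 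For the \emph{syntactic} component I would apply Lemma~\ref{lemm.typesub.c} at the extended context $\Gamma,(a_i{:}\tya_i)_i$; since extending a valuation on atoms does not change its induced unknowns-substitution, this yields $\Gamma|_\atoms,(a_i{:}\tya_i)_i\cent r\,\varsigma_\unknowns:\tya$, and then Lemma~\ref{lemm.fa.rtheta.c} together with $\fa(r)\subseteq\{a_i\}_i$ and Proposition~\ref{prop.ws.c} strengthens this to $(a_i{:}\tya_i)_i\cent r\,\varsigma_\unknowns:\tya$, so that \rulefont{[\,]I} re-boxes it as $\varnothing\cent[a_i{:}\tya_i](r\,\varsigma_\unknowns):[\tya_i]\tya$. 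For the \emph{function} component, for each tuple $(x_i)_i\in\Pi_i\hdenot{}{\tya_i}$ the valuation $\varsigma[a_i{\ssm}x_i]_i$ satisfies $\Gamma,(a_i{:}\tya_i)_i\cent\varsigma[a_i{\ssm}x_i]_i$ (the domains match because $\Gamma\cent\varsigma$ forces $\dom(\Gamma)=\dom(\varsigma)$), so the inductive hypothesis gives $\hdenot{\varsigma[a_i{\ssm}x_i]_i}{r}\in\hdenot{}{\tya}$ and hence the map lies in $\hdenot{}{\tya}^{\Pi_i\hdenot{}{\tya_i}}$.

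The remaining two cases are lighter. For \rulefont{[\,]E} the inductive hypothesis on $s$ gives $\hdenot{\varsigma}{s}\in\hdenot{}{[\tya_i]\tya}$, say $\hdenot{\varsigma}{s}=([a_i{:}\tya_i]s')::f$; unpacking Definition~\ref{defn.gamma.varsigma.c} shows $\Gamma,X{:}[\tya_i]\tya\cent\varsigma[X{\ssm}\hdenot{\varsigma}{s}]$, so the inductive hypothesis on $r$ together with the $\letbox{X}{s}{r}$ clause of Figure~\ref{fig.denot.terms.c} gives the result. For \rulefont{Ext}, where $\Gamma,X{:}[\tya_i]_1^n\tya\cent X\at(r_j)_1^n:\tya$, the valuation supplies $\varsigma(X)=([a_i{:}\tya_i]r')::g$ with $g\in\hdenot{}{\tya}^{\Pi_i\hdenot{}{\tya_i}}$; by the inductive hypothesis each $\hdenot{\varsigma}{r_j}\in\hdenot{}{\tya_j}$, so $(\hdenot{\varsigma}{r_i})_i\in\Pi_i\hdenot{}{\tya_i}$ and $\hdenot{\varsigma}{X\at(r_i)_i}=\tl(\varsigma(X))\,(\hdenot{\varsigma}{r_i})_i=g((\hdenot{\varsigma}{r_i})_i)\in\hdenot{}{\tya}$.

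I expect the main obstacle to be the syntactic half of the \rulefont{[\,]I} case: unlike the modal system, where $\fa(r)=\varnothing$ makes the bound atoms irrelevant, here the box genuinely binds $a_1,\dots,a_n$, so one must push the valuation-to-substitution step (Lemma~\ref{lemm.typesub.c}) through a context that still contains these atoms and then re-close it with \rulefont{[\,]I}, carefully tracking that $\varsigma_\unknowns$ is unaffected by the bound atoms and that $\fa(r\,\varsigma_\unknowns)\subseteq\{a_i\}_i$. The \rulefont{Ext} case is the other place the contextual generalisation bites, since the simple $\tl$-projection of the modal proof is replaced by genuine function application to the tuple of argument denotations.
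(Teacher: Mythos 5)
Your proposal is correct and follows essentially the same route as the paper's proof: induction on the typing derivation, with Lemma~\ref{lemm.typesub.c} supplying the syntactic component of the \rulefont{[\,]I} case, and the \rulefont{[\,]E} and \rulefont{Ext} cases handled exactly as the paper handles them. The only divergence is in \rulefont{[\,]I}, where the paper applies Lemma~\ref{lemm.typesub.c} directly to the conclusion $\Gamma\cent[a_i{:}\tya_i]r:[\tya_i]\tya$ (so the hypothesis $\Gamma\cent\varsigma$ is already available) and then strengthens with Proposition~\ref{prop.ws.c}, whereas your variant applies it to the premise at the extended context $\Gamma,(a_i{:}\tya_i)_i$ and so must first extend $\varsigma$ over the bound atoms---a step that silently assumes each $\hdenot{}{\tya_i}$ is nonempty (true here, since every type has a closed constant-built inhabitant, but worth making explicit).
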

\end{frametxt}
\begin{proof}
By induction on the the derivation of $\Gamma\cent r:\tya$. 
Most of the rules follow by properties of sets and functions.
We consider the interesting cases:
\begin{itemize*}
\item
\emph{Rule \rulefont{[\,]I}.}\quad
Suppose $\Gamma,(a_i{:}\tya_i)_1^n\cent r:\tya$ so that by \rulefont{[\,]I} $\Gamma\cent [a_i{:}\tya_i]r:[\tya_i]\tya$.
Suppose $\fa(r){\subseteq}\{a_1,\dots,a_n\}$ and $\Gamma\cent\varsigma$.
Using Lemma~\ref{lemm.typesub.c} $\varnothing\cent [a_i{:}\tya_i](r\varsigma_\unknowns):\tya$.

Suppose $x_i\in\hdenot{}{\tya_i}$ for $1{\leq}i{\leq}n$.
By  Definition~\ref{defn.gamma.varsigma.c}
$$
\Gamma,(a_i{:}\tya_i)_1^n \cent \varsigma[a_i{\ssm}x_i]_1^n
$$
so by inductive hypothesis for the derivation of $\Gamma,(a_i{:}\tya_i)_1^n\cent r:\tya$ it follows that 
$$
\hdenot{\varsigma[a_i{\ssm}x_i]_1^n}{r}\in\hdenot{}{\tya} . 
$$
Now this was true for arbitrary $x_i$ and it follows from Definition~\ref{defn.interpret.types.c} that $\hdenot{\varsigma}{[a_i{:}\tya_i]r}\in\hdenot{}{[\tya_i]\tya}$ as required. 
\item
\emph{Rule \rulefont{[\,]E}.}\quad
Suppose $\Gamma,X{:}[\tya_i]\tya\cent r:\tyb$ and $\Gamma\cent s:[\tya_i]\tya$ so that by \rulefont{[\,]E} $\Gamma\cent \letbox{X}{s}{r}:\tyb$.

Suppose $\Gamma\cent\varsigma$.
By inductive hypothesis for $\Gamma\cent s:[\tya_i]\tya$ we have $\hdenot{\varsigma}{s}\in\hdenot{}{[\tya_i]\tya}$.

It follows by Definition~\ref{defn.gamma.varsigma.c} that 
$\Gamma,X{:}[\tya_i]\tya\cent \varsigma[X{\ssm}\hdenot{\varsigma}{s}]$ so by inductive hypothesis for $\Gamma,X{:}[\tya_i]\tya\cent r:\tyb$ we have $\hdenot{\varsigma[X{\ssm}\hdenot{\varsigma}{s}]}{r}\in\hdenot{}{\tyb}$.
We now observe by Definition~\ref{defn.denot.terms.c} that 
$$
\hdenot{\varsigma}{\letbox{X}{s}{r}}=\hdenot{\varsigma[X{\ssm}\hdenot{\varsigma}{s}]}{r}\in\hdenot{}{\tyb}.
$$
\item
\emph{Rule \rulefont{Ext}.}\quad
Suppose 
$\Gamma,X{:}[\tya_i]_1^n\tya\cent r_i:\tya_i$ for $1{\leq}i{\leq}n$ so that by \rulefont{Ext} $\Gamma,X{:}[\tya_i]_1^n\tya\cent X\at (r_i)_1^n{:}\tya$. 

By inductive hypothesis for the typings $\Gamma,X{:}[\tya_i]_1^n\tya\cent r_i:\tya_i$ we have $\hdenot{\varsigma}{r_i}\in\hdenot{}{\tya_i}$ for $1{\leq}i{\leq}n$.

Suppose $\Gamma,X{:}[\tya_i]\tya\cent\varsigma$. 
By Definitions~\ref{defn.gamma.varsigma.c} and~\ref{defn.denot.terms.c} this means that $\varsigma(X)=([a_i{:}\tya_i]_1^n r')::f$ for some $\varnothing\cent [a_i{:}\tya_i] r':[\tya_i]\tya$ and some $f\in(\Pi_{i{=}1}^n\hdenot{}{\tya_i})\fto\hdenot{}{\tya}$.
It follows that $f\,(\hdenot{\varsigma}{r_i})_1^n\in\hdenot{}{\tya}$ as required.
\item
\emph{Rule \rulefont{Hyp}.}\quad
Suppose $\Gamma,a{:}\tya\cent\varsigma$.
By Definition~\ref{defn.gamma.varsigma.c} this means that $\varsigma(a)\in\hdenot{}{\tya}$.
By Definition~\ref{defn.denot.terms.c} $\hdenot{\varsigma}{a}=\varsigma(a)$.
The result follows.
\item
\emph{Rule \rulefont{{\fto}I}.}\quad
Suppose $\Gamma,a{:}\tya\cent r:\tyb$ so that by \rulefont{{\fto}I} $\Gamma\cent\lam{a{:}\tya}r:\tya\fto\tyb$.
Suppose $\Gamma\cent\varsigma$ and choose any $x\in\hdenot{}{\tya}$.
It follows that 
$
\Gamma,a{:}\tya\cent\varsigma[a{\ssm}x]
$
and so by inductive hypothesis that
$
\hdenot{\varsigma[a{\ssm}x]}{r} \in\hdenot{}{\tyb} .
$

Since $x\in\hdenot{}{\tya}$ was arbitrary, by Definition~\ref{defn.denot.terms.c} we have that
$$
\hdenot{\varsigma}{\lam{a{:}\tya}r}
=
(x\in\hdenot{}{\tya}\mapsto \hdenot{\varsigma[a{\ssm}x]}{r}) \in\hdenot{}{\tya\fto\tyb} .
$$
\end{itemize*}
\end{proof}

\begin{corr}
\label{corr.later.c}
\begin{enumerate*}
\item
There is no term $s$ such that $\varnothing\cent s:(\nat\fto\nat)\fto[\,](\nat\fto\nat)$ is typable and such that the map 
$\lam{x{\in}\nat^\nat}\hd(\hdenot{\varnothing}{s}\,x)\in\hd(\hdenot{}{[\,](\nat\fto\nat)})^{\hdenot{}{\nat\fto\nat}}$ is injective.
\item
There is no term $s$ such that $\varnothing\cent s:(\nat\fto\nat)\fto[\nat]\nat$ is typable and such that the map 
$\lam{x{\in}\nat^\nat}\hd(\hdenot{\varnothing}{s}\,x)\in\hd(\hdenot{}{[\nat]\nat})^{\hdenot{}{\nat\fto\nat}}$ is injective.
\end{enumerate*}
\end{corr}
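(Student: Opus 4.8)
The plan is to run exactly the same cardinality argument as in Corollary~\ref{corr.later}, now pushed through Theorem~\ref{thrm.type.soundness.c} and applied to each of the two contextual codomains in turn. In both parts the source of the map is $\hdenot{}{\nat\fto\nat}=\nat^\nat$, which is uncountable, so it suffices to observe that the stated codomain is countable; an injection from an uncountable set into a countable one cannot exist, and no $s$ of the required kind can be typable.

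First I would fix a putative closed $s$ with $\varnothing\cent s:(\nat\fto\nat)\fto[\,](\nat\fto\nat)$ in Part~1, and $\varnothing\cent s:(\nat\fto\nat)\fto[\nat]\nat$ in Part~2. By Theorem~\ref{thrm.type.soundness.c} together with the function-space clause of Figure~\ref{fig.denot.types.c}, $\hdenot{\varnothing}{s}$ is then a genuine function from $\hdenot{}{\nat\fto\nat}$ into $\hdenot{}{[\,](\nat\fto\nat)}$ (resp.\ $\hdenot{}{[\nat]\nat}$). Consequently, for each $x\in\nat^\nat$ the value $\hd(\hdenot{\varnothing}{s}\,x)$ lies in $\hd(\hdenot{}{[\,](\nat\fto\nat)})$ (resp.\ $\hd(\hdenot{}{[\nat]\nat})$), so the map displayed in the statement is well-defined with the claimed codomain.

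Next I would unpack Figure~\ref{fig.denot.types.c} to identify these codomains. For a contextual type $[\tya_i]_1^n\tya$, the head (first) component of $\hdenot{}{[\tya_i]_1^n\tya}$ is the set $\{[a_i{:}\tya_i]_1^n r\mid\varnothing\cent[a_i{:}\tya_i]_1^n r:[\tya_i]_1^n\tya\}$ of closed syntax of that shape; by Definition~\ref{defn.hd.tl}, taking $\hd$ projects onto exactly this component, discarding the function-space tail. In Part~1 the context is empty ($n=0$) and the set is $\{[\,]r\mid\varnothing\cent[\,]r:[\,](\nat\fto\nat)\}$; in Part~2 it is $\{[a{:}\nat]r\mid\varnothing\cent[a{:}\nat]r:[\nat]\nat\}$. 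Each is a set of finite terms over the (countable) signature, hence countable, exactly as $\hd(\hdenot{}{\Box(\nat\fto\nat)})$ was countable in Corollary~\ref{corr.later}. The tail factors $\hdenot{}{\nat\fto\nat}^{(\cdot)}$ and $\hdenot{}{\nat}^{\hdenot{}{\nat}}$, though uncountable, are irrelevant since $\hd$ never sees them.

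I do not expect any real obstacle here: the whole content is the uncountable-to-countable clash, and the two parts differ only in which contextual type (empty versus singleton context) one unpacks. The single point worth stating carefully is that $\hd$ genuinely lands in the countable syntactic component rather than the inflated pairing, which is immediate from Definition~\ref{defn.hd.tl}; given that, a map $\nat^\nat\to\hd(\hdenot{}{[\,](\nat\fto\nat)})$ (resp.\ $\nat^\nat\to\hd(\hdenot{}{[\nat]\nat})$) into a countable set cannot be injective, completing both parts.
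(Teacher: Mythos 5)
Your proposal is correct and follows exactly the paper's own argument: the paper's proof is the one-line observation that $\hd\hdenot{}{[\,](\nat\fto\nat)}$ and $\hd\hdenot{}{[\nat]\nat}$ are countable sets of closed syntax while $\hdenot{}{\nat\fto\nat}=\nat^\nat$ is uncountable, with soundness (Theorem~\ref{thrm.type.soundness.c}) supplying that the map is well-defined. You have simply spelled out the same cardinality clash in more detail.
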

\begin{proof}
$\hd\hdenot{}{[](\nat\fto\nat)}$ and $\hd\hdenot{}{[\nat]\nat}$ are both countable sets whereas $\hdenot{}{\nat\fto\nat}=\nat^\nat$ is uncountable. 
\end{proof}

%%%%%%%%%%%%%%%%%%%%%%%%%%%%%%%%%%%%%%%%%%%%%%%%%%%%%%%%%
\subsection{Typings and denotations in the contextual system}

The examples from Subsection~\ref{subsect.some.programs} transfer to the contextual system if we translate $\Box \text{-}$ to $[\,]\text{-}$ and $\text{-}_\at$ to $\text{-}\at()$ (cf. Remark~\ref{rmrk.transfer.to.c}).
So the reader can look to Subsection~\ref{subsect.some.programs} for some simpler examples.

We now consider some slightly more advanced ideas.

\subsubsection{Moving between $[\tya]\tyb$ and $[\,](\tya\fto\tyb)$}

We can move between the types $[\tya]\tyb$ and $[\,](\tya\fto\tyb)$ using terms
$f:[\tya]\tyb\fto[\,](\tya\fto\tyb)$ 
and
$g:[\,](\tya\fto\tyb)\fto[\tya]\tyb$
defined as follows:
$$
\begin{array}{@{\varnothing\cent\ }r@{\ }l@{\ :\ }l}
f =& 
\lam{c{:}[\tya]\tyb}\letbox{X}{c}{[\,]\lam{a{:}\tya} X\at(a)}
&[\tya]\tyb\fto[\,](\tya\fto\tyb)
\\
g =& 
\lam{c{:}[\,](\tya\fto\tyb)}\letbox{X}{c}{[a{:}A]((X\at())a)}
&[\,](\tya\fto\tyb)\fto[\tya]\tyb
\end{array}
$$
It is routine to check that the typings above are derivable using the rules in Figure~\ref{fig.cmtt.types}.

Intuitively, we can write the following:
\begin{itemize*}
\item
$f$ maps $[a{:}\tya]r$ to $[\,]\lam{a{:}\tya}r$. 
\item
$g$ maps $[\,]\lam{a{:}\tya}r$ to $[a{:}\tya]((\lam{a{:}\tya}r)a)$\ \ (so $g$ introduces an $\beta$-redex). 
\end{itemize*}
This can be made formal as follows:
$$
\begin{array}{r@{\ }l}
\hd\hdenot{\varsigma}{f\,([a{:}\tya]r)}=&[\,]\lam{a{:}\tya}(r\varsigma_\unknowns)
\qquad\quad\text{and}
\\
\hd\hdenot{\varsigma}{g\,([\,]\lam{a{:}\tya}r)}=&[a{:}\tya]((\lam{a{:}\tya}(r\varsigma_\unknowns))a)
\end{array}
$$
The fact that $g$ introduces a $\beta$-redex reflects the fact that we have given our language facilities to build up syntax---but not to destroy it. 
We can build a precise inverse to $f$ if we give ourselves an explicit destructor for $\lambda$-abstraction.

So for instance, we can give ourselves option types and then admit a constant symbol $\tf{match\_lam}:[\,](\tya\fto \tya) \fto \tf{option} ([\tya]\tyb)$ with intended behaviour as follows: 
\[
\tf{match\_lam}\,(t) = \left \{
\begin{array}{ll}
  \tf{some}\ ([a{:}A]\,r) & \mbox{if $\hdenot{}{t} = ([\ ]\lambda a{:}A. r) {::} \_$}\\
  \tf{none} & \mbox{otherwise}
\end{array}\right.
\]
Using $\tf{match\_lam}$ we could map from $[\,](\tya\fto\tyb)$ to $[\tya]\tyb$ in a manner that is inverse to $f$.\footnote{We do not promote this language directly as a practical programming language, any more than one would promote the pure $\lambda$-calculus. 
We should add constants for the operations we care about.

The point is that in this language, there are things we can do using the modal types that cannot be expressed directly in the pure $\lambda$-calculus, no matter how many constants we might add. 
}

\subsubsection{The example of exponentiation, revisited}
\label{subsubsect.exp.c}

Recall from Subsection~\ref{subsubsect.exp} the discussion of exponentiation and how in the modal system the natural term to meta-program exponentiation introduced $\beta$-reducts.

The following term implements exponentiation:
$$
\begin{array}[t]{r@{\ }l}
\f{exp}\,0\Rightarrow\ & [b{:}\nat]1 
\\ 
\f{exp}\,(\tf{succ}\,n)\Rightarrow\ &\letbox{X}{[b{:}\nat]\f{exp}\,n}{[b{:}\nat](b*(X\at(b)))} 
\end{array}
$$
This term does not generate $\beta$-reducts in the way we noted of the corresponding term from Subsection~\ref{subsubsect.exp}. 
For instance, 
$$
\hd\hdenot{\varnothing}{\f{exp}\,2} = [b{:}\nat](b*b*1) .
$$
Compare this with Subsection~\ref{subsubsect.exp.denot}.

Think of the $[b{:}\nat]$ in $[b{:}\nat]r$ as a `translucent lambda', and think of $X\at(r_i)$ as a corresponding application.
We can use these to carry out computation---a rather weak computation; just a few substitutions as formalised in the clause for $X\at(r_i)_i$ in Figure~\ref{fig.sub.c}---but this computation occurs \emph{inside} a modality, which we could not do with an ordinary $\lambda$-abstraction.

Now might be a good moment to return to the clause for $[a_i{:}\tya_i]r$ in Figure~\ref{fig.denot.terms.c}:
$$
\hdenot{\varsigma}{[a_i{:}\tya_i]_1^n r}=[a_i{:}\tya_i]_1^n(r\,\varsigma_\unknowns) :: \bigl(\lam{(x_i{\in}\hdenot{}{\tya_i})_1^n}\hdenot{\varsigma[a_i{\ssm}x_i]_1^n}{r}\bigr) 
$$
We see the $\lambda$-abstraction in the semantics, and we also see its `translucency': the $\lambda$-abstraction appears in the extension, but is also associated with a non-functional intension.

%We can also see from the example of exponentiation how the let-binding is used, rather than just writing $r\at(r_i)$: this allows fine control of evaluation and it allows us to trigger the rather weak computation associated with the contextual modality $[b{:}\nat]$.
%Thus, it is important for this exponentiation example to work that the $[b{:}\nat]\f{exp}\,n$ on the left appears \emph{outside} the scope of the $[b{:}\nat]$ on the right, so that when $X$ is instantiated a little bit of trivial computation takes place to eliminate the $[b{:}\nat]$.  

%%%%%%%%%%%%%%%%%%%%%%%%%%%%%%%%%%%%%%%%%%%%%%%%%%%
\subsubsection{Syntax to denotation}
\label{subsect.code.to.values}

\maketab{tab2}{@{\hspace{-2em}}R{8em}@{\ }L{20em}@{}L{12em}}
There is a schema of \emph{unpack} programs, parameterised over $(a_i{:}\tya_i)_1^n$ which evaluates syntax with $n$ free atoms:
\begin{tab2}
\f{unpack}=&\lam{b{:}[\tya_i]_1^n\tyb}\letbox{X}{b}\lam{(a_i{:}\tya_i)_1^n}X\at (a_i)_1^n
&:[\tya_i]_1^n\tyb\fto((\tya_i)_1^n\fto\tyb)
\end{tab2}
We can express the following connection between $\f{unpack}$ (which is a term) and $\tl$ (which is a function on denotations):
\begin{lemm}
\label{lemm.unpack.routine}
Suppose $\Gamma\cent [a_i{:}\tya_i]s:[\tya_i]\tya$ and $\Gamma\cent\varsigma$.
Then 
$$
\hdenot{\varsigma}{\f{unpack}\,[a_i{:}\tya_i]s}=\tl\hdenot{\varsigma}{[a_i{:}\tya_i]s} .
$$
\end{lemm}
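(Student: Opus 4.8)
The plan is to prove the equation by directly unfolding both sides using the clauses of Figure~\ref{fig.denot.terms.c}, relying on Theorem~\ref{thrm.type.soundness.c} to guarantee that every denotation that appears is defined and lies in the expected set. First I would compute the right-hand side. By the clause for $[a_i{:}\tya_i]r$ we have $\hdenot{\varsigma}{[a_i{:}\tya_i]s}=[a_i{:}\tya_i](s\,\varsigma_\unknowns)::f$, where $f=\lam{(x_i{\in}\hdenot{}{\tya_i})_1^n}\hdenot{\varsigma[a_i{\ssm}x_i]_1^n}{s}$, so that $\tl\hdenot{\varsigma}{[a_i{:}\tya_i]s}=f$ by (the contextual analogue of) Definition~\ref{defn.hd.tl}. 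Thus the target is to show that the left-hand side computes to exactly this function $f$.

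Next I would unfold the left-hand side one term-former at a time. Writing $y_0=\hdenot{\varsigma}{[a_i{:}\tya_i]s}$, the application clause together with the $\lambda$-clause gives $\hdenot{\varsigma}{\f{unpack}\,[a_i{:}\tya_i]s}=\hdenot{\varsigma[b{\ssm}y_0]}{\letbox{X}{b}{\lam{(a_i{:}\tya_i)_1^n}X\at(a_i)_1^n}}$. Since $\hdenot{\varsigma[b{\ssm}y_0]}{b}=y_0$, the $\letbox{}{}{}$ clause rewrites this to $\hdenot{\varsigma'}{\lam{(a_i{:}\tya_i)_1^n}X\at(a_i)_1^n}$ with $\varsigma'=\varsigma[b{\ssm}y_0][X{\ssm}y_0]$. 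Unfolding the $\lambda$-abstractions and then the clause for $X\at(a_i)_1^n$, the crucial computation is that $\tl(\varsigma'(X))=\tl(y_0)=f$ while $\hdenot{\varsigma'[a_i{\ssm}x_i]_1^n}{a_j}=x_j$, whence $\hdenot{\varsigma'[a_i{\ssm}x_i]_1^n}{X\at(a_i)_1^n}=f\,(x_i)_1^n=\hdenot{\varsigma[a_i{\ssm}x_i]_1^n}{s}$. So the left-hand side is precisely the function sending $(x_i)_1^n$ to $\hdenot{\varsigma[a_i{\ssm}x_i]_1^n}{s}$, i.e.\ $f$, which matches the right-hand side.

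The proof is therefore essentially a chain of definitional rewrites, and the only delicate points are bookkeeping. The first is tracking the nested valuation extension $\varsigma'=\varsigma[b{\ssm}y_0][X{\ssm}y_0]$: one must note that $b$ does not occur in $\lam{(a_i{:}\tya_i)_1^n}X\at(a_i)_1^n$, so its binding is irrelevant, and that $\varsigma'(X)$ is exactly $y_0$ whose tail is $f$. The second, and the one I expect to be the genuine (if content-free) obstacle, is reconciling the fact that $\f{unpack}$ denotes a \emph{curried} function---its result type $(\tya_i)_1^n\fto\tya$ is a nested arrow---whereas $f=\tl\hdenot{\varsigma}{[a_i{:}\tya_i]s}$ is presented as a function on the product $\Pi_i\hdenot{}{\tya_i}$. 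These agree under the currying identification already implicit in the paper's $\lam{(x_i)_1^n}$ and $X\at(r_i)_1^n$ conventions, so no real mathematics is hidden there; the equality is the intended one once curried and tupled application are read as in Figures~\ref{fig.denot.types.c} and~\ref{fig.denot.terms.c}.
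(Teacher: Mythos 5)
Your proposal is correct and is essentially the paper's own proof: the paper dispatches this lemma with ``long but routine calculations unpacking Figure~\ref{fig.denot.terms.c}'', and your chain of definitional rewrites (application, $\lambda$, $\letbox{X}{s}{r}$, then $X\at(a_i)_1^n$) is exactly that calculation carried out explicitly. Your two bookkeeping remarks---that the $b$ and $X$ entries of the extended valuation are harmless, and that the curried function space is identified with the product-indexed one per the conventions of Figures~\ref{fig.denot.types.c} and~\ref{fig.denot.terms.c}---are the right points to flag and match the paper's intent.
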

\begin{proof}
By long but routine calculations unpacking Figure~\ref{fig.denot.terms.c}.
\end{proof}

As an aside, note that if we have diverging terms $\omega_i:\tya_i$ then we can combine this with $\f{unpack}$ to obtain a term $\varnothing\cent \lam{a{:}[\tya_i]\tya}\f{unpack}\,a\,(\omega_i):[\tya_i]\tya \fto \tya$.
In a call-by-name evaluation strategy, this loops forever if evaluation tries to refer to one of the (diverging) arguments.

\subsubsection{Modal-style axioms}
\label{subsect.modal-style.axioms}

As in Subsection~\ref{subsubsection.one.line} we can write functions corresponding to axioms from the necessity fragment of S4:
\begin{tab2}
T =& \lam{a{:}[\,]\tya}\letbox{X}{a}{X \at ()}
&: [\,]\tya\fto\tya
\\
4 =& \lambda x.\letbox{X}{x}{[\ ][\ ] X \at ()}
&:  [\ ]A \fto [\ ][\ ]A  
\\
K =& \lam{f}\lam{x}\letbox{F}{f}{\letbox{X}{x}{F \at ()\ X \at ()}}
& : [\ ](A \fto B) \fto [\ ]A \fto [\ ]B
\end{tab2}
(Of course, $T$ is just a special case of $\f{unpack}$ above.)

\subsubsection{More general contexts}

Versions of the terms $4$ and $K$ exist for non-empty contexts. 
For example, we can have a schema of $4_\Gamma$ axioms, for any context $\Gamma$:
\begin{tab2}
4_{\Gamma} = & \lam{x{:}[\Gamma]\tya}\letbox{X}{x}{[\ ][\Gamma]X \at(\id_\Gamma)}
&:  [\Gamma] A \fto [\ ][\Gamma] A 
\end{tab2}
Here and below we abuse notation by putting $[\Gamma]$ in the type; we intend the \emph{types} in $\Gamma$, with the variables removed.

Above, $\id_\Gamma$ is the identity substitution defined inductively on
$\Gamma$ by
$$
\id_{\cdot}  =  \cdot
\qquad\text{and}\qquad
\id_{\Gamma, x{:}A} =  \id_\Gamma, x .
$$
Note that the terms realising $4_\Gamma$ are not uniform, because the substitution $\id_\Gamma$ is not a term in the language; it is a
meta-level concept, producing different syntax depending on $\Gamma$.

Similarly, we have a schema of $K_\Gamma$ terms:
\begin{tab2}
K_\Gamma =&  \lam{f} \lam{x} \letbox{F}{f}{\letbox{X}{x}{[\Gamma] F \at \id_{\Gamma}\,X \at \id_{\Gamma}}}
&:  [\Gamma] (A \fto B) \fto [\Gamma] A \fto [\Gamma] B  
\end{tab2}
\dots and terms exposing the structural rules of contexts:
\begin{tab2}
\mathit{weaken}_{\Gamma_1, \Gamma_2} =&  \lambda z. \letbox{Z}{z}{[\Gamma_1, \Gamma_2] (Z \at (\id_{\Gamma_1}))}
&: [\Gamma_1] A \fto [\Gamma_1, \Gamma_2] A 
\\
\mathit{contract}_{B} = & \lambda z. \letbox{Z}{z}{[x{:}B] (Z \at (x,x))}
&:  [B, B] A \fto [B] A 
\\
\mathit{exchange}_{B,C}  = & \lambda z. \letbox{Z}{z}{[y{:}C, x{:}B] (Z \at (x, y))}
&:  [B, C] A \fto [C, B] A 
\end{tab2}
We give $\mathit{weaken}$ in full generality and then for brevity $\mathit{contract}$ and $\mathit{exchange}$ only for two-element contexts.
If we think in terms of multimodal logic \cite{gabbay:mandml} these terms `factor', `fuse', and `rearrange' contexts/modalities.

%%%%%%%%%%%%%%%%%%%%%%%%%%%%%%%%%%%%%%%
\section{Shapeliness}
\label{sect.shapely}

We have seen semantics to both the modal and contextual type systems.
We have also noted that, like function-spaces, our semantics \emph{inflates}.
We discussed why in Remark~\ref{rmrk.why.inflate} and Subsection~\ref{subsubsect.explain}.

In this section we delve deeper into the fine structure of the denotation to isolate a property of those parts of the denotation that can be described by syntax (Definition~\ref{defn.shape}).
This is an attractive well-formedness/well-behavedness property in its own right, and furthermore, we can exploit it to strengthen Corollaries~\ref{corr.later} and~\ref{corr.later.c} (see Corollary~\ref{corr.more.c}).

\begin{defn}
\label{defn.shape}
Define the \deffont{shapely} $x\in\hdenot{}{\tya}$ inductively by the rules in Figure~\ref{fig.shape}.

Call $\varsigma$ \deffont{shapely} when:
\begin{itemize*}
\item
$\varsigma(X)$ is shapely for every $X\in\dom(\varsigma)$.
\item
$\varsigma(a)$ is shapely for every $a\in\dom(\varsigma)$.
\end{itemize*}
\end{defn}

Intuitively, $x$ is shapely when, if it is intensional (so $x$ is in some $\hdenot{}{[\tya_i]\tya}$) then the intension $\hd(x)$ and the extension $\tl(x)$ match up.
In particular, this means that elements in $\hdenot{}{\mathbb B}$, $\hdenot{}{\nat}$, or $\hdenot{}{\nat\fto\nat}$---are automatically shapely. 
Conversely, $x$ is not shapely if it has an intension and an extension and they do not match up.
The paradigmatic non-shapely element is $[\,]0::1$, since the intension `the syntax $0$' does not match the extension `the number $1$'.

\begin{figure}[t]
$$
\begin{gathered}
\begin{prooftree}
\tl(x)\in\hdenot{}{\tya} \ \text{is shapely}
\quad
x=\hdenot{\varnothing}{\hd(x)}
\justifies
x\in\hdenot{}{[\,]\tya}\ \text{is shapely}
\using\rulefont{Shape[\,]}
\end{prooftree}
\\[2ex]
\begin{prooftree}
\Forall{y\in\hdenot{}{\tyb}}y\ \text{is shapely}\limp xy\in\hdenot{}{\tya}\ \text{is shapely}
\justifies
x\in\hdenot{}{\tyb\fto\tya}\ \text{is shapely}
\using\rulefont{ShapeFun}
\end{prooftree}
\\[2ex]
\begin{prooftree}
(x\in\hdenot{}{\mathbb B})
\justifies
x\in\hdenot{}{\mathbb B}\ \text{is shapely}
\using\rulefont{Shape\mathbb B}
\end{prooftree}
\qquad
\begin{prooftree}
(x\in\hdenot{}{\nat})
\justifies
x\in\hdenot{}{\nat}\ \text{is shapely}
\using\rulefont{Shape\nat}
\end{prooftree}
\end{gathered}
$$
\caption{Shapeliness}
\label{fig.shape}
\end{figure}

\begin{lemm}
\label{lemm.shape.back}
\begin{enumerate*}
\item
If $x\in\hdenot{}{\tyb\fto\tya}$ is shapely and $y\in\hdenot{}{\tyb}$ is shapely, then so is $xy\in\hdenot{}{\tya}$.
\item
If $x\in\hdenot{}{[\tya_i]\tya}$ is shapely then $x=\hdenot{\varnothing}{\hd(x)}$.
\item
Every $f\in\nat^\nat$ is shapely.
\end{enumerate*}
\end{lemm}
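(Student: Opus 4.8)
The plan is to handle all three parts by \emph{inversion} on the inductive definition of shapeliness in Figure~\ref{fig.shape}. The observation that makes this work is that the defining rules are syntax-directed by the outermost type-former: the conclusions of \rulefont{Shape[\,]}, \rulefont{ShapeFun}, \rulefont{Shape\mathbb B}, and \rulefont{Shape\nat} concern pairwise-distinct type shapes. Hence each type admits exactly one applicable rule, and reading a rule backwards to recover its premises is unambiguous and legitimate.

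For part~1 I would note that the only rule that can place a shapely element at a function type $\tyb\fto\tya$ is \rulefont{ShapeFun}. Since $x\in\hdenot{}{\tyb\fto\tya}$ is assumed shapely, inversion on the derivation of this fact yields precisely the premise of \rulefont{ShapeFun}: that $xy\in\hdenot{}{\tya}$ is shapely for every shapely $y\in\hdenot{}{\tyb}$. Instantiating this with the given shapely $y$ closes the case immediately.

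For part~2 I would argue similarly: the only shapeliness rule whose conclusion lands at a contextual type is the one displayed as \rulefont{Shape[\,]} (together with its evident reading for a non-empty context), and in every such case the equation $x=\hdenot{\varnothing}{\hd(x)}$ appears among the premises. Hence inverting the derivation of ``$x$ is shapely'' returns exactly this equation, which is what we want. For part~3 I would instead reason forwards: fixing $f\in\nat^\nat=\hdenot{}{\nat\fto\nat}$, it suffices by \rulefont{ShapeFun} to show that $fy$ is shapely for every shapely $y\in\hdenot{}{\nat}$; but $fy\in\hdenot{}{\nat}$, so \rulefont{Shape\nat} makes $fy$ shapely outright, discharging the premise and giving that $f$ is shapely.

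None of the three parts is genuinely hard; the entire content is the recognition that shapeliness is a syntax-directed rule system, so the only care needed is to confirm that inversion selects a unique rule per type-former. The mildest subtlety worth flagging is the quantifier structure of \rulefont{ShapeFun}: parts~1 and~3 are exactly its two readings---backwards (inversion, extracting the universally-quantified premise) and forwards (discharging that premise via \rulefont{Shape\nat})---and one must keep straight that its premise ranges over \emph{shapely} inputs $y$, which in the case of $\nat$ is every input, since \rulefont{Shape\nat} renders all of $\hdenot{}{\nat}$ shapely.
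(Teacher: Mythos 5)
Your proposal is correct and matches the paper's proof, which likewise derives parts 1 and 2 by inversion on the form of the inductive definition in Figure~\ref{fig.shape} and part 3 by a forward application of \rulefont{ShapeFun} using the fact that \rulefont{Shape\nat} makes every element of $\hdenot{}{\nat}$ shapely. Your extra care about syntax-directedness and the empty-versus-general-context reading of \rulefont{Shape[\,]} is just a more explicit rendering of the same argument.
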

\begin{proof}
The first two parts follow from the form of the inductive definition in Figure~\ref{fig.shape}.
The third part is a simple application of \rulefont{ShapeFun}, noting that by \rulefont{Shape\mathbb N} every $n\in\nat$ is shapely. 
\end{proof}

We can combine Lemmas~\ref{lemm.shape.back} and~\ref{lemm.unpack.routine} to get a nice corollary of shapeliness ($\f{unpack}$ is from Subsection~\ref{subsect.code.to.values}):
\begin{corr}
If $x\in\hdenot{}{[\tya_i]A}$ is shapely then $\tl(x)=\hdenot{\varnothing}{\f{unpack}\,\hd(x)}$.
\end{corr}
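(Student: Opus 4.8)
The plan is to observe that this corollary follows almost immediately by combining the two named results, \emph{once} we unpack what membership in $\hdenot{}{[\tya_i]\tya}$ tells us about $\hd(x)$. First I would note that since $x\in\hdenot{}{[\tya_i]\tya}$, by the defining clause for contextual types in Figure~\ref{fig.denot.types.c} we have $\hd(x)=[a_i{:}\tya_i]s$ for some closed syntax with $\varnothing\cent[a_i{:}\tya_i]s:[\tya_i]\tya$, and $\tl(x)$ is the accompanying element of the function-space $\hdenot{}{\tya}^{\Pi_i\hdenot{}{\tya_i}}$. This secures the typing hypotheses needed to invoke Lemma~\ref{lemm.unpack.routine}.

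Next I would apply Lemma~\ref{lemm.unpack.routine} in the special case $\Gamma=\varnothing$ and $\varsigma=\varnothing$ (the empty valuation, for which $\varnothing\cent\varnothing$ holds trivially), taking the term $[a_i{:}\tya_i]s$ there to be $\hd(x)$. This yields
$$
\hdenot{\varnothing}{\f{unpack}\,\hd(x)}=\tl\hdenot{\varnothing}{\hd(x)} .
$$
Then I would use the hypothesis that $x$ is shapely: by Lemma~\ref{lemm.shape.back}(2), shapeliness of $x\in\hdenot{}{[\tya_i]\tya}$ gives $x=\hdenot{\varnothing}{\hd(x)}$. Substituting this into the right-hand side above replaces $\hdenot{\varnothing}{\hd(x)}$ by $x$, so $\tl\hdenot{\varnothing}{\hd(x)}=\tl(x)$, and we conclude $\tl(x)=\hdenot{\varnothing}{\f{unpack}\,\hd(x)}$ as required.

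There is really no serious obstacle here; the corollary is a clean consequence of the preceding two lemmas, and the only thing to check carefully is that the hypotheses of Lemma~\ref{lemm.unpack.routine} are genuinely met. The single non-decorative step is the appeal to shapeliness via Lemma~\ref{lemm.shape.back}(2), which is exactly what lets us identify $\hdenot{\varnothing}{\hd(x)}$ with $x$ itself rather than with some other element sharing the same intension $\hd(x)$; without shapeliness the inflated semantics could supply an $x$ whose extension $\tl(x)$ disagrees with the canonical denotation of its syntax, and the equation would fail.
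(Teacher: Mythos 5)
Your proposal is correct and follows essentially the same route as the paper's own proof: identify $x$ with $\hdenot{\varnothing}{\hd(x)}$ via part~2 of Lemma~\ref{lemm.shape.back}, apply $\tl$, and invoke Lemma~\ref{lemm.unpack.routine}. The extra care you take in verifying that $\hd(x)$ has the right syntactic form and typing to meet the hypotheses of Lemma~\ref{lemm.unpack.routine} is a welcome (if routine) addition that the paper leaves implicit.
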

\begin{proof}
Suppose $x\in\hdenot{}{[\tya_i]A}$ is shapely, so that by part~2 of Lemma~\ref{lemm.shape.back} $x=\hdenot{\varnothing}{\hd(x)}$.
We apply $\tl$ to both sides and use Lemma~\ref{lemm.unpack.routine}.
\end{proof}

\begin{lemm}
\label{lemm.sub.X.c}
Suppose $\Gamma,X:[\tyb_i]\tyb\cent r:\tya$, $\Gamma\cent [a_i{:}\tyb_i]s:[\tyb_i]\tyb$, and $\Gamma\cent\varsigma$.
Then $\hdenot{\varsigma}{r[X{\ssm}[a_i{:}\tyb_i]s]}=\hdenot{\varsigma[X{\ssm}\hdenot{\varsigma}{[\tyb_i]s}]}{r}$.
\end{lemm}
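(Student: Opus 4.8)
The plan is to prove this by induction on the derivation of $\Gamma,X{:}[\tyb_i]\tyb\cent r:\tya$, exactly paralleling the proof of Lemma~\ref{lemm.Xsub.denot} for the modal system. Writing $\theta$ for the single unknowns-substitution $[X{\ssm}[a_i{:}\tyb_i]s]$ and $\varsigma'$ for $\varsigma[X{\ssm}\hdenot{\varsigma}{[a_i{:}\tyb_i]s}]$, the goal is $\hdenot{\varsigma}{r\theta}=\hdenot{\varsigma'}{r}$. The rules $\rulefont{Hyp}$, $\rulefont{Const}$, $\rulefont{{\to}I}$, $\rulefont{{\to}E}$, and $\rulefont{[\,]E}$ (for a $\letbox{Y}{-}{-}$ with $Y\neq X$) are routine: both $\theta$ and the denotation commute with these term-formers, and $X$ does not occur at the head, so the inductive hypotheses assemble directly (using for $\rulefont{{\to}I}$ that atom-updates and the $X$-update of the valuation commute). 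As in the modal case, the two cases that do real work are $\rulefont{[\,]I}$ and $\rulefont{Ext}$.

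For $\rulefont{[\,]I}$, say $r=[c_k{:}\tyc_k]r'$ (renaming the bound atoms so they avoid the $a_i$ and $\fa(s)$). Since $\fa(\theta(X))=\varnothing$, we have $r\theta=[c_k{:}\tyc_k](r'\theta)$, so I would unfold the $[\,]I$ clause of Figure~\ref{fig.denot.terms.c} on both sides and compare intension and extension separately. For the extension I apply the inductive hypothesis to $r'$ under the extended valuation $\varsigma[c_k{\ssm}x_k]$, using a denotational weakening fact (the contextual analogue of Proposition~\ref{prop.ws.denote}) to see that $\hdenot{\varsigma[c_k{\ssm}x_k]}{[a_i{:}\tyb_i]s}=\hdenot{\varsigma}{[a_i{:}\tyb_i]s}$ because $c_k\notin\fa([a_i{:}\tyb_i]s)$. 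For the intension I must show $(r'\theta)\varsigma_\unknowns=r'\varsigma'_\unknowns$; unfolding Definition~\ref{defn.varsigmaunknowns.c} gives $\varsigma'_\unknowns=\varsigma_\unknowns[X{\ssm}[a_i{:}\tyb_i](s\varsigma_\unknowns)]$, so this is precisely a commutation-of-substitutions identity, the contextual counterpart of Lemma~\ref{lemm.comm.theta.X} (whose side-condition $\fv(\varsigma_\unknowns(Z))=\varnothing$ holds because each $\varsigma_\unknowns(Z)$ is closed syntax of a contextual type).

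The case of $\rulefont{Ext}$, where $r=X\at(r_i)_1^n$, is the genuinely new obstacle, since here the substitution fires and, by Definition~\ref{defn.sub.c}, triggers an atoms-substitution $r\theta=s[a_i{\ssm}r_i\theta]_i$. On the semantic side, $\tl(\varsigma'(X))=\tl(\hdenot{\varsigma}{[a_i{:}\tyb_i]s})$ is by Figure~\ref{fig.denot.terms.c} the function $\lam{(x_i)}\hdenot{\varsigma[a_i{\ssm}x_i]_i}{s}$, so $\hdenot{\varsigma'}{r}=\hdenot{\varsigma[a_i{\ssm}\hdenot{\varsigma'}{r_i}]_i}{s}$. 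To match this against $\hdenot{\varsigma}{s[a_i{\ssm}r_i\theta]_i}$ I would invoke a contextual, simultaneous version of the atoms-substitution Lemma~\ref{lemm.sub.var} (stating $\hdenot{\varsigma}{s[a_i{\ssm}t_i]_i}=\hdenot{\varsigma[a_i{\ssm}\hdenot{\varsigma}{t_i}]_i}{s}$), and then close the gap with the inductive hypothesis applied to each argument, $\hdenot{\varsigma}{r_i\theta}=\hdenot{\varsigma'}{r_i}$. The delicate point to get right here---and the reason this case is harder than its modal counterpart, where $X_\at$ carried no arguments---is the bookkeeping of $X$ possibly occurring free inside the arguments $r_i$: the arguments must be evaluated under $\varsigma'$ (equivalently, after applying $\theta$ under $\varsigma$), and it is the inductive hypothesis on the $r_i$ that reconciles the two. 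I expect establishing the two borrowed auxiliary facts (the contextual analogues of Lemmas~\ref{lemm.comm.theta.X} and~\ref{lemm.sub.var}) to be the only non-routine labour.
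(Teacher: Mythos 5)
Your proposal is correct and takes essentially the same approach as the paper: the paper's own proof is just the one-line remark that the result follows by a routine induction on the derivation of $\Gamma\cent r:\tya$, similar to the proof of Lemma~\ref{lemm.Xsub.denot}, which is exactly the induction you carry out. Your identification of the two working cases, of the contextual analogue of Lemma~\ref{lemm.comm.theta.X} for \rulefont{[\,]I}, and of the triggered atoms-substitution in \rulefont{Ext} (handled by a simultaneous contextual analogue of Lemma~\ref{lemm.sub.var}) correctly fills in the details the paper leaves implicit.
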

\begin{proof}
By a routine induction on the derivation of $\Gamma\cent r:\tya$, similar to the proof of Lemma~\ref{lemm.Xsub.denot}.
\end{proof}

\begin{corr}
\label{corr.shape.varsigmaun}
Suppose $\Gamma\cent r:\tya$,\ \ $\Gamma\cent\varsigma$,\ \ and $\varsigma$ is shapely.
Then $\hdenot{\varsigma}{r}=\hdenot{\varsigma|_\atoms}{r\,\varsigma_\unknowns}$.
\end{corr}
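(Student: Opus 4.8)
The plan is to read the claim as the statement that, for a shapely valuation, evaluating the unknowns \emph{through the valuation} agrees with first substituting them out \emph{syntactically} (via $\varsigma_\unknowns$) and only then evaluating. The bridge between the two is shapeliness: for each $X\in\dom(\varsigma)$ the value $\varsigma(X)\in\hdenot{}{[\tya_i]\tya}$ is shapely, so by part~2 of Lemma~\ref{lemm.shape.back} we have $\varsigma(X)=\hdenot{\varnothing}{\hd(\varsigma(X))}=\hdenot{\varnothing}{\varsigma_\unknowns(X)}$. This identity is exactly what is needed to turn the stored syntax $\varsigma_\unknowns(X)=\hd(\varsigma(X))$ back into the stored denotation $\varsigma(X)$.

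Concretely I would proceed in three moves. First, I would upgrade Lemma~\ref{lemm.sub.X.c} from a single unknown to the whole substitution $\varsigma_\unknowns$. This is a routine iteration: by Definition~\ref{defn.gamma.varsigma.c} and Figure~\ref{fig.denot.types.c} each $\varsigma_\unknowns(X)=\hd(\varsigma(X))$ has the form $[a_i{:}\tya_i]s_0$ with $\varnothing\cent[a_i{:}\tya_i]s_0:[\tya_i]\tya$, hence is \emph{closed}, so the individual unknowns-substitutions do not interfere and may be applied one at a time (each substitute is weakened into the ambient context by Proposition~\ref{prop.ws.c}, and its denotation is valuation-independent by closedness). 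Iterating Lemma~\ref{lemm.sub.X.c} over $\dom(\varsigma)\cap\unknowns$ gives
$$
\hdenot{\varsigma|_\atoms}{r\,\varsigma_\unknowns}=\hdenot{\tau}{r},\qquad \tau:=(\varsigma|_\atoms)\bigl[X{\ssm}\hdenot{\varsigma|_\atoms}{\varsigma_\unknowns(X)}\bigr]_{X\in\dom(\varsigma)\cap\unknowns}.
$$

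Second, I would identify $\tau$ with $\varsigma$. Since each $\varsigma_\unknowns(X)$ is closed its denotation does not depend on the valuation, so $\hdenot{\varsigma|_\atoms}{\varsigma_\unknowns(X)}=\hdenot{\varnothing}{\varsigma_\unknowns(X)}$ by Proposition~\ref{prop.ws.denote} (which holds for the contextual system by the same routine induction). Combining this with the shapeliness identity $\varsigma(X)=\hdenot{\varnothing}{\varsigma_\unknowns(X)}$ recorded above gives $\tau(X)=\varsigma(X)$ for every unknown $X\in\dom(\varsigma)$; and $\tau(a)=\varsigma(a)$ for every atom by construction. Thus $\dom(\tau)=\dom(\varsigma)$ and the two agree pointwise, i.e.\ $\tau=\varsigma$, so $\hdenot{\tau}{r}=\hdenot{\varsigma}{r}$. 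Chaining the two equalities yields $\hdenot{\varsigma}{r}=\hdenot{\varsigma|_\atoms}{r\,\varsigma_\unknowns}$, as required.

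The main obstacle to watch is the generalisation of Lemma~\ref{lemm.sub.X.c} in the first move, and in particular the bookkeeping that keeps each substitute closed (so that the substitutions commute and each stays well-typed across the iteration). It is worth stressing why I route the argument through substitution rather than through a direct structural induction on $r$: such an induction stumbles at $\letbox{X}{s}{r}$, where one would extend the valuation to $\varsigma[X{\ssm}\hdenot{\varsigma}{s}]$ and appeal to the inductive hypothesis on $r$, which requires $\hdenot{\varsigma}{s}$ itself to be shapely---a genuine shapeliness-soundness statement for arbitrary terms, not something available from the hypotheses here. The substitution route sidesteps this entirely, since it only ever invokes shapeliness of the values $\varsigma(X)$ already stored in the shapely valuation $\varsigma$, via part~2 of Lemma~\ref{lemm.shape.back}.
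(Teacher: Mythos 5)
Your proposal is correct and follows essentially the same route as the paper's own proof: decompose $\varsigma_\unknowns$ into single closed unknowns-substitutions, iterate Lemma~\ref{lemm.sub.X.c}, and use shapeliness via part~2 of Lemma~\ref{lemm.shape.back} together with Proposition~\ref{prop.ws.c} and a contextual version of Proposition~\ref{prop.ws.denote} to identify the resulting valuation with $\varsigma$. Your closing remark about why a direct structural induction on $r$ would stall at $\letbox{X}{s}{r}$ is a useful observation not made explicit in the paper, but it does not change the argument.
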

\begin{proof}
First, we note that the effect of $\varsigma_\unknowns$ can be obtained by concatenating $[X{\ssm}\hd(\varsigma(X))]$ for every $X\in\fv(r)$.
The order does not matter because by construction $\hd(\varsigma(X))$ is closed syntax (no free variables).
Furthermore since $\varsigma$ is shapely, $\varsigma(X)=\hdenot{\varnothing}{\hd(\varsigma(X))}$ so we can write $\varsigma$ as 
$$
\varsigma|_\atoms\cup [X{\ssm}\hdenot{\varsigma}{\hd(\varsigma(X))} \mid X\in\dom(\varsigma)],
$$
where here ${[X{\ssm} x_X \mid X\in \mathcal X]}$ is the map taking $X$ to $x_X$ for every ${X\in\mathcal X}$.\footnote{Strictly speaking we also need a version of Proposition~\ref{prop.ws.denote} for the contextual system; this is not hard.}
We now use Lemma~\ref{lemm.sub.X.c} for $[X{\ssm}\varsigma(X)]$ for each $X\in\fv(r)$, and Proposition~\ref{prop.ws.c}.
\end{proof}

\begin{prop}
\label{prop.shape.r}
Suppose $\Gamma\cent r:\tya$ and suppose $\Gamma\cent\varsigma$.
Then if $\varsigma$ is shapely then so is $\hdenot{\varsigma}{r}$.
\end{prop}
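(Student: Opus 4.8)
The plan is to argue by induction on the derivation of $\Gamma\cent r:\tya$, showing in each case that $\hdenot{\varsigma}{r}$ is shapely whenever $\varsigma$ is. The leaf and purely functional cases are immediate from the definitions. For \rulefont{Hyp} we have $\hdenot{\varsigma}{a}=\varsigma(a)$, which is shapely because $\varsigma$ is; for \rulefont{Const} I would rely on each $C^\hden$ being shapely (automatic for $\top,\bot$ and for the arithmetic constants, whose values lie in $\hdenot{}{o}$ or $\hdenot{}{\nat}$, and for $\tf{isapp}$ by \rulefont{ShapeFun} since its results land in $\hdenot{}{o}$). For \rulefont{{\to}E}, the inductive hypotheses make $\hdenot{\varsigma}{r'}$ and $\hdenot{\varsigma}{r}$ shapely, and part~1 of Lemma~\ref{lemm.shape.back} gives that their application is shapely. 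For \rulefont{{\to}I}, I would verify the premise of \rulefont{ShapeFun} directly: given a shapely $y\in\hdenot{}{\tya}$, the valuation $\varsigma[a{\ssm}y]$ is again shapely, so the inductive hypothesis applied to $\Gamma,a{:}\tya\cent r:\tyb$ shows that $\hdenot{\varsigma[a{\ssm}y]}{r}=\hdenot{\varsigma}{\lam{a{:}\tya}r}\,y$ is shapely.

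The two modal cases \rulefont{[\,]E} and \rulefont{Ext} reduce to bookkeeping about shapely valuations. For \rulefont{[\,]E}, i.e.\ $r=\letbox{X}{s}{r'}$, the inductive hypothesis for $s$ makes $\hdenot{\varsigma}{s}$ shapely, hence $\varsigma[X{\ssm}\hdenot{\varsigma}{s}]$ is shapely, and the inductive hypothesis for $r'$ together with the clause $\hdenot{\varsigma}{\letbox{X}{s}{r'}}=\hdenot{\varsigma[X{\ssm}\hdenot{\varsigma}{s}]}{r'}$ closes the case. For \rulefont{Ext}, i.e.\ $r=X\at(r_i)_i$, the value $\varsigma(X)$ is a shapely element of a contextual type, each $\hdenot{\varsigma}{r_i}$ is shapely by the inductive hypothesis, and $\hdenot{\varsigma}{X\at(r_i)_i}=\tl(\varsigma(X))\,(\hdenot{\varsigma}{r_i})_i$; shapeliness of $\varsigma(X)$ is exactly what guarantees that its extension $\tl(\varsigma(X))$ sends shapely argument tuples to shapely results.

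The crux is \rulefont{[\,]I}, where $r=[a_i{:}\tya_i]r'$ with $\fa(r')\subseteq\{a_i\}_i$ and, by Figure~\ref{fig.denot.terms.c}, $\hdenot{\varsigma}{r}=[a_i{:}\tya_i](r'\varsigma_\unknowns)::\bigl(\lam{(x_i)}\hdenot{\varsigma[a_i{\ssm}x_i]}{r'}\bigr)$. Here I must verify the two premises of \rulefont{Shape[\,]}: that the extension is shapely, and the match-up condition that this pair equals $\hdenot{\varnothing}{\hd(\cdot)}$ of its own intension. The first premise follows from the inductive hypothesis applied at each shapely tuple $(x_i)$, using that $\varsigma[a_i{\ssm}x_i]$ stays shapely. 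The second, the match-up, is exactly what Corollary~\ref{corr.shape.varsigmaun} is for: it rewrites $\hdenot{\varsigma[a_i{\ssm}x_i]}{r'}$ as $\hdenot{(\varsigma|_\atoms)[a_i{\ssm}x_i]}{r'\varsigma_\unknowns}$, and then the side-condition $\fa(r')\subseteq\{a_i\}_i$, preserved as $\fa(r'\varsigma_\unknowns)\subseteq\{a_i\}_i$ by Lemma~\ref{lemm.fa.rtheta.c}, together with the contextual analogue of Proposition~\ref{prop.ws.denote}, lets me discard the irrelevant atoms of $\varsigma|_\atoms$ to obtain $\hdenot{[a_i{\ssm}x_i]}{r'\varsigma_\unknowns}$, which is precisely the extension of $\hdenot{\varnothing}{[a_i{:}\tya_i](r'\varsigma_\unknowns)}$.

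I expect this box-introduction step to be the main obstacle, both because it is the one place where intension and extension must be shown to coincide and because the match-up must be checked at argument tuples $(x_i)$: Corollary~\ref{corr.shape.varsigmaun} requires a shapely valuation, so $\varsigma[a_i{\ssm}x_i]$ is usable only when the $x_i$ are themselves shapely. This is harmless provided \rulefont{Shape[\,]} (and its non-empty-context reading, in the spirit of \rulefont{ShapeFun}) quantifies the match-up over \emph{shapely} arguments, matching the way the other shapeliness rules treat function arguments; I would make this quantification explicit before running the induction.
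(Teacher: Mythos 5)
Your proof is correct and follows the same overall strategy as the paper: induction on the typing derivation, with \rulefont{Hyp}, \rulefont{Const}, \rulefont{{\to}I}, \rulefont{{\to}E} handled exactly as in the paper (via \rulefont{ShapeFun} and part~1 of Lemma~\ref{lemm.shape.back}), and the crux being \rulefont{[\,]I}, resolved via Corollary~\ref{corr.shape.varsigmaun}. The one substantive difference is \emph{where} you apply that corollary. You apply it one level down, to the body $r'$ under the extended valuations $\varsigma[a_i{\ssm}x_i]$; since the corollary as stated demands a shapely valuation, this forces you to restrict the match-up check to shapely tuples $(x_i)$, and you then propose weakening the (implicit, non-empty-context) reading of \rulefont{Shape[\,]} so that the intension--extension agreement is only quantified over shapely arguments. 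The paper instead applies Corollary~\ref{corr.shape.varsigmaun} to the whole term $[a_i{:}\tya_i]r'$ at the shapely valuation $\varsigma$ itself, obtaining $\hdenot{\varsigma}{[a_i{:}\tya_i]r'}=\hdenot{\varsigma|_\atoms}{[a_i{:}\tya_i](r'\varsigma_\unknowns)}$, and then uses $\fa([a_i{:}\tya_i]r')=\varnothing$ together with (the contextual analogue of) Proposition~\ref{prop.ws.denote} to pass to the empty valuation; this yields the match-up as a literal equality of pairs, with no restriction on the argument tuples and hence no need to touch the definition of shapeliness. Your detour is not wrong --- and you are right to flag that Figure~\ref{fig.shape} only states \rulefont{Shape[\,]} for the empty context, so some explicit generalisation is needed either way --- but the definitional weakening you propose is avoidable: either apply the corollary at the top level as the paper does, or observe that its proof only uses shapeliness of $\varsigma$ at unknowns (via $\varsigma(X)=\hdenot{\varnothing}{\hd(\varsigma(X))}$ and Lemma~\ref{lemm.sub.X.c}), so extending $\varsigma$ with arbitrary, possibly non-shapely, values at the atoms $a_i$ is harmless.
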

\begin{proof}
By induction on the typing $\Gamma\cent r:\tya$ (Figure~\ref{fig.cmtt.types}).
\begin{itemize*}
\item
\emph{The case of \rulefont{Hyp}} \quad is immediate because by assumption $\varsigma(a)$ is shapely.
\item
\emph{The case of \rulefont{Const}} \quad is also immediate (provided that all semantics for constants are shapely).
\item
\emph{The case of \rulefont{{\fto}I}.}\quad
Suppose $\Gamma,a{:}\tya\cent r:\tyb$ so that by \rulefont{{\fto}I} $\Gamma\cent \lam{a{:}\tya}r:\tya\fto\tyb$.
Suppose $x\in\hdenot{}{\tya}$ is shapely.
Then so is $\varsigma[a{\ssm}x]$ and by inductive hypothesis so is $\hdenot{\varsigma[a{\ssm}x]}{r}$.
It follows by \rulefont{ShapeFun} that 
$$
\hdenot{\varsigma}{\lam{a{:}x}r}=\bigl(x\in\hdenot{}{\tya}\mapsto \hdenot{\varsigma[a{\ssm}x]}{r}\bigr)
$$
is shapely.
\item
\emph{The case of \rulefont{{\fto}E}.}\quad
Suppose $\Gamma\cent r':\tya\fto\tyb$ and $\Gamma\cent r:\tya$ so that by \rulefont{{\fto}E} $\Gamma\cent r'r:\tyb$.
By inductive hypothesis $\hdenot{\varsigma}{r'}$ and $\hdenot{\varsigma}{r}$ are both shapely.
By part~1 of Lemma~\ref{lemm.shape.back} so is $\hdenot{\varsigma}{r'r}=\hdenot{\varsigma}{r'}\hdenot{\varsigma}{r}$.
\item
\emph{The case of \rulefont{[\,] I}.}\quad
Suppose $\Gamma,(a_i{:}\tya_i)\cent r:\tya$ and $\fa(r)\subseteq\{a_i\}$ so that by \rulefont{[\,]I} $\Gamma\cent [a_i{:}\tya_i]r:[\tya_i]\tya$.

By inductive hypothesis $\hdenot{\varsigma'}{r}$ is shapely for every shapely $\varsigma'$ such that $\Gamma,(a_i{:}\tya_i)\cent\varsigma'$ and it follows that $\tl\hdenot{\varsigma}{[a_i{:}\tya_i]r}=\hdenot{\varsigma}{\lam{(a_i{:}\tya_i)}r}$ is shapely. 

Also unpacking definitions
$$
\hd\hdenot{\varsigma}{[a_i{:}\tya_i]r}=[a_i{:}\tya_i](r\varsigma_\unknowns).
$$
So it suffices to verify that $\hdenot{\varsigma}{[a_i{:}\tya_i]r}=\hdenot{\varnothing}{[a_i{:}\tya_i](r\varsigma_\unknowns)}$.
This follows from Corollary~\ref{corr.shape.varsigmaun}.
\end{itemize*}
\end{proof}

Corollary~\ref{corr.later.c} proved that denotations cannot be reified to syntax in general, by general arguments on cardinality.
But our denotational semantics is inflated; $\hdenot{}{[\,](\tya\fto\tyb)}$ and $\hdenot{}{\tya\fto\tyb}$ have the same cardinality even if $\hd(\hdenot{}{[\,](\tya\fto\tyb)})$ and $\hdenot{}{\tya\fto\tyb}$ do not.
Corollary~\ref{corr.more.c} tells us that we cannot in general even reify denotation to the `inflated' denotations, even if they are large enough.
In this sense, inflation is `not internally detectable':
\begin{corr}
\label{corr.more.c}
\begin{enumerate*}
\item
There is no term $s$ such that $\varnothing\cent s:(\nat\fto\nat)\fto[\,](\nat\fto\nat)$ is typable and such that 
$\hdenot{\varnothing}{s}\in\hdenot{}{[\,](\nat\fto\nat)}^{\hdenot{}{\nat\fto\nat}}$ is injective.
\item
There is no term $s$ such that $\varnothing\cent s:(\nat\fto\nat)\fto[\nat]\nat$ is typable and such that
$\hdenot{\varnothing}{s}\in\hdenot{}{[\nat]\nat}^{\hdenot{}{\nat\fto\nat}}$ is injective.
\end{enumerate*}
\end{corr}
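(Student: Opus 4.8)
The plan is to argue by contradiction, reducing to Corollary~\ref{corr.later.c}. The cardinality argument used there fails here, since the inflated set $\hdenot{}{[\,](\nat\fto\nat)}$ (resp. $\hdenot{}{[\nat]\nat}$) is uncountable, just like $\hdenot{}{\nat\fto\nat}$. The key observation is that shapeliness collapses the inflation: a shapely element of a contextual-type denotation is completely determined by its syntactic head $\hd$. So an injective map into the full inflated denotation whose image consists of shapely elements would yield an injective map into the (countable) set of heads --- which Corollary~\ref{corr.later.c} already forbids.

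Concretely, suppose toward a contradiction that such an $s$ exists with $\hdenot{\varnothing}{s}$ injective. First I would observe that since $s$ is closed, the empty valuation $\varnothing$ is vacuously shapely (Definition~\ref{defn.shape}), so Proposition~\ref{prop.shape.r} gives that $\hdenot{\varnothing}{s}$ is shapely. By Theorem~\ref{thrm.type.soundness.c} we have $\hdenot{\varnothing}{s}\in\hdenot{}{[\,](\nat\fto\nat)}^{\hdenot{}{\nat\fto\nat}}$ (and likewise with $[\nat]\nat$ in place of $[\,](\nat\fto\nat)$ for part~2). Next, every $x\in\hdenot{}{\nat\fto\nat}=\nat^\nat$ is shapely by part~3 of Lemma~\ref{lemm.shape.back}, so by part~1 of the same lemma each value $\hdenot{\varnothing}{s}\,x$ is shapely.

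The final step uses part~2 of Lemma~\ref{lemm.shape.back}: for any shapely $z\in\hdenot{}{[\,](\nat\fto\nat)}$ we have $z=\hdenot{\varnothing}{\hd(z)}$, so $z$ is recoverable from $\hd(z)$. Applying this to $z=\hdenot{\varnothing}{s}\,x$, if $\hd(\hdenot{\varnothing}{s}\,x)=\hd(\hdenot{\varnothing}{s}\,x')$ then $\hdenot{\varnothing}{s}\,x=\hdenot{\varnothing}{\hd(\hdenot{\varnothing}{s}\,x)}=\hdenot{\varnothing}{\hd(\hdenot{\varnothing}{s}\,x')}=\hdenot{\varnothing}{s}\,x'$. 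Hence the assumed injectivity of $\hdenot{\varnothing}{s}$ forces the map $\lam{x{\in}\nat^\nat}\hd(\hdenot{\varnothing}{s}\,x)$ to be injective as well, contradicting Corollary~\ref{corr.later.c}. Part~2 is identical, replacing $[\,](\nat\fto\nat)$ throughout by $[\nat]\nat$ and invoking the second clause of Corollary~\ref{corr.later.c}.

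The conceptual heart of the argument --- and the only place where anything beyond bookkeeping happens --- is the fact that shapeliness makes the inflation `internally invisible', i.e. that $\hd$ is injective on shapely elements, which is exactly part~2 of Lemma~\ref{lemm.shape.back}. Its real content is in turn carried by Proposition~\ref{prop.shape.r}, so the genuine work has already been done upstream; the main thing to get right here is simply that the empty valuation qualifies as shapely, so that Proposition~\ref{prop.shape.r} applies to the closed term $s$.
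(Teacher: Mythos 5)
Your proof is correct and follows essentially the same route as the paper: both rest on Proposition~\ref{prop.shape.r} applied to the (vacuously shapely) empty valuation together with parts~1--3 of Lemma~\ref{lemm.shape.back}, the crux being that $\hd$ is injective on shapely elements and so collapses the inflation. The only cosmetic difference is that you package the final step as a reduction to Corollary~\ref{corr.later.c}, whereas the paper re-runs the countable-versus-uncountable cardinality count directly; since Corollary~\ref{corr.later.c} is itself proved by that count, the two arguments coincide.
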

\begin{proof}
By Proposition~\ref{prop.shape.r} $s$ is shapely, so by part~1 of Lemma~\ref{lemm.shape.back} it maps shapely elements of $\nat^\nat=\hdenot{}{\nat\fto\nat}$ to shapely elements of $\hdenot{}{[\,](\nat\fto\nat)}$/$\hdenot{}{[\nat]\nat}$.
By part~3 of Lemma~\ref{lemm.shape.back} and the fact that $\nat^\nat$ is uncountable, the number of shapely elements of $\nat^\nat$ is uncountable. 
By part~2 of Lemma~\ref{lemm.shape.back} and the fact that syntax is countable, the number of shapely elements of $\hdenot{}{[\,](\nat\fto\nat)}$ and $\hdenot{}{[\nat]\nat}$ is countable.
The result follows.
\end{proof}

It is clear that part~1 of Corollary~\ref{corr.more.c} can be directly adapted to the modal system from Section~\ref{sect.box.syntax}.

%%%%%%%%%%%%%%%%%%%%%%%%%%%%%%%%%%%%%%%%%%%%%%%%%%%%%
\section{$\Box$ as a (relative) comonad}
\label{sect.box.comonad}

We noted as early as Remark~\ref{rmrk.modal} that $\Box$ looks like a comonad.
In this section, we show that this is indeed the case.

Before doing this, we would like to convince the reader that this is obviously impossible.

True, we have natural maps $\Box\tya\to\tya$ (evaluation) and $\Box\tya\to\Box\Box\tya$ (quotation).
However, if $\Box$ is a comonad then it has to be a functor on some suitable category, so we would expect some natural map in $(\tya\fto\tyb)\to(\Box\tya\fto\Box\tyb)$.
This seems unlikely because if we had this, then we could take $\tya$ to be a unit type (populated by one element) and $\tyb=(\nat\fto\nat)$ and thus generate a natural map from $\nat\fto\nat$ to $\Box(\nat\fto\nat)$.
But how would we do this in the light of Corollaries~\ref{corr.later.c} and Corollary~\ref{corr.more.c}? 
Even where closed syntax exists for a denotation, there may be many different choices of closed syntax to represent the same denotation, further undermining our chances of finding \emph{natural} assignments.
`$\Box$ as a comonad' seems doomed. 

This problem is circumvented by the `trick' of considering a category in which each denotation must be associated with syntax; we do not insist that the syntax and denotation match.
This is essentially the same idea as \emph{inflation} in Remark~\ref{rmrk.how.to.interpret.types} (but applied in the other direction; in Remark~\ref{rmrk.how.to.interpret.types} we inflated by adding a purported denotation to every syntax; here we are inflating by adding a purported syntax to every denotation).
In the terminology of Definition~\ref{defn.shape} we can say that we do not insist on \emph{shapeliness}.
We simply insist that some syntax be provided.

Modulo this `trick', $\Box$ becomes a well-behaved comonad after all.

\subsection{$\Box$ as a comonad}

\begin{nttn}
Write $\pi_1$ for \emph{first projection} and $\pi_2$ for \emph{second projection}.

That is, $\pi_1(x,y)=x$ and $\pi_2(x,y)=y$.
\end{nttn}

\begin{defn}
\label{defn.Box.f}
Suppose $f\in\hdenot{}{\Box\tyb}^{\hdenot{}{\Box\tya}}$.
Define a function $\BBox f\in\hdenot{}{\Box\Box\tyb}^{\hdenot{}{\Box\Box\tya}}$ by sending 
$$
\Box\Box s::x
\quad\text{to}\quad
\Box\pi_1(f(\Box s::\hdenot{\varnothing}{s}))::f(x) 
$$
where $x\in\hdenot{}{\Box\tya}$ and $s:\tya$.
\end{defn}

\begin{rmrk}
It may be useful to unpack what $\BBox f$ does. 
Suppose 
$$
f(\Box r::x)=\Box r'::x'
\quad\text{and}\quad
f(\Box s::y)=\Box s'::y'
$$ 
where $x\in\hdenot{}{\tya}$ and $y=\hdenot{\varnothing}{s}$.
Then $\Box f$ sends $\Box\Box s::\Box r::x$ to $\Box\Box s'::\Box r'::x'$.
\end{rmrk}

\begin{defn}
\label{defn.J}
Define a category $\mathcal J$ by:
\begin{itemize*}
\item
Objects are types $\tya$.\footnote{The reader might prefer to take objects to be $\hdenot{}{\tya}$.  This is fine; the assignment $\tya\longmapsto\hdenot{}{\tya}$ is injective, so it makes no difference whether we take objects to be $\tya$ or $\hdenot{}{\tya}$.}
\item
Arrows from $\tya$ to $\tyb$ are functions from $\hdenot{}{\Box\tya}$ to $\hdenot{}{\Box\tyb}$ (not from $\hdenot{}{\tya}$ to $\hdenot{}{\tyb}$; as promised above, some syntax must be provided).
\end{itemize*}
Composition of arrows is given by composition of functions.
\end{defn}

\begin{defn}
\label{defn.Box.functor}
Define an endofunctor $\BBox$ on $\mathcal J$ mapping 
\begin{itemize*}
\item
an object $\tya$ to $\BBox\tya=\Box\tya$ and 
\item
an arrow $f:\tya\to\tyb$ to $\BBox f:\BBox\tya\to\BBox\tyb$ from Definition~\ref{defn.Box.f}.
\end{itemize*}
\end{defn}

\noindent So $\Box$ is a type-former acting on types and $\BBox$ is a functor acting on objects and arrows.
Objects happen to be types, and $\BBox$ acts on objects just by prepending a $\Box$.
Arrows are functions on sets, and the action on $\BBox$ on these functions is more complex as defined above.

\begin{defn}
\label{defn.delta.tya}
\begin{itemize*}
\item
Write $\id_\tya$ for the identity on $\hdenot{}{\Box\tya}$ for each $\tya$.
\item
Write $\delta_\tya$ for the arrow from $\BBox\tya$ to $\tya$ given by the function mapping $\hdenot{}{\Box\Box\tya}$ to $\hdenot{}{\Box\tya}$ taking $\Box\Box r::x$ to $x$ (where $x\in\hdenot{}{\Box\tya}$).
This will be the \deffont{counit} of our comonad.
\item
Write $\epsilon_\tya$ for the arrow from $\BBox\tya$ to $\BBox\BBox\tya$ given by the function mapping $\hdenot{}{\Box\Box\tya}$ to $\hdenot{}{\Box\Box\Box\tya}$ taking $\Box\Box r::x$ to $\Box\Box\Box r::\Box\Box r::x$ (where $x\in\hdenot{}{\Box\tya}$).
This will be the \deffont{comultiplication} of our comonad.
\end{itemize*}
\end{defn}

\begin{lemm}
$\BBox$ from Definition~\ref{defn.Box.functor} is a functor.
\end{lemm}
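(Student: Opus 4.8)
The plan is to check the two functor axioms directly against Definitions~\ref{defn.Box.f} and~\ref{defn.Box.functor}: that $\BBox$ carries the identity arrow $\id_\tya$ to $\id_{\BBox\tya}$, and that it respects composition, $\BBox(g\fcomp f)=\BBox g\fcomp\BBox f$ for arrows $f:\tya\to\tyb$ and $g:\tyb\to\tyc$. Since $\BBox$ acts on objects merely by prepending a $\Box$, and composition in $\mathcal J$ is composition of functions, the whole lemma reduces to computing the action of $\BBox$ on arrows on a generic element $\Box\Box s::x$ of $\hdenot{}{\Box\Box\tya}$ (with $s:\tya$ and $x\in\hdenot{}{\Box\tya}$) and comparing.

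For the identity law I would just unfold Definition~\ref{defn.Box.f} with $f=\id_\tya$: it sends $\Box\Box s::x$ to $\Box\pi_1(\id_\tya(\Box s::\hdenot{\varnothing}{s}))::\id_\tya(x)$, and since $\pi_1(\Box s::\hdenot{\varnothing}{s})=\Box s$, so that $\Box\pi_1(\cdots)=\Box\Box s$, and $\id_\tya(x)=x$, this is $\Box\Box s::x$ again. Hence $\BBox(\id_\tya)=\id_{\BBox\tya}$, which is immediate.

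The substantive axiom is composition. The clean way to organise the computation is to observe that $\BBox h$ splits as a product map: on the syntax coordinate it acts by $\phi_h:\Box\Box u\mapsto\Box\pi_1(h(\Box u::\hdenot{\varnothing}{u}))$, which depends only on the input syntax, and on the extension coordinate it acts by $h$ itself. Thus $\BBox g\fcomp\BBox f$ acts as $(\phi_g\fcomp\phi_f)$ on syntax and as $(g\fcomp f)$ on extensions, while $\BBox(g\fcomp f)$ acts as $\phi_{g\fcomp f}$ on syntax and again as $(g\fcomp f)$ on extensions. The extension coordinates therefore agree on the nose, and the entire lemma collapses to the single identity $\phi_{g\fcomp f}=\phi_g\fcomp\phi_f$ on syntax coordinates.

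I expect this last identity to be the main obstacle. Writing $\Box s'::y'=f(\Box s::\hdenot{\varnothing}{s})$, the composite side gives $\phi_g(\phi_f(\Box\Box s))=\Box\pi_1(g(\Box s'::\hdenot{\varnothing}{s'}))$, whereas $\phi_{g\fcomp f}(\Box\Box s)=\Box\pi_1(g(\Box s'::y'))$. The subtlety is that $\phi_g\fcomp\phi_f$ re-derives the denotation, feeding $g$ the element $\Box s'::\hdenot{\varnothing}{s'}$, while $\phi_{g\fcomp f}$ feeds $g$ the pair $\Box s'::y'$ that $f$ actually returned. Reconciling the two is the heart of the argument: I would have to establish that $\pi_1(g(\Box s'::y'))=\pi_1(g(\Box s'::\hdenot{\varnothing}{s'}))$, i.e.\ that the syntax coordinate of $g$'s output is insensitive to replacing $f$'s purported extension $y'$ by the canonical denotation $\hdenot{\varnothing}{s'}$ of the same syntax. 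This is precisely the point where the design of $\BBox$ in Definition~\ref{defn.Box.f} — which always recomputes the denotation from the input syntax rather than trusting an incoming extension — must be brought to bear, and it is the step I would write out with the most care, pinning down the exact property of the arrows of $\mathcal J$ that licenses the cancellation.
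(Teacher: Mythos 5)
Your reduction is accurate and your identity-law computation is correct, but the proof is not finished: the one identity you isolate, $\pi_1(g(\Box s'::y'))=\pi_1(g(\Box s'::\hdenot{\varnothing}{s'}))$ where $\Box s'::y'=f(\Box s::\hdenot{\varnothing}{s})$, is left as something you ``would write out with the most care'', and no property of the arrows of $\mathcal J$ is actually exhibited that licenses it. That is a genuine gap, and it cannot be closed as the definitions stand: arrows of $\mathcal J$ are \emph{arbitrary} functions from $\hdenot{}{\Box\tya}$ to $\hdenot{}{\Box\tyb}$, so nothing forces the syntax coordinate of $g$'s output to ignore the extension coordinate of its input, and nothing forces $f$'s output to be shapely (i.e.\ nothing forces $y'=\hdenot{\varnothing}{s'}$). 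Concretely, take $\tya=\tyb=\tyc=\nat$, let $f$ be the constant function $\Box r::n\mapsto \Box 0::5$ and let $g$ send $\Box r::n$ to $\Box\,\tf{succ}^n(0)::n$. Then $\pi_1(g(f(\Box 0::0)))=\pi_1(g(\Box 0::5))=\Box\,\tf{succ}^5(0)$, whereas $\pi_1(g(\Box 0::\hdenot{\varnothing}{0}))=\pi_1(g(\Box 0::0))=\Box 0$, so $\BBox(g\circ f)$ and $\BBox g\circ\BBox f$ disagree on the syntax coordinate.

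The paper's own proof is a one-line ``routine to verify'' and does not confront this point, so your analysis has in fact located a real defect rather than merely a step you failed to carry out: the composition law holds only if one restricts the arrows (for instance to functions whose syntax coordinate depends only on the input syntax coordinate, or to functions preserving shapeliness in the sense of Definition~\ref{defn.shape}), or changes Definition~\ref{defn.Box.f} so that $\BBox f$ does not discard the extension $f$ actually received in favour of the recomputed denotation $\hdenot{\varnothing}{s}$. Any completion of your argument must begin by making one of these repairs explicit; as written, the ``exact property of the arrows of $\mathcal J$ that licenses the cancellation'' does not exist.
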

\begin{proof}
It is routine to verify that $\BBox\id_\tya=\id_{\BBox\tya}$ and if $f:\tya\to\tyb$ and $g:\tyb\to\tyc$ then $\BBox g\circ\BBox f=\BBox(g\circ f)$.
\end{proof}

\begin{lemm}
\begin{itemize*}
\item
$\delta_\tya$ is a natural transformation from $\BBox$ to $\id_{\mathcal J}$ (the identity functor on $\mathcal J$). 
\item
$\epsilon_\tya$ is a natural transformation from $\BBox$ to $\BBox\BBox$.
\end{itemize*}
\end{lemm}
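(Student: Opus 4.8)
The plan is to check, for each component, the corresponding naturality square; since all objects of $\mathcal J$ are types and all arrows are functions between sets of the form $\hdenot{}{\Box\cdots}$, each square reduces to an element-wise computation unfolding Definitions~\ref{defn.Box.f} and~\ref{defn.delta.tya}. Concretely, for $\delta$ (a transformation from $\BBox$ to $\id_{\mathcal J}$) I must show that for every arrow $f:\tya\to\tyb$ the functions $f\circ\delta_\tya$ and $\delta_\tyb\circ\BBox f$ from $\hdenot{}{\Box\Box\tya}$ to $\hdenot{}{\Box\tyb}$ agree; for $\epsilon$ (a transformation from $\BBox$ to $\BBox\BBox$) that $\BBox\BBox f\circ\epsilon_\tya$ and $\epsilon_\tyb\circ\BBox f$ from $\hdenot{}{\Box\Box\tya}$ to $\hdenot{}{\Box\Box\Box\tyb}$ agree. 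In both cases I evaluate on a generic element $\Box\Box s::x$ with $x\in\hdenot{}{\Box\tya}$ and $s:\tya$.

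The case of $\delta$ is immediate. Applying the left composite, $\delta_\tya$ sends $\Box\Box s::x$ to $x$ and then $f$ yields $f(x)$. Applying the right composite, $\BBox f$ sends $\Box\Box s::x$ to $\Box\pi_1(f(\Box s::\hdenot{\varnothing}{s}))::f(x)$, and $\delta_\tyb$ reads off its second component, again $f(x)$. So the square commutes on the nose.

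The case of $\epsilon$ carries the real content, because I first have to unfold $\BBox\BBox f=\BBox(\BBox f)$ by applying Definition~\ref{defn.Box.f} one modality higher, with the function $\BBox f\in\hdenot{}{\Box\Box\tyb}^{\hdenot{}{\Box\Box\tya}}$ in the role of the input. Writing $\Box r'$ for the syntactic head $\pi_1(f(\Box s::\hdenot{\varnothing}{s}))$, the right composite is easy: $\BBox f$ sends $\Box\Box s::x$ to $\Box\Box r'::f(x)$, and then $\epsilon_\tyb$ triplicates the head to give $\Box\Box\Box r'::\Box\Box r'::f(x)$. For the left composite, $\epsilon_\tya$ first produces $\Box\Box\Box s::\Box\Box s::x$; feeding this through $\BBox(\BBox f)$ (matching it against the pattern $\Box\Box s'::x'$ of Definition~\ref{defn.Box.f} with $s'=\Box s$ and $x'=\Box\Box s::x$) and carefully tracing the nested applications, the head that surfaces is again built from $\pi_1(f(\Box s::\hdenot{\varnothing}{s}))=\Box r'$, yielding $\Box\Box\Box r'::\Box\Box r'::f(x)$. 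The two composites coincide.

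The hard part is purely the bookkeeping in the $\epsilon$ case: keeping precise track of how many $\Box$'s decorate each head, and verifying that when $\BBox(\BBox f)$ unfolds, its inner application ultimately passes $\Box s::\hdenot{\varnothing}{s}$ (the canonical inflated denotation of the head $s$) through $f$, so that the single syntactic head $\pi_1(f(\Box s::\hdenot{\varnothing}{s}))$ is exactly the one produced by the right composite. Once that head is identified on both sides the equalities are forced. No appeal to shapeliness or to the soundness theorems is required; everything is definitional.
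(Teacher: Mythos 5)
Your proof is correct and takes essentially the same approach as the paper: an element-wise check of the two naturality squares, with the $\delta$ computation identical to the paper's. The paper dismisses the $\epsilon$ square as ``similar and no harder'' whereas you work it out explicitly, and your key bookkeeping step --- that the inner application inside $\BBox(\BBox f)$ receives $\Box\Box s::\hdenot{\varnothing}{\Box s}$ with $\hdenot{\varnothing}{\Box s}=\Box s::\hdenot{\varnothing}{s}$, so both composites produce the head $\pi_1(f(\Box s::\hdenot{\varnothing}{s}))$ --- is exactly right.
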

\begin{proof}
Suppose $f:\tya\to\tyb$.
For the first part, we need to check that $f\circ\delta_\tya = \delta_\tyb\circ \BBox f$.
This is routine: 
$$
\begin{array}{l}
(f\circ\delta_\tya)(\Box\Box r::x)=f(x)
\quad\text{and}
\\
\delta_\tyb\circ\BBox f=\pi_2\bigl(\Box\pi_1(f(\Box r::\hdenot{\varnothing}{r}))::f(x)\bigr)=f(x)
\end{array}
$$
The second part is similar and no harder.
\end{proof}

Note that $\BBox\delta_\tya$ is an arrow from $\BBox\BBox\tya$ to $\BBox\tya$.
\begin{lemm}
\label{lemm.Box.delta.a}
$\BBox\delta_\tya$ maps $\Box\Box\Box s::\Box\Box r::x\in\hdenot{}{\Box\Box\Box\tya}$ to $\Box\Box s::x\in\hdenot{}{\Box\Box\tya}$.
\end{lemm}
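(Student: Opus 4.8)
The plan is to unfold Definition~\ref{defn.Box.f} in the special case $f=\delta_\tya$ and read off the value directly. First I would fix the instantiation. Since $\delta_\tya$ is an arrow from $\BBox\tya$ to $\tya$, in Definition~\ref{defn.Box.f} I take the source object to be $\Box\tya$ and the target object to be $\tya$; thus $\BBox\delta_\tya$ is a function from $\hdenot{}{\Box\Box\Box\tya}$ to $\hdenot{}{\Box\Box\tya}$, and the defining rule sends an input of the form $\Box\Box s'::x'$, with $s'{:}\Box\tya$ and $x'\in\hdenot{}{\Box\Box\tya}$, to $\Box\pi_1(\delta_\tya(\Box s'::\hdenot{\varnothing}{s'}))::\delta_\tya(x')$. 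The one subtlety to flag is that the bound syntax variable of Definition~\ref{defn.Box.f} here ranges over terms of type $\Box\tya$, not of type $\tya$.

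Next I would match the given element $\Box\Box\Box s::\Box\Box r::x$ against this template. Comparing first components forces $s'=\Box s$, and comparing the remainders forces $x'=\Box\Box r::x$. The extension half is then immediate: by Definition~\ref{defn.delta.tya} the counit discards the leading boxed syntax, so $\delta_\tya(x')=\delta_\tya(\Box\Box r::x)=x$. For the intension half I would compute $\hdenot{\varnothing}{s'}=\hdenot{\varnothing}{\Box s}$; since the valuation is empty, the $\Box$ clause of Figure~\ref{fig.denot.terms} gives $\hdenot{\varnothing}{\Box s}=\Box s::\hdenot{\varnothing}{s}$, whence $\Box s'::\hdenot{\varnothing}{s'}=\Box\Box s::\Box s::\hdenot{\varnothing}{s}$. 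Applying $\delta_\tya$ again strips the leading boxed syntax and leaves $\Box s::\hdenot{\varnothing}{s}$, so $\pi_1$ of this is $\Box s$ and prepending a box gives $\Box\pi_1(\ldots)=\Box\Box s$. Assembling the two halves yields $\Box\Box s::x$, exactly as claimed.

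I expect no genuine difficulty here beyond bookkeeping. The whole argument is forced once the instantiation is fixed: each step is a single application of a defining equation (Definition~\ref{defn.Box.f}, Definition~\ref{defn.delta.tya}, or the $\Box$ clause of Figure~\ref{fig.denot.terms}). The main thing to get right is the triple nesting of boxes and which occurrence of $::$ pairs which components, and in particular not conflating the statement's term $s{:}\tya$ with the term $\Box s{:}\Box\tya$ that plays the role of the bound variable in Definition~\ref{defn.Box.f}. That is the only place I would slow down and check carefully.
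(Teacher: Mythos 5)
Your proof is correct and follows essentially the same route as the paper's: unfold Definition~\ref{defn.Box.f} at $f=\delta_\tya$, apply Definition~\ref{defn.delta.tya} twice, and evaluate $\hdenot{\varnothing}{\Box s}$ via the $\Box$ clause of Figure~\ref{fig.denot.terms}. Your bookkeeping of the instantiation $s'=\Box s$ is in fact slightly more careful than the paper's displayed calculation, whose intermediate lines write $\Box\Box\Box s::\hdenot{\varnothing}{\Box\Box s}$ where $\Box\Box s::\hdenot{\varnothing}{\Box s}$ is what Definition~\ref{defn.Box.f} actually yields (both routes land on the same final answer $\Box\Box s::x$).
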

\begin{proof}
By a routine calculation on the definitions:
$$
\begin{array}{r@{\ }l@{\quad}l}
\Box\delta_\tya(\Box\Box\Box s::\Box\Box r::x)
=&\Box\pi_1(\delta_\tya(\Box\Box\Box s::\hdenot{\varnothing}{\Box\Box s}))::\delta_\tya(\Box\Box r::x)
&\text{Definition~\ref{defn.Box.f}}
\\
=&\Box\pi_1(\delta_\tya(\Box\Box\Box s::\hdenot{\varnothing}{\Box\Box s}))::x
&\text{Definition~\ref{defn.delta.tya}}
\\
=&\Box\pi_1(\hdenot{\varnothing}{\Box\Box s})::x
&\text{Definition~\ref{defn.delta.tya}}
\\
=&\Box\Box s::x
&\text{Figure~\ref{fig.denot.terms}}
\end{array}
$$
\end{proof}
 
\begin{prop}
$\BBox$ is a comonad.
\end{prop}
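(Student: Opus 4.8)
The plan is to verify the three comonad coherence axioms, since the preceding lemmas already establish that $\BBox$ is a functor and that $\delta_\tya$ and $\epsilon_\tya$ are natural transformations. With the paper's conventions (so $\delta$ is the counit and $\epsilon$ the comultiplication, from Definition~\ref{defn.delta.tya}), the axioms to check are the two counit laws $\delta_{\BBox\tya}\circ\epsilon_\tya=\id_{\BBox\tya}$ and $\BBox\delta_\tya\circ\epsilon_\tya=\id_{\BBox\tya}$, together with coassociativity $\epsilon_{\BBox\tya}\circ\epsilon_\tya=\BBox\epsilon_\tya\circ\epsilon_\tya$. I would verify each by a direct calculation on a generic element $\Box\Box r::x\in\hdenot{}{\Box\Box\tya}$, recalling that an arrow $\tya\to\tyb$ in $\mathcal J$ is a function $\hdenot{}{\Box\tya}\to\hdenot{}{\Box\tyb}$, so all the maps involved act on doubly-boxed denotations.

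For the first counit law, applying $\epsilon_\tya$ to $\Box\Box r::x$ gives $\Box\Box\Box r::\Box\Box r::x$ by Definition~\ref{defn.delta.tya}; then $\delta_{\BBox\tya}$ (which is $\delta$ at the object $\Box\tya$, stripping the leading syntactic coordinate $\Box\Box(\Box r)$ and returning the tail) yields $\Box\Box r::x$. For the second counit law, applying $\epsilon_\tya$ again produces $\Box\Box\Box r::\Box\Box r::x$, and Lemma~\ref{lemm.Box.delta.a} tells us $\BBox\delta_\tya$ sends exactly this element to $\Box\Box r::x$; hence both composites are the identity, as required.

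Coassociativity is the substantial case. The left-hand composite $\epsilon_{\BBox\tya}\circ\epsilon_\tya$ is obtained by unfolding $\epsilon$ twice (Definition~\ref{defn.delta.tya}), sending $\Box\Box r::x$ to $\Box\Box\Box\Box r::\Box\Box\Box r::\Box\Box r::x$. For the right-hand composite $\BBox\epsilon_\tya\circ\epsilon_\tya$, I would first apply $\epsilon_\tya$ to get $\Box\Box\Box r::\Box\Box r::x$ and then apply $\BBox\epsilon_\tya$ via Definition~\ref{defn.Box.f} with $f=\epsilon_\tya$. Matching the input against the pattern $\Box\Box s::z$ forces $s=\Box r$ and $z=\Box\Box r::x$, so the denotation coordinate is simply $\epsilon_\tya(z)=\Box\Box\Box r::\Box\Box r::x$, while the syntactic coordinate is $\Box\,\pi_1\bigl(\epsilon_\tya(\Box s::\hdenot{\varnothing}{s})\bigr)$ with $\hdenot{\varnothing}{\Box r}=\Box r::\hdenot{\varnothing}{r}$ (using closedness of $r$ in Figure~\ref{fig.denot.terms}). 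Unwinding $\epsilon_\tya$ on $\Box\Box r::\Box r::\hdenot{\varnothing}{r}$ and projecting yields $\Box\Box\Box r$, so this coordinate comes out as $\Box\Box\Box\Box r$, and the full result is $\Box\Box\Box\Box r::\Box\Box\Box r::\Box\Box r::x$, matching the left-hand side.

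The main obstacle is precisely this right-hand side of coassociativity: the functorial action $\BBox$ on arrows (Definition~\ref{defn.Box.f}) \emph{reconstructs} the leading syntactic coordinate through the detour $f(\Box s::\hdenot{\varnothing}{s})$ rather than copying it, so the care lies in confirming that this detour regenerates $\Box\Box\Box\Box r$ and that the denotation coordinate is handled independently by $\epsilon_\tya(z)$. Everything else reduces to mechanical unfolding of Definitions~\ref{defn.delta.tya} and~\ref{defn.Box.f} together with the already-proven Lemma~\ref{lemm.Box.delta.a}.
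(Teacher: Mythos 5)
Your proposal is correct and follows essentially the same route as the paper: both verify the comonad laws by direct elementwise calculation on $\hdenot{}{\Box\Box\tya}$, unfolding Definitions~\ref{defn.Box.f} and~\ref{defn.delta.tya} and invoking Lemma~\ref{lemm.Box.delta.a} for the counit law involving $\BBox\delta_\tya$. The only difference is one of detail: the paper declares coassociativity ``routine'' and writes out only the two counit laws, whereas you carry the coassociativity computation through explicitly (and correctly, including the point that $\BBox\epsilon_\tya$ regenerates the leading syntactic coordinate $\Box\Box\Box\Box r$ via the detour $f(\Box s::\hdenot{\varnothing}{s})$ rather than copying it).
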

\begin{proof}
We need to check that 
\begin{itemize*}
\item
$\BBox\epsilon_\tya\circ\epsilon_\tya=\epsilon_{\BBox\tya}\circ\epsilon_{\tya}$ and 
\item
$\delta_{\BBox\tya}\circ\epsilon_\tya=\id_\tya=\BBox\delta_\tya\circ\epsilon_\tya$.
\end{itemize*}
Both calculations are routine.
We consider just the second one.
Consider $\Box\Box s::\Box r::x\in\hdenot{}{\Box\Box\tya}$.
Then
$$
\begin{array}{r@{\ }l}
(\delta_{\BBox\tya}\circ\epsilon_\tya)(\Box\Box s::\Box r::x)=&\delta_{\BBox\tya}(\shademath{\Box\Box\Box s::}\Box\Box s::\Box r::x)
\\
=&\Box\Box s::\Box r::x
\\[1.5ex]
(\BBox\delta_\tya\circ\epsilon_\tya)(\Box\Box s::\Box r::x)=&\BBox\delta_\tya(\Box\Box\Box s::\shademath{\Box\Box s::}\Box r::x)
\\
=&\Box\Box s::\Box r::x
\end{array}
$$
The shaded part is the part that gets `deleted'. 
In the second case we use Lemma~\ref{lemm.Box.delta.a}. 
\end{proof}

%%%%%%%%%%%%%%%%%%%%%%%%%%%%%%%%%%%%%%%%%%%%%%%%%%
\subsection{$\Box$ as a relative comonad}

Recall that in the previous subsection we represented $\Box$ as a comonad on a category with the `trick' of associating syntax to every denotation.

It is possible to put this in a broader context using the notion of \emph{relative comonad}.

\begin{defn}
Following \cite{chapman:monnne}, a \deffont{relative comonad} consists of the following information:
\begin{itemize*}
\item
Two categories $\mathcal J$ and $\mathcal C$ and a functor $J:\mathcal J\to\mathcal C$.\footnote{The clash with the $\mathcal J$ from Definition~\ref{defn.J} is deliberate: this is the only $\mathcal J$ we will care about in this paper.
The definition of relative comonad from \cite{chapman:monnne} is general in the source category.}
\item
A functor $T:\mathcal J\to\mathcal C$.
\item
For every $X\in\mathcal J$ an arrow $\delta_X:TX\to JX\in\mathcal C$ (the \deffont{unit}).
\item
For every $X,Y\in\mathcal J$ and arrow $k:TX\to JY\in\mathcal C$, an arrow $k^*:TX\to TY$ (the \deffont{Kleisli extension}). 
\end{itemize*}
Furthermore, we insist on the following equalities:
\begin{itemize*}
\item
If $X,Y\in\mathcal J$ and $k:X\to Y\in\mathcal J$ then $k=k^*\circ\delta$. 
\item
If $X\in\mathcal J$ then $\delta_X^*=\id_{TX}$.
\item
If $X,Y,Z\in\mathcal J$ and $k:TX\to JY$ and $l:TY\to JZ$ then $l^*\circ k^*=(l\circ k)^*$. 
\end{itemize*}
\end{defn}

\begin{defn}
\label{defn.relative.comonad}
Take $\mathcal C$ to have objects types $\tya$ and arrows elements of $\hdenot{}{\tyb}^\hdenot{}{\tya}=\hdenot{}{\tya\fto\tyb}$---this is simply the natural category arising from the denotational semantics of Figure~\ref{fig.denot.types}.

Take $\mathcal J$ to be the category of Definition~\ref{defn.J}.

Take $J$ to map $\tya\in\mathcal J$ to $\Box\tya\in\mathcal C$ and to map $f\in\hdenot{}{\Box\tyb}^{\hdenot{}{\Box\tya}}$ to itself.

Take $T$ to map $\tya\in\mathcal J$ to $\Box\Box\tya\in\mathcal C$ and to map $f\in\hdenot{}{\Box\tyb}^{\hdenot{}{\Box\tya}}$ to $\BBox f$ from Definition~\ref{defn.Box.f}.
\end{defn}

\begin{prop}
Definition~\ref{defn.relative.comonad} determines a relative comonad on $\mathcal C$.
\end{prop}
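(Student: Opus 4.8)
The plan is to first supply the two pieces of structure that Definition~\ref{defn.relative.comonad} leaves implicit---the unit and the Kleisli coextension---and then to reduce the three required equalities to facts already proved about the comonad $\BBox$. For the unit I take $\delta_\tya : T\tya \to J\tya$ to be the counit $\delta_\tya$ of Definition~\ref{defn.delta.tya}, regarded as an arrow $\Box\Box\tya \to \Box\tya$ of $\mathcal C$ (the function $\hdenot{}{\Box\Box\tya} \to \hdenot{}{\Box\tya}$ sending $\Box\Box r :: x$ to $x$). For the coextension, given $k : T\tya \to J\tyb$ in $\mathcal C$, i.e.\ a function $\hdenot{}{\Box\Box\tya} \to \hdenot{}{\Box\tyb}$, I set $k^{*} = \BBox k \circ \epsilon_\tya : T\tya \to T\tyb$, where $\epsilon_\tya$ is the comultiplication of Definition~\ref{defn.delta.tya} and $\BBox$ is the functorial action of Definition~\ref{defn.Box.functor}.

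The key observation driving everything is that $T = J \circ \BBox$: on objects $\BBox\tya = \Box\tya$ and $J(\Box\tya) = \Box\Box\tya$, and on arrows $J(\BBox f) = \BBox f$. Moreover $J$ is fully faithful---since it acts as the identity on underlying functions it is a bijection between $\mathcal J(\tya,\tyb)$ and $\mathcal C(J\tya, J\tyb)$---and injective on objects, so $\mathcal J$ is isomorphic to the full subcategory of $\mathcal C$ spanned by the objects $\Box\tya$. Consequently the data $(\delta, (\text{-})^{*})$ above is exactly the image under $J$ of the coKleisli structure of the comonad $\BBox$, and each relative-comonad equality becomes the $J$-image of a comonad identity we already have.

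Spelling this out: the unit equation $\delta_\tyb \circ k^{*} = k$ follows from naturality of $\delta$ (the lemma that $\delta$ is a natural transformation $\BBox \Rightarrow \id_{\mathcal J}$) together with the counit law $\delta_{\BBox\tya} \circ \epsilon_\tya = \id_{\BBox\tya}$. The equation $\delta_\tya^{*} = \id_{T\tya}$ is literally $\BBox\delta_\tya \circ \epsilon_\tya = \id_{\BBox\tya}$, which is the second counit law established through Lemma~\ref{lemm.Box.delta.a} in the proof that $\BBox$ is a comonad. The composition equation $(l \circ k^{*})^{*} = l^{*} \circ k^{*}$ unfolds, using functoriality of $\BBox$ and the coassociativity $\BBox\epsilon_\tya \circ \epsilon_\tya = \epsilon_{\BBox\tya} \circ \epsilon_\tya$, into the functoriality of the coKleisli extension, which is again a consequence of $\BBox$ being a comonad.

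The hard part is purely the bookkeeping of the nested boxes while matching domains and codomains---one must remember, for instance, that for $k : \Box\Box\tya \to \Box\tyb$ the functorial image $\BBox k$ has type $\Box\Box\Box\tya \to \Box\Box\tyb$, and one must read the composition order in each stated equality so that it lands on the right comonad axiom. Once the identification $T = J \circ \BBox$ with $J$ fully faithful is in place, no calculation beyond those already carried out for the comonad is needed, so I expect the verification to be routine modulo this indexing care.
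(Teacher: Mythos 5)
Your proposal is correct, and in fact it supplies more than the paper does: the paper states this proposition with no proof at all, so there is nothing to compare against except the surrounding lemmas. Your choices of structure are the right ones --- the counit $\delta_\tya$ of Definition~\ref{defn.delta.tya} serves as the unit $T\tya\to J\tya$, and $k^{*}=\BBox k\circ\epsilon_\tya$ is the evident coKleisli extension --- and the observation that $T=J\circ\BBox$ with $J$ identity-on-arrows (hence fully faithful onto the full subcategory of $\mathcal C$ spanned by the $\Box\tya$) correctly reduces the three relative-comonad laws to the comonad laws for $\BBox$ already established. Two small remarks: first, the third law $(l\circ k^{*})^{*}=l^{*}\circ k^{*}$ uses not only functoriality of $\BBox$ and coassociativity but also naturality of $\epsilon$ (which the paper does prove as a lemma), so it is worth citing that explicitly; second, the paper's statement of the relative-comonad equalities contains typos ($k=k^{*}\circ\delta$ and $(l\circ k)^{*}$ do not typecheck as written), and your reading of them as $k=\delta_\tyb\circ k^{*}$ for $k:T\tya\to J\tyb$ and $(l\circ k^{*})^{*}$ is the correct dual of the Altenkirch--Chapman--Uustalu relative-monad laws.
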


It is slightly simplified, but accurate, to describe relative (co)monads as being for the case where we have an operator that is nearly (co)monadic but the category in question has `too many objects'.
By that view, $\Box$ is a comonad on the full subcategory of $\mathcal C$ over modal types.

Now the intuition of modal types $\Box\tya$ is `closed syntax', so it may be worth explicitly noting here that this full subcategory is \emph{not} just a category of syntax.
Each $\hdenot{}{\Box\tya}$ contains for each term $\varnothing\cent r:\tya$ also a copy of $\hdenot{}{\tya}$, because we inflate.

%%%%%%%%%%%%%%%%%%%%%%%%%%%%%%%%%%%%%%%%%%%%%%%%
\section{Conclusions}

The intuition realised by the denotation of $\Box\tya$ in this paper means `typable closed syntax of the same language, of type $\tya$'.
This is difficult to get right because it is self-referential;
if we are careless then the undecidable runtime impinges on the inductively defined denotation.
We noted this in Subsection~\ref{subsubsect.explain}.

For that reason we realised this intuition by an `inflated' reading of $\Box\tya$ as `closed syntax, and purported denotation of that syntax'.
As noted in Remark~\ref{rmrk.how.to.interpret.types}, there is no actual restriction that $\Box r::x\in\hdenot{}{\Box\tya}$ needs to match up, in that $r$ must have denotation $x$.

When $r$ and $x$ \emph{do} match up we say that $\Box r::x$ is \emph{shapely}.
This is Definition~\ref{defn.shape}, and we use this notion for our culminating result in Corollary~\ref{corr.more.c}, which entails that there is no unform family of terms of type $\tya\fto\Box\tya$.

The proof of this involves a beautiful interplay between syntax and denotation, 
which also illustrates the usefulness of denotational techniques; we can use a sound model to show that certain things \emph{cannot} happen in the syntax, because if they did, they would have to happen in the model.

\paragraph{Future work}

One avenue for future work is to note that our denotation is \emph{sets based}, and so this invites generalisation to \emph{nominal} sets semantics \cite{gabbay:newaas-jv}.

Perhaps we could leverage this to design a language which combines the simplicity of the purely modal system with the expressivity of contextual terms.
Specifically, nominal sets are useful for giving semantics to open terms \cite{gabbay:nomhss,gabbay:stodfo} and we hope to develop a language in which we can retain the modal type system but relax the condition that $\fa(r)=\varnothing$ in \rulefont{\Box I} in Figure~\ref{fig.modal.types} (much as the contextual system does, but in the `nominal' approach we would not add types to the modality).

The underlying motivation here is that the contextual system is `eager' in accounting for free variables---we need to express all the variables we intend to use in the contextual modal type, by putting their types in the modality. 
We might prefer to program on open syntax in a `lazy' fashion, by stating that the syntax may be open, but not specifying its free variables explicitly in the type.

Note that this is not the same thing as programming freely on open syntax.
Free variables would still be accounted for in the typing context (leading to some form of \emph{dynamic linking} as and when open syntax is unboxed and evaluated; for an example of a $\lambda$-calculus view of dynamic linking, though not meta-programming, see \cite{ancona:caldl}).
So all variables would be eventually accounted for in the typing context, but they would not need to be listed in the type. 

This is another reason for the specific design of our denotional semantics and taking the denotation of $\Box\tya$ to be specifically closed syntax; we hope to directly generalise this using nominal techniques so that $\Box\tya$ can also denote (atoms-)open syntax.
This is future work.

\paragraph{On the precise meaning of Corollary~\ref{corr.more.c}}

Corollary~\ref{corr.more.c} depends on the fact that we admitted no constants of type $(\nat\fto\nat)\fto\Box(\nat\fto\nat)$.
We may be able to admit such a constant, representing a function that takes denotation and associates to it some `dummy syntax' chosen in some fixed but arbitrarly manner.

So Corollary~\ref{corr.more.c} does not (and should not) prove that terms of type $(\nat\fto\nat)\fto\Box(\nat\fto\nat)$ are completely impossible---only that they do not arise from the base system and cannot exist unless we explicitly choose to put them in there.

\paragraph{Technical notes on the jump in complexity from modal to contextual system}

We noted in the introduction that Sections~\ref{sect.box.syntax} and~\ref{sect.contextual.system}, and Sections~\ref{sect.denotations} and~\ref{sect.contextual.models} are parallel developments of the syntax and examples of the modal and contextual systems. 

We briefly survey technical details of how these differences manifest themselves.
\begin{itemize*}
\item
The contextual system enriches the modal system with types in the modality.
The increase in expressivity is exemplified in Subsection~\ref{subsubsect.exp.c}.
\item
In the contextual system and not in the modal system, instantiation of unknowns can trigger an atoms-substitution (see Definition~\ref{defn.sub.c}) leading to a kind of `cascade effect'. 
This turns out to be terminating, well-behaved, and basically harmless---but this has to be verified, and that brings some specific technical material forward in the proofs for the contextual case that is not so prominent in the purely modal case (notably, Lemma~\ref{lemm.sigma.c}).
\item
A clear view of exactly where the extra complexity of the contextual system `lives' in the denotation can be obtained by comparing the denotational semantics of $\Box\tya$ and $[\tya_i]\tya$ in Figures~\ref{fig.denot.types} and~\ref{fig.denot.types.c}.
\end{itemize*}

\subsection*{Related work}

\paragraph{$\Box$ and monads}

Famously, Moggi proposed to model computation using a monad \cite{moggi:notcm}.
Let us write it as $\Diamond\tya$.\footnote{Pfenning and Davies discuss this in \cite[Section~7, page~21]{pfe+dav:mscs01}.}
This type is intuitively populated by `computations of type $\tya$'.
The unit arrow $\tya\fto\Diamond\tya$ takes a value of type $\tya$ and returns the trivial computation that just returns $\tya$.

The difference from the \emph{co}monad of this paper in that our $\Box\tya$ is populated by \emph{closed syntax}, and not by \emph{computation}.

If we have an element of $\nat^\nat$ then it is easy to build a computation that just returns that value; it is however not easy---and may be impossible---to exhibit closed syntax to represent this computation.

We could add a constant to our syntax for each of the uncountably many functions from natural numbers to natural numbers.
This would be mathematically fine---but not particularly implementable.
We do not assume this.

Closed syntax is of course related to computation, and we can make this formal:
Given an element in $\Box(\nat\fto\nat)$ we can map it to a computation, just by executing it.
So intuitively there is an arrow $\Box\tya\to\Diamond\tya$.
In the modal logic tradition this is called axiom \rulefont{D}.

In summary: we propose that the Moggi-style monads corresponds to a modal $\Diamond$, whereas CMTT-style $\Box$ is a modal $\Box$ and corresponds to a comonadic structure. 

See also~\cite{kob:tcs97,bie+pai:sl00,ale+men+pai+rit:csl01}, where the
$\Box$ operator of several constructive variants of S4 (not equivalent to the version we presented here) is modeled as comonads.

\paragraph{Brief survey of applications of $\Box$ calculi}

Logic and denotation, not implementation, are the focus of this paper, but the `$\Box$-calculi' considered in this paper have their motivation in implementation and indeed they were specifically designed to address implementational concerns.
We therefore give a brief survey of how (contextual) modal types have been useful in the more applied end of computer science.

The connection of the modal $\Box$ calculus with partial evaluation and staged computation was noticed by Davies and Pfenning~\cite{dav+pfe:acm01,pfe+dav:mscs01}, and subsequently used as a language for run-time code generation by Wickline et
al.~\cite{wic+lee+pfe:pldi98}. 
The contextual variant of $\Box$ as a
basis for meta-programming and modeling of higher-order abstract
syntax was proposed by Nanevski and
Pfenning~\cite{nan+pfe:jfp05}, and subsequently used to reason
about optimised implementation of higher-order unification in
Twelf~\cite{pie+pfe:cade03}, which could even be scaled to dependent
types~\cite{nan+pfe+pie:tcl08}.  

Recently, the contextual flavor of
the system has been used in meta-programming applications for
reasoning and programming with higher-order abstract syntax by Pientka and collaborators~\cite{pie:popl08,pie+dun:ppdp08,fel+pie:itp10,cav+pie:popl12}.

\paragraph{Relationship between the formulation with meta-variables and labeled natural deductions}

The syntax of terms from Definition~\ref{defn.terms} does not follow instantly from the syntax of types from Definition~\ref{defn.types}; in particular, the use of a two-level syntax (also reminiscent of the two levels of nominal terms \cite{gabbay:nomu-jv}) is a design choice, not an inevitability. 

The usual way to present inference systems based on modal logic is
to have a propositional (or variable) context where each proposition
is labeled by the `world' at which it is
true~\cite{simpson:phd94}. 

When S4 is considered, we take advantage of reflexivity and
transitivity of the Kripke frame to simplify the required information to two kinds of facts:
\begin{enumerate*}
\item
What holds at the current world, but not necessarily in all future worlds.
\item
What holds in the current world and also in all future worlds.
\end{enumerate*} 

By this view, the first kind of fact corresponds to atoms $a$, and the second kind of fact corresponds to unknowns $X$.
So this can be seen as the origin of the two-level structure of our syntax in this paper. 

The interested reader can find the modal (non-contextual) version of our type-system presented using the labeled approach in a paper by Davies and Pfenning~\cite{dav+pfe:acm01}, and each stage of computation is indeed viewed as world in a Kripke frame.

\paragraph{CMTT and nominal terms}

Nominal terms were developed in \cite{gabbay:nomu,gabbay:nomu-jv} and feature a two-level syntax, just like CMTT.
That is made very clear in this paper, where the first author imported the nominal terms terminology of \emph{atoms} and \emph{unknowns}.

The syntax of this paper is not fully nominal---the $[a_i]r$ of the contextual system may look like a nominal abstraction, but there are no suspended permutation $\pi\act X$ (instead, we have types in the modality).
One contribution of this paper is to make formal, by a denotation, the precise status of the two levels of variable in CMTT.

So we can note that the abstraction for atoms is functional abstraction in CMTT whereas the abstraction for atoms in nominal terms is nominal atoms-abstraction;\footnote{In \cite{gabbay:unialt} we translate nominal terms to higher-order terms, and atoms-abstraction gets translated to functional abstraction.  However, this does \emph{not} mean that atoms-abstraction is a `special case' of functional abstraction, any more than translating e.g. Java to machine binary means that method invocation is a special case of logic gates.} unknowns of nominal terms range over elements of nominal sets, whereas unknowns of CMTT range over ordinary sets functionally abstracted over finitely many arguments; the notion of \emph{equivariance} (symmetry up to permuting atoms) characteristic of all nominal techniques is absent in CMTT (the closest we get is a term like $\f{exchange}_{B,C}$ in Subsection~\ref{subsect.modal-style.axioms}); and in contrast the self-reflective character of CMTT is absent from nominal terms and the logics built out of it \cite{gabbay:nomtnl}. 
So in spite of some structural parallels between CMTT and nominal terms in that both are two-level, there are also significant differences.

As noted above, there is a parallel between CMTT and Kripke structures, that is made more explicit in \cite{dav+pfe:acm01}.
A direct connection between nominal terms and Kripke semantics has never been made, but the first author at least has been aware of it as a possibility, where `future worlds' corresponds to `more substitutions arriving'.
Also as discussed above, an obvious next step is to develop a modified modal syntax which takes on board more `nominal' ideas, applied to the modal intuitions which motivate the $\lambda$-calculus of this paper.
This is future work.

\paragraph{The syntax of this paper, and previous work}

The modal and contextual systems which we give semantics to in this
paper, are taken from previous work.  Specifically,
Definition~\ref{defn.terms} corresponds to \cite{pfe+dav:mscs01}, 
Definition~\ref{defn.terms.c} corresponds to \cite{nan+pfe+pie:tcl08},
Figure~\ref{fig.modal.types} corresponds to \cite{pfe+dav:mscs01} and
Figure~\ref{fig.cmtt.types} to \cite{nan+pfe+pie:tcl08}.

We cannot give specific definition references in the citations to \cite{nan+pfe+pie:tcl08} and \cite{pfe+dav:mscs01}, because those papers never give a specific definition of their syntax.  
If they did, then they would correspond as described. 
We do feel that this paper does make some contribution in terms
of presentation, and the exposition and definitions here may be tailored to a slightly different community.

\section*{Acknowledgements}
\noindent This paper was supported by Spanish MICINN Project TIN2010-20639 Paran10; AMAROUT grant PCOFUND-GA-2008-229599; Ramon y Cajal grants RYC-2010-0743 and RYC-2006-002131; and the Leverhulme Trust.

%\bibliographystyle{alpha}
%\bibliography{cmtt,JamieBIB}

\begin{thebibliography}{AMdPR01}

\bibitem[ACU10]{chapman:monnne}
Thorsten Altenkirch, James Chapman, and Tarmo Uustalu.
\newblock Monads need not be endofunctors.
\newblock In {\em Foundations of software science and computation structures,
  13th International Conference ({FOSSACS} 2010)}, volume 6014 of {\em Lecture
  Notes in Computer Science}, pages 297--311. Springer, 2010.

\bibitem[AFZ03]{ancona:caldl}
Davide Ancona, Sonia Fagorzi, and Elena Zucca.
\newblock A calculus for dynamic linking.
\newblock In {\em ICTCS}, pages 284--301, 2003.

\bibitem[AL91]{asperti:cattsi}
Andr\'ea Asperti and Giuseppe Longo.
\newblock {\em Categories, types, and structures: an introduction to category
  theory for the working computer scientist}.
\newblock Foundations of computing. {MIT} Press, 1991.
\newblock Available online from the University of Michigan, digitised November
  2007.

\bibitem[AMdPR01]{ale+men+pai+rit:csl01}
Natasha Alechina, Michael Mendler, Valeria de~Paiva, and Eike Ritter.
\newblock Categorical and {K}ripke semantics for {C}onstructive {S4} modal
  logic.
\newblock In {\em Computer Science Logic, {CSL}'01}, volume 2142 of {\em
  Lecture Notes in Computer Science}, pages 292--307, 2001.

\bibitem[BdP00]{bie+pai:sl00}
Gavin~M. Bierman and Valeria C.~V. de~Paiva.
\newblock On an intuitionistic modal logic.
\newblock {\em Studia Logica}, 65(3):383--416, 2000.

\bibitem[BdRV01]{blackburn:modl}
Patrick Blackburn, Maarten de~Rijke, and Yde Venema.
\newblock {\em Modal Logic}.
\newblock Cambridge University Press, 2001.

\bibitem[CP12]{cav+pie:popl12}
Andrew Cave and Brigitte Pientka.
\newblock Programming with binders and indexed data-types.
\newblock In {\em Proceedings of the 39th {ACM} {SIGPLAN-SIGACT} Symposium on
  {P}rinciples of {P}rogramming {L}anguages ({POPL}'12)}. ACM, 2012.
\newblock accepted.

\bibitem[DP01]{dav+pfe:acm01}
Rowan Davies and Frank Pfenning.
\newblock A modal analysis of staged computation.
\newblock {\em Journal of the {ACM}}, 48(3):555--604, 2001.

\bibitem[FP10]{fel+pie:itp10}
Amy Felty and Brigitte Pientka.
\newblock Reasoning with higher-order abstract syntax and contexts: A
  comparison.
\newblock In {\em Interactive Theorem Proving}, volume 6172 of {\em Lecture
  Notes in Computer Science}, pages 227--242, 2010.

\bibitem[Gab11]{gabbay:stodfo}
Murdoch~J. Gabbay.
\newblock \href{http://www.gabbay.org.uk/papers.html\#stodfo}{Stone duality for
  First-Order Logic: a nominal approach}.
\newblock In {\em Howard Barringer Festschrift}. December 2011.

\bibitem[Gab12]{gabbay:nomtnl}
Murdoch~J. Gabbay.
\newblock \href{http://www.gabbay.org.uk/papers.html\#nomtnl}{Nominal terms and
  nominal logics: from foundations to meta-mathematics}.
\newblock In {\em Handbook of Philosophical Logic}, volume~17. Kluwer, 2012.

\bibitem[GKWZ03]{gabbay:mandml}
Dov~M. Gabbay, Agnes Kurucz, Frank Wolter, and Michael Zakharyaschev.
\newblock {\em Many-dimensional modal logics: theory and applications}, volume
  148 of {\em Studies in Logic and the Foundations of Mathematics}.
\newblock Elsevier, 2003.

\bibitem[GM09]{gabbay:unialt}
Murdoch~J. Gabbay and Dominic~P. Mulligan.
\newblock \href{http://www.gabbay.org.uk/papers/unialt.pdf}{Universal algebra
  over lambda-terms and nominal terms: the connection in logic between nominal
  techniques and higher-order variables}.
\newblock In {\em Proceedings of the 4th International Workshop on Logical
  Frameworks and Meta-Languages ({LFMTP} 2009)}, pages 64--73. ACM, August
  2009.

\bibitem[GM11]{gabbay:nomhss}
Murdoch~J. Gabbay and Dominic Mulligan.
\newblock \href{http://www.gabbay.org.uk/papers.html\#nomhss}{Nominal {H}enkin
  {S}emantics: simply-typed lambda-calculus models in nominal sets}.
\newblock In {\em Proceedings of the 6th International Workshop on Logical
  Frameworks and Meta-Languages ({LFMTP} 2011)}, volume~71 of {\em EPTCS},
  pages 58--75, September 2011.

\bibitem[GP01]{gabbay:newaas-jv}
Murdoch~J. Gabbay and Andrew~M. Pitts.
\newblock \href{http://www.gabbay.org.uk/papers.html\#newaas-jv}{A New Approach
  to Abstract Syntax with Variable Binding}.
\newblock {\em Formal Aspects of Computing}, 13(3--5):341--363, July 2001.

\bibitem[Kob97]{kob:tcs97}
Satoshi Kobayashi.
\newblock Monad as modality.
\newblock {\em Theoretical Computer Science}, 175(1):29--74, 1997.

\bibitem[Mit96]{mitchell:foupl}
John~C. Mitchell.
\newblock {\em Foundations for Programming Languages}.
\newblock {MIT} Press, 1996.

\bibitem[Mog91]{moggi:notcm}
Eugenio Moggi.
\newblock Notions of computation and monads.
\newblock {\em Information and Computation}, 93(1):55--92, 1991.

\bibitem[NP05]{nan+pfe:jfp05}
Aleksandar Nanevski and Frank Pfenning.
\newblock Staged computation with names and necessity.
\newblock {\em Journal of Functional Programming}, 15(6):893--939, 2005.

\bibitem[NPP08]{nan+pfe+pie:tcl08}
Aleksandar Nanevski, Frank Pfenning, and Brigitte Pientka.
\newblock Contextual modal type theory.
\newblock {\em ACM Transactions on Computational Logic}, 9(3):1--49, 2008.

\bibitem[PD01]{pfe+dav:mscs01}
Frank Pfenning and Rowan Davies.
\newblock A judgmental reconstruction of modal logic.
\newblock {\em Mathematical Structures in Computer Science}, 11(4), 2001.

\bibitem[PD08]{pie+dun:ppdp08}
Brigitte Pientka and Joshua Dunfield.
\newblock Programming with proofs and explicit contexts.
\newblock In {\em Proceedings of the 10th International {ACM} {SIGPLAN}
  Symposium on Principles and Practice of Declarative Programming ({PPDP}
  2008)}, pages 163--173, 2008.

\bibitem[Pie08]{pie:popl08}
Brigitte Pientka.
\newblock A type-theoretic foundation for programming with higher-order
  abstract syntax and first-class substitutions.
\newblock In {\em Proceedings of the 35th {ACM} {SIGPLAN-SIGACT} Symposium on
  {P}rinciples of {P}rogramming {L}anguages ({POPL}'08)}, pages 371--382. ACM,
  2008.

\bibitem[PP03]{pie+pfe:cade03}
Brigitte Pientka and Frank Pfennning.
\newblock Optimizing higher-order pattern unification.
\newblock In {\em Proceedings of the International Conference on Automated
  Deduction ({CADE}'03)}, volume 2741 of {\em Lecture Notes in Computer
  Science}, pages 473--487, 2003.

\bibitem[Sim94]{simpson:phd94}
Alex~K. Simpson.
\newblock {\em The Proof Theory and Semantics of Intuitionistic Modal Logic}.
\newblock PhD thesis, University of Edinburgh, 1994.

\bibitem[UPG03]{gabbay:nomu}
Christian Urban, Andrew~M. Pitts, and Murdoch~J. Gabbay.
\newblock \href{http://www.gabbay.org.uk/papers.html\#nomu}{Nominal
  Unification}.
\newblock In {\em CSL}, volume 2803 of {\em Lecture Notes in Computer Science},
  pages 513--527. Springer, December 2003.

\bibitem[UPG04]{gabbay:nomu-jv}
Christian Urban, Andrew~M. Pitts, and Murdoch~J. Gabbay.
\newblock \href{http://www.gabbay.org.uk/papers.html\#nomu-jv}{Nominal
  Unification}.
\newblock {\em Theoretical Computer Science}, 323(1--3):473--497, September
  2004.

\bibitem[WLP98]{wic+lee+pfe:pldi98}
Philip Wickline, Peter Lee, and Frank Pfenning.
\newblock Run-time code generation and {Modal-ML}.
\newblock In {\em Programming Language Design and Implementation ({PLDI}'98)},
  pages 224--235. ACM, 1998.

\end{thebibliography}

\end{document}